\RequirePackage{fix-cm}
\documentclass[smallextended]{svjour3}       
\smartqed  
\usepackage{graphicx, cancel}
%
%

\usepackage[dvips]{geometry}
\usepackage{graphics}
\DeclareGraphicsRule{.pstex}{eps}{*}{}
\usepackage{color, soul}
\usepackage{amsmath, slashed}
\usepackage{amssymb}
\usepackage{rotating}
\usepackage{mathrsfs}
\usepackage{epsfig}
\usepackage{verbatim}
\usepackage[affil-it]{authblk}

\usepackage{mathrsfs}
 \usepackage{xcolor}
 \usepackage[citebordercolor={green}]{hyperref}

%

\newcommand{\bea}{\begin{eqnarray}}
\newcommand{\eea}{\end{eqnarray}}
\newcommand{\beaa}{\begin{eqnarray*}}
\newcommand{\eeaa}{\end{eqnarray*}}
\newcommand{\bsplit}{\begin{split}}

\newcommand{\les}{\lesssim}
\newcommand{\ges}{\gtrsim}
\newcommand{\lot}{\mbox{l.o.t.}}
\newcommand{\dual}{{\,^\star \mkern-2mu}}
\newcommand{\tr}{\mbox{tr}}
\newcommand{\nabb}{\nab\mkern-13mu /\,}

\newcommand{\nab}{\nabla}
\renewcommand{\div}{\mbox{div }}

\newcommand{\divv}{\mbox{div}\mkern-19mu /\,\,\,\,}
\newcommand{\lapp}{\mbox{$\bigtriangleup  \mkern-13mu / \,$}}
\newcommand{\curll}{\mbox{curl}\mkern-19mu /\,\,\,\,}
\newcommand{\pr}{\partial}

\newcommand{\DD}{{\mathcal D}}
\newcommand{\DDs}{ \, \DD \hspace{-2.4pt}\dual    \mkern-16mu /}
\newcommand{\DDd}{ \, \DD \hspace{-2.4pt}    \mkern-8mu /}

\renewcommand{\a}{\alpha}
\renewcommand{\b}{\beta}
\newcommand{\de}{\delta}
\newcommand{\ep}{\epsilon}

\newcommand{\Si}{\Sigma}

\renewcommand{\th}{\theta}

\newcommand{\ka}{\kappa}

\newcommand{\Up}{\Upsilon}

\newcommand{\Lb}{{\underline{L}}}

\newcommand{\chib}{\underline{\chi}}

\newcommand{\kab}{\underline{\kappa}}
\newcommand{\chih}{\widehat{\chi}}

\newcommand{\bF}{\,^{(F)} \hspace{-2.2pt}\b}

\newcommand{\rhoF}{\,^{(F)} \hspace{-2.2pt}\rho}

\newcommand{\EE}{{\mathcal E}}
\newcommand{\HH}{{\mathcal H}}

\newcommand{\LL}{{\mathcal L}}
\newcommand{\MM}{{\mathcal M}}
\newcommand{\PP}{{\mathcal P}}

\newcommand{\TT}{{\mathcal T}}

\newcommand{\F}{{\bf F}}

\newcommand{\g}{{\bf g}}
\newcommand{\pf}{\frak{p}}
\newcommand{\qf}{\frak{q}}
\newcommand{\ff}{\frak{f}}

\newcommand{\piT}{{\,^{(T)} \pi }}

\newcommand{\piX}{\,^{(X)}\pi}
\newcommand{\piY}{\,^{(Y)}\pi}
\newcommand{\piZ}{  \,^{(Z)}\pi}

\renewcommand{\c}{\cdot}

\newcommand{\Mor}{\mbox{Mor}}

%
%
\begin{document}

\title{The linear stability of Reissner-Nordstr{\"o}m spacetime: \\
the full subextremal range $|Q|<M$
}

\titlerunning{The linear stability of Reissner-Nordstr{\"o}m spacetime}        

\author{Elena Giorgi       
}


\institute{
              Department of Mathematics, Princeton University \\
              \email{egiorgi@princeton.edu}           
}

\date{Received: date / Accepted: date}

\maketitle

\begin{abstract}
We prove the linear stability of subextremal Reissner-Nordstr\"om spacetimes as solutions to the Einstein--Maxwell equation. We make use of a novel representation of gauge-invariant quantities which satisfy a symmetric system of coupled wave equations. This system is composed of two of the three equations derived in our previous works \cite{Giorgi4}, \cite{Giorgi5}, where the estimates required arbitrary smallness of the charge. Here, the estimates are obtained by defining a combined energy-momentum tensor for the system in terms of the symmetric structure of the right hand sides of the equations. We obtain boundedness of the energy, Morawetz estimates and decay for the full subextremal range $|Q|<M$, completely in physical space. Such decay estimates, together with the estimates for the gauge-dependent quantities of the perturbations obtained in \cite{Giorgi6}, settle the problem of linear stability to gravitational and electromagnetic perturbations of Reissner-Nordstr\"om solution in the full subextremal range $|Q|< M$. 
\keywords{Black hole stability  \and Reissner--Nordstr{\"o}m spacetime \and Regge--Wheeler equation}
\end{abstract}

\tableofcontents

   \section{Introduction}
  
  The problem of stability of black holes  as solutions to the Einstein equation occupies a central stage  in mathematical General Relativity \cite{Ch}.
The resolution of this problem consists in understanding the long-time dynamics of perturbations of known stationary solutions to the Einstein equation.

  There are few examples of exact solutions to the Einstein equation, the most fundamental of which is the Kerr spacetime \cite{kerr}, axially symmetric and stationary solution to the Einstein vacuum equation
  \beaa
  \operatorname{Ric}(g)=0,
  \eeaa
  where $g$ is a Lorentzian metric in $3+1$-dimensions. 
  A particular case of the Kerr spacetime is the Schwarzschild solution \cite{Schwarz}, which is spherically symmetric and static, given in coordinates $(t, r, \th, \phi)$ by 
  \beaa
  g_M&=& -\left(1-\frac{2M}{r} \right) dt^2+\left(1-\frac{2M}{r}  \right)^{-1} dr^2+r^2 \left(d\th^2+\sin^2\th d\phi^2\right).
  \eeaa
  The parameter $M$ can be interpreted as the mass of the black hole. 
  
In the case of the Einstein equation coupled with electromagnetic fields, the Lorentzian metric $g$ satisfies the Einstein--Maxwell equation \cite{Chandra}, i.e.
  \bea \label{Einstein-Maxwell-eq}
\operatorname{Ric}(g)_{\mu\nu}&=&2 F_{\mu \lambda} {F_\nu}^{\lambda}- \frac 1 2 g_{\mu\nu} F^{\alpha\beta} F_{\alpha\beta}, \label{Einstein-1}\\
D_{[\alpha} F_{\beta\gamma]}&=&0, \qquad D^\alpha F_{\alpha\beta}=0, \label{Maxwell}
\eea
where $F$ is a two-form verifying the Maxwell equations \eqref{Maxwell}, and $D$ is the Levi-Civita connection of $g$. In this context, the fundamental stationary and axisymmetric solution is the Kerr--Newman spacetime \cite{Newman}, and its spherically symmetric and static case is given by the Reissner--Nordstr\"om metric \cite{Nordstrom}, given in coordinates by 
  \bea\label{RN}
  g_{M, Q}&=& -\left(1-\frac{2M}{r} +\frac{Q^2}{r^2} \right) dt^2+\left(1-\frac{2M}{r}+\frac{Q^2}{r^2}  \right)^{-1} dr^2+r^2 \left(d\th^2+\sin^2\th d\phi^2\right).
  \eea
  The parameter $Q$ can be interpreted as the charge of the black hole for $|Q|<M$, which is referred to as the subextremal range. The case $|Q|=M$ is called the extremal case, while $|Q|>M$ corresponds to a spacetime with naked singularities. Observe that for $Q=0$, the Reissner-Nordstr\"om metric \eqref{RN} reduces to the Schwarzschild one. 
  
  The problem of stability of black hole solutions can be roughly divided into three formulations, each of increasing difficulty, from the formal mode analysis of the linearized equations to the fully non-linear perturbations, passing through the problem of linear stability.
  
  The mode stability consists in formally separating the solutions to the linearized Einstein equation into modes at fixed frequencies, and aims at proving the lack of exponentially growing modes for all metric or curvature components. 
 The study of the mode stability of the Schwarzschild solution was initiated by Regge, Wheeler \cite{Regge-Wheeler} and Zerilli \cite{Zerilli} in metric perturbations (in which the metric of the solution is perturbed), and by Barden and Press \cite{Bardeen-Press} in Newman--Penrose formalism (in which the curvature of the solution, through the Newman--Penrose scalars, is perturbed). See also Dotti \cite{Dotti}.  Chandrasekhar \cite{Chandra} identified a transformation theory in mode decomposition which connects the two approaches in Schwarzschild, which is now referred to as Chandrasekhar transformation. Teukolsky \cite{Teukolsky} extended the equations in Newman-Penrose formalism to the Kerr spacetime, and Whiting \cite{Whiting} proved the mode stability for Kerr spacetime.   The study of mode stability of the Reissner-Nordstr\"om spacetime has been initiated by Moncrief  \cite{Moncrief1}, \cite{Moncrief2}, \cite{Moncrief3}, who obtained the wave equations governing the perturbations in metric perturbations. Chandrasekhar \cite{Chandra1}, \cite{Chandra-RN2} completed the study of the fixed mode perturbations in Newman--Penrose formalism. See also Fern\'andez T\'io--Dotti \cite{Dotti2}.

This weak version of stability is however not sufficient to prove boundedness and decay of the solutions even to the linearized equations.   Indeed, the lack of exponentially growing modes is still consistent with the statement that general perturbations with finite initial energy grow unboundedly in time, because results at the level of individual modes do not imply them for the superposition of infinitely many modes \cite{lectures}.

The resolution of the problem of linear stability consists in proving boundedness and decay for the solutions to the linearized Einstein equation, which does not rely on decomposition in modes but rather on a physical space analysis. 
   The stability of the Schwarzschild solution to the linearized Einstein vacuum equation has been obtained by Dafermos--Holzegel--Rodnianski \cite{DHR} by using curvature perturbations and analysis of the Teukolsky equation. The authors introduced a physical-space version of the Chandrasekhar transformation and crucially used the extensive progress on boundedness and decay results for wave equations on black hole backgrounds (for instance \cite{redshift}, \cite{rp}, \cite{lectures}).  Other proofs have followed: see  Hung--Keller--Wang  \cite{mu-tao} for the proof of the linear stability using metric perturbations, through the analysis of Regge--Wheeler and Zerilli equations. See also Hung \cite{pei-ken}, \cite{pei-ken-2} for a proof in the harmonic gauge  and Johnson \cite{Johnson2} in the generalized harmonic gauge.    
 There have been numerous recent results for the linear stability of Kerr spacetime.  Quantitative decay estimates for the Teukolsky equation in slowly rotating Kerr spacetime have been obtained by Ma \cite{ma2} and Dafermos--Holzegel--Rodnianski \cite{TeukolskyDHR}.  Andersson--B\"ackdahl--Blue--Ma \cite{Kerr-lin1} used the outgoing gauge and H\"afner--Hintz--Vasy \cite{Kerr-lin2} used the wave gauge to prove linear stability of Kerr with small angular momentum.    Decay for solutions to the Maxwell equations in Schwarzschild spacetime have been obtained by Blue \cite{BlueMax} and Pasqualotto \cite{Federico}.

 The ultimate goal of the problem of stability is the study of the dynamics of perturbations of solutions to the fully non-linear Einstein equation. The fully non-linear stability of the Kerr(-Newman) family consists in showing that a small perturbation of a Kerr(-Newman) spacetime which is a solution to the non-linear Einstein(-Maxwell) equation converges to another member of the Kerr(-Newman) family. 
 The only proof of non-linear stability with no symmetry assumptions which is known at this stage, in the asymptotically flat regime, is the global non-linear stability of Minkowski spacetime by Christodolou-Klainerman \cite{Ch-Kl}, which was followed by proofs obtained through different approaches (see \cite{Zipser}, \cite{Lind-Rod}, \cite{Hintz-Vasy-M}).     
See also Zipser \cite{Zipser} for the proof of non-linear stability of Minkowski as solution to the Einstein--Maxwell equation. 
   The first proof of non-linear stability of the Schwarzschild spacetime under the class of symmetry of axially symmetric polarized perturbations was given by Klainerman--Szeftel \cite{stabilitySchwarzschild}. 
 In the presence of a positive cosmological constant, the Kerr--de Sitter and the Kerr--Newman--de Sitter family  with small angular momentum have been proved to be non-linearly stable by Hintz--Vasy \cite{Hintz-Vasy} and by Hintz \cite{Hintz-M} respectively.

  In this paper we solve the problem of linear stability of the Reissner--Nordstr\"om spacetime \eqref{RN} as solution to the linearized Einstein--Maxwell equations \eqref{Einstein-Maxwell-eq} and \eqref{Maxwell}, in the full subextremal range $|Q| <M$. More precisely, we prove boundedness and decay statements for solutions to the linearization of the Einstein--Maxwell equation around a Reissner-Nordstr\"om solution, and the analysis is carried out completely in physical space. Here is a rough version of our main theorem. 
  
    \begin{theorem}\label{rough-version-a}[Linear stability of Reissner-Nordstr\"om spacetime to gravitational and electromagnetic perturbations for $|Q|<M$ (Rough version)]  All solutions to the linearized Einstein--Maxwell equations around a Reissner--Nordstr{\"o}m solution $g_{M, Q}$ for $|Q|<M$ in a certain choice of gauge\footnote{The proof is obtained in Bondi gauge, see \cite{Giorgi6}.}  arising from regular asymptotically flat initial data  {\bf remain uniformly bounded} on the exterior and {\bf decay} to a linearized Kerr--Newman solution.
  \end{theorem}

  Theorem \ref{rough-version-a} represents the final step of a program that was initiated by the author in \cite{Giorgi4}, \cite{Giorgi5} and \cite{Giorgi6} to prove the linear stability of Reissner-Nordstr\"om spacetime to gravitational and electromagnetic perturbations in the full subextremal range. More precisely, the series of works \cite{Giorgi4}, \cite{Giorgi5} and \cite{Giorgi6} amounted to the proof of the linear stability in the case of arbitrarily small charge  $|Q| \ll M$. The main result of this paper is to extend the control of all components of the perturbation to the subextremal range $|Q| < M$.

 \begin{remark}\label{remark-subextremal}
   The subextremal range $|Q|<M$ for which the linear stability holds is expected to be optimal. The estimates as hereby derived make use of the redshift vector field at the horizon, and therefore they are degenerate through the extremal case limit $|Q|=M$. In particular, the same decay estimates obtained in \cite{Giorgi6} are not expected to hold in the extremal case due to the Aretakis instability \cite{extremal-1}, \cite{extremal-2}. Such instability causes the growth of transversal derivatives along the horizon of solutions to the non-linear wave equation \cite{extremal-3}, and it is expected that this phenomenon persists in the linearized gravity. Nevertheless, some weaker version of stability, which takes into account such degeneracy of transversal derivative along the event horizon, could hold in the extremal case, where stability and instability phenomena concur. 
   \end{remark}

     In what follows, we recall the main ideas from the series of works \cite{Giorgi4}, \cite{Giorgi5} and \cite{Giorgi6}. They consist in two parts: 
     \begin{itemize}
     \item     The main system of three wave equations governing the perturbations are derived and analyzed for arbitrarily small charge  in \cite{Giorgi4} and \cite{Giorgi5}. More precisely, two wave equations of spin $\pm2$ (governing the gravitational perturbations) are obtained in \cite{Giorgi4}, and one wave equation of spin $\pm1$ (governing the electromagnetic radiations) is obtained in \cite{Giorgi5}. Those are wave equations for quantities which are invariant\footnote{In a linearization of size $\epsilon$, a quantity is called gauge invariant if it changes quadratically, i.e. by terms of the size $\epsilon^2$, when coordinate transformations of size $\epsilon$ are applied. See \cite{Giorgi6}.  } to coordinate transformations at linear level, which we call \textit{gauge-invariant quantities}. 
     \item The above analysis is used in \cite{Giorgi6} to obtain control for all the components of the perturbations, upon a choice of gauge. In particular, the estimates for the gauge-invariant quantities are used to obtain estimates for the gauge-dependent ones. 
     \end{itemize}
     
In the present paper, we will make use in a fundamental way of the system of equations obtained in \cite{Giorgi4} and \cite{Giorgi5}, and derive from them a new system to obtain control for the gauge-invariant quantities in the full subextremal range. The result in \cite{Giorgi6} will then be applied straightforwardly to obtain control for the gauge-dependent quantities from the new estimates obtained here for the gauge-invariant ones. 

We now recall the main results in  \cite{Giorgi4}, \cite{Giorgi5} and \cite{Giorgi6} which are particularly relevant for this work.

\subsection{The spin $\pm2$ system of equations in \cite{Giorgi4}}

 Suppose that $(\MM, g, F)$ is a solution to the Einstein--Maxwell equation such that the manifold $\MM$ can be foliated by $2$-spheres $S$. In \cite{Giorgi4}, we defined the symmetric traceless $2$-covariant $S$-tensors $\a$ and $\ff$ defined relative to a null frame\footnote{A null frame $\{ e_3, e_4, e_A \}_{A=1,2}$ is such that $g\left(e_3,e_3\right) = 0$,  $g\left(e_4,e_4 \right) = 0$, $ g\left(e_3,e_4\right) = -2$, and $e_A$ are orthogonal to $e_3$ and $e_4$. } $\{ e_3, e_4, e_A \}_{A=1,2}$ as 
\beaa
\a_{AB}&=& W(e_4, e_A, e_4, e_B), \qquad \ff_{AB}= \DDs_2 \bF_{AB} + \rhoF \chih_{AB}
\eeaa
where $W$ is the Weyl curvature, $\DDs_2\bF_{AB}=- D_{(A}\bF_{B)}+ \frac 1 2 g_{AB} \divv \bF$ with $D$ the Levi-Civita connection of $g$, $\bF_A=F(e_A, e_4)$, $\rhoF=\frac 1 2 F(e_3, e_4)$ and $\chih$ is the traceless part of the $S$-tensor $\chi_{AB}=g(D_A e_4, e_B)$. 

The 2-tensors $\a$ and $\ff$ are gauge-invariant, and satisfy a coupled system of Teukolsky-type equations of spin\footnote{The spin $\pm2$ refers to 2-tensors on the sphere. } $\pm2$ \cite{Giorgi4}. The estimates for the Teukolsky equations cannot be obtained directly, but rather through a Chandrasekhar transformation to obtain Regge--Wheeler-type equations. 
We defined the derived quantities $\qf$ and $\qf^\F$ \cite{Giorgi4}
\beaa
\qf&=\frac{1}{\kab}\nabb_3\left(\frac{r}{\kab}\nabb_3(r^3 \kab^2 \a)\right), \qquad \qf^\F = \frac{1}{\kab}\nabb_3(r^3 \kab \ \ff)
\eeaa
where $\kab:=\tr\chib$ is the trace of the second null fundamental form and $\nabb_3$ is the projection of the sphere of the covariant derivative along the incoming null direction $D_{e_3}$.

 The gauge-invariant 2-tensors $\qf$ and $\qf^\F$ satisfy a coupled system of linear wave equations of spin $\pm2$ \cite{Giorgi4} which can be schematically written as  
 \bea
 \Box_{g_{M, Q}}\qf+\tilde{V}_1(r)\ \qf&=&Q \cdot  \Big( b_1(r) \lapp_2 \qf^\F+b_2(r)\pr_r\qf^\F+ b_3(r) \qf^\F+\lot\Big) \label{first}\\
\Box_{g_{M, Q}} \qf^\F+\tilde{V}_2(r) \ \qf^\F &=&Q \cdot  \left(c_1(r) \qf +\lot \right) \label{second}
 \eea
 where $\Box_{g_{M, Q}}$ is the d'Alembertian of the Reissner--Nordstr\"om metric $g_{M, Q}$ applied to 2-tensors, $b_i$, $c_i$, $\tilde{V}_i$ are smooth functions of an area radius function $r$, $\lapp_2$ denotes the Laplacian operator on 2-tensors on the sphere and $\lot$ denotes lower order terms (with respect to differentiability) for $\qf$ and $\qf^\F$.

Estimates for this system are obtained in \cite{Giorgi4} in the case of $|Q| \ll M$ by interpreting the right hand sides of \eqref{first} and \eqref{second} as a perturbation of zero. A careful analysis has to be done at the trapping region in order to absorb the spacetime integrals obtained from the right hand side, but the arbitrary smallness of the charge allows to absorb them into the bulk energies of the left hand side of the equations. Through transport estimates, one can then obtain control for the quantities $\ff$ and $\a$. We refer to \cite{Giorgi4} for more details.

Observe that the first of the equations, i.e. equation \eqref{first}, reduces to the Regge--Wheeler equation used in \cite{DHR} in the case of Schwarzschild (with trivial right hand side).

\subsection{The spin $\pm1$ equation in \cite{Giorgi5}}
 
 In \cite{Giorgi5}, we defined the 1-covariant $S$-tensor $\tilde{\b}$ relative to a null frame $\{ e_3, e_4, e_A \}_{A=1,2}$ as
 \beaa
\tilde{\b}_A:= 2\rhoF \b_A-3\rho \bF_A
\eeaa
where $\bF_A=F(e_A, e_4)$, $\rhoF=\frac 1 2 F(e_3, e_4)$, $\b_A=\frac 1 2 W(e_A, e_4, e_3, e_4)$,  $\rho=\frac 1 4 W(e_3, e_4, e_3, e_4)$. 

The 1-tensor $\tilde{\b}$ is a mixed curvature-electromagnetic component which is gauge-invariant and satisfies a Teukolsky-type equation of spin\footnote{The spin $\pm1$ refers to 1-tensors on the sphere. }  $\pm1$ \cite{Giorgi5}. Also in this case, to obtain the estimates for this equation a Chandrasekhar transformation is applied. We defined the derived quantity $\pf$ \cite{Giorgi5}
 \beaa
\pf&=\frac{1}{\kab} \nabb_3(r^5\kab \ \tilde{\b})
\eeaa
which is shown to satisfy a linear wave equation of spin $\pm1$ \cite{Giorgi5} which can be schematically written as
 \bea
 \Box_{g_{M, Q}}\pf+V_1(r) \ \pf&=&Q \cdot a_1(r) \divv\qf^\F
 \label{third}
 \eea
where $a_1$ and $V_1$ are smooth functions $r$ and $\divv$ is the divergence of a symmetric traceless $2$-tensor on the sphere.  Observe that equation \eqref{third} is coupled to equation \eqref{second} through the presence of $\qf^\F$. 
 
 Estimates for this equation  are obtained in \cite{Giorgi5} in the case of $|Q| \ll M$ by using the control of $\qf^\F$ previously obtained in \cite{Giorgi4}. By decomposing the 1-tensors $\pf$ and $\divv \qf^\F$ in spherical harmonics and projecting equation \eqref{third} to the $\ell=1$ harmonics, the right hand side vanishes, and the equation decouples to a single wave equation for the projection of $\pf$ to the $\ell=1$ mode. Standard techniques for decay of wave equations on black hole backgrounds can then be applied to control $\pf_{\ell=1}$, and by transport estimates $\tilde{\b}_{\ell=1}$. The higher spherical harmonics of $\tilde{\beta}$ are controlled by using a relation between the three quantities of the schematic form \cite{Giorgi5}:
 \bea\label{relation-bb}
Q \c \nabb_3\a &=& d_1(r) \ff+d_2(r) \DDs_2 \tilde{\b}
 \eea
 where $d_i$ are smooth functions of $r$. We refer to \cite{Giorgi5} for more details.

 \subsection{The proof of linear stability for small charge in \cite{Giorgi6}}\label{section-lin-stab}

 The conclusions of \cite{Giorgi4} and \cite{Giorgi5} are the pointwise estimates for the gauge-invariant quantities $\qf$, $\qf^\F$ and $\pf$ for $|Q|\ll M$, and those are the starting point of \cite{Giorgi6}, where such estimates are used to obtain control for \textit{all} the remaining components of the perturbation.  Since the remaining components are gauge-dependent, a careful choice of gauge is needed to show that $\qf$, $\qf^\F$ and $\pf$ control the components of the perturbation, and that in addition their decay is optimal and consistent with non-linear applications.

 In \cite{Giorgi6}, we achieve such a proof with the choice of  outgoing null geodesic, or Bondi, gauge. In this gauge, we made use of residual gauge freedom to define normalizations of scalar functions which allow to obtain integrable transport estimates. We obtain a hierarchy of transport estimates with right hand sides in terms of the known $\qf$, $\qf^\F$ and $\pf$. By integrating along null hypersurfaces, pointwise estimates for all the remaining components can be obtained from the estimates previously obtained for $\qf$, $\qf^\F$ and $\pf$ \cite{Giorgi6}.

 In particular, observe that the estimates for the gauge-dependent quantities in \cite{Giorgi6} do not make use of the smallness of the charge \textit{once the gauge-invariant quantities are controlled}. Since the estimates for $\qf$, $\qf^\F$ and $\pf$ in \cite{Giorgi4} and \cite{Giorgi5} are only valid for $|Q| \ll M$, the final proof of the linear stability of Reissner-Nordstr\"om in \cite{Giorgi6} only holds for arbitrarily small charge. We refer to \cite{Giorgi6} for more details.

   We stress here that, if one were able to extend the pointwise estimates for $\qf$, $\qf^\F$ and $\pf$ to the full subextremal range $|Q|<M$, it would be straightforward to apply the proof of linear stability in \cite{Giorgi6}, which does not use smallness of the charge, to the full subextremal range. More precisely, the results on boundedness and decay for the gauge-invariant quantities in Section 8.5 of \cite{Giorgi6} could be upgraded to hold for $|Q|<M$, and therefore the subsequent control on the gauge-dependent quantities in the following sections of \cite{Giorgi6} would also hold in the full subextremal range.

 \subsection{The mixed spin $\pm1$ and spin $\pm2$ system of equations}
 
 We outline here the main ideas which allow us to extend the result of linear stability of Reissner-Nordstr\"om spacetime from very small charge $|Q| \ll M$ (\cite{Giorgi4}, \cite{Giorgi5}, \cite{Giorgi6}) to the full subextremal range $|Q| < M$. The fundamental step is to introduce a system, which we denote \textit{mixed spin $\pm1$ and spin $\pm2$ Regge--Wheeler system}, which governs the gravitational and electromagnetic perturbations of the Reissner-Nordstr\"om solution and has a symmetric structure, which is favorable in the derivation of the estimates.

 We briefly explain how such a system is obtained from the previously mentioned equations appeared in \cite{Giorgi4} and \cite{Giorgi5}.

 Recall the relation \eqref{relation-bb} between $\nabb_3 \a$, $\ff$ and $ \tilde{\b}$. By taking one derivative in the $\nabb_3$ direction, one derives a relation between $\qf$, $\qf^\F$ and $\ff$ of the schematic form (see \eqref{relations-qf-qfF-pf} for the exact expression):
  \bea\label{relation}
Q \c \qf=d_1(r)\qf^\F+d_2(r)\DDs_2 \pf +\lot
 \eea
  Since equations \eqref{first}, \eqref{second}, \eqref{third} for $\qf$, $\qf^\F$ and $\pf$ are three wave equations for three quantities which are related through the identity \eqref{relation}, it is clear that the above system of three equations is equivalent to a system of two equations, which is to say that one of the equations is redundant. We therefore look for a system of two equations which is equivalent to the system of three equations \eqref{first}, \eqref{second}, \eqref{third}. 
 
 Since $\qf$ and $\qf^\F$ are $2$-tensors and $\pf$ is a $1$-tensor on the sphere, neglecting equation \eqref{third} for $\pf$ would cause the absence of control of the projection to the $\ell=1$ spherical mode of the perturbations\footnote{This was basically the approach of our derivation of the estimates in \cite{Giorgi4}.}. For this reason, we decide to neglect one of the first two equations, more precisely \eqref{first}, the equation for $\qf$, which has the most intricate right hand side. 
 
We then substitute $\qf$ through the relation \eqref{relation} into the wave equation \eqref{second}, and we obtain schematically
 \beaa
 \Box_{g_{M, Q}} \qf^\F+\tilde{V}_2(r) \ \qf^\F &=& c_1(r) \left(Q \cdot \qf +\lot \right) \\
 &=& c_1(r) \left(d_1(r)\qf^\F+d_2(r)\DDs_2 \pf \right) 
 \eeaa
 where the lower order terms, denoted $\lot$ cancel out in the above substitution (see Section \ref{diag-sec} for the precise derivation). One then obtains
 \bea\label{fourth}
 \Box_{g_{M, Q}} \qf^\F+V_2(r) \ \qf^\F  &=& a_2(r) \DDs_2 \pf 
  \eea
  for a new potential $V_2$ and a smooth function $a_2$. 
 Observe that equation \eqref{fourth} is now coupled to equation \eqref{third}. By combining the above wave equations of spin $\pm 2$ and spin $\pm 1$ we obtain a system of two coupled linear wave equations of the following schematic form:
 \beaa
 \begin{cases}
 & \Box_{g_{M, Q}}\pf+V_1(r) \ \pf=Q \cdot a_1(r) \divv\qf^\F\\ 
 &\Box_{g_{M, Q}} \qf^\F+V_2(r) \ \qf^\F  = a_2(r) \DDs_2 \pf
  \end{cases}
 \eeaa
 We call the above system the \textit{mixed spin $\pm1$ and spin $\pm2$ Regge--Wheeler system}. Observe that since $\qf$ is related to $\pf$ and $\qf^\F$ through the relation \eqref{relation}, the above system is equivalent\footnote{It is interesting to observe that the one equation used in \cite{DHR} to prove the linear stability of Schwarzschild can be neglected in Reissner--Nordstr\"om in favor of the two equations above (which have no correspondence in the gravitational perturbations of Schwarzschild).} to the system of three equations obtained in \cite{Giorgi4} and \cite{Giorgi5}.  The two quantities $\qf^\F$ and $\pf$ therefore play the role of gravitational and electromagnetic radiation respectively for perturbations of Reissner--Nordstr\"om spacetime.

 The fundamental advantage of the derived system compared to the previous one is in its symmetry: the operators $\DDs_2$ and $\divv$ appearing on the right hand sides are adjoint operators on the sphere \cite{DHR}. 
  Such symmetry is used here to define a combined energy-momentum tensor which allows for a cancellation of the highest order terms, without recurring to smallness of the charge. We are therefore able to deduce boundedness of the energy, Morawetz and $r^p$-estimates in the full subextremal range $|Q| <M$. This is in contrast with the system analyzed in \cite{Giorgi4}, which has non symmetric right hand sides, and for which the analysis can be obtained for very small $Q$ only.

\begin{remark}
 A similar structure in the coupling terms of a system of two wave equations has been found by Hung \cite{pei-ken}, for odd perturbations of linearized gravity of Schwarzschild in harmonic gauge. In \cite{pei-ken}, two metric components, denoted $H_1$ and $H_2$, satisfy a system of wave equations which are coupled through adjoint operators on the sphere, similarly to our mixed spin $\pm1$ and spin $\pm2$ Regge--Wheeler system. Hung obtains estimates for the system through a novel definition of energy-momentum tensor which makes use of the symmetric right hand side. We take a similar approach through the definition of a combined energy-momentum tensor as explained below. 
\end{remark}

 \subsection{The combined energy-momentum tensor}
 
 We now give a brief summary of the proof of boundedness and decay statements for the mixed spin $\pm1$ and spin $\pm2$ Regge--Wheeler system.

 We define a combined energy-momentum tensor for the system, which takes into account both equations and their structure. Such combined energy-momentum tensor $\TT_{\mu\nu}[\qf^\F, \pf]$ is tailored on the specific structure of right hand side of the system. More precisely, it consists of the sum of the energy-momentum tensor associated to each equation, plus a mixed term defined in terms of the right hand side. Schematically:
 \beaa
 \TT_{\mu\nu}[\qf^\F, \pf]&:=& \TT_{\mu\nu}[\qf^\F]+\TT_{\mu\nu}[\pf] -  Q \c a(r)   \left(\DDs_2 \pf \c \qf^\F\right) g_{\mu\nu}
 \eeaa
 where $\TT_{\mu\nu}[\qf^\F]$ and $\TT_{\mu\nu}[\pf]$ are the standard energy-momentum tensor associated to the wave equations for $\qf^\F$ and $\pf$ respectively. See Definition \ref{def-str} for the exact expression. 
 
 Such definition of $\TT_{\mu\nu}[\qf^\F, \pf]$ is motivated by the following property: when applied with multiplier $X=\partial_t$, the associated current $\PP^{X}_{\mu}=\TT_{\mu\nu}[\qf^\F, \pf]X^\nu$ is divergence free. In particular, the additional term $- Q \c a(r) \left(\DDs_2 \pf \c \qf^\F\right) g_{\mu\nu}$ in the definition of the combined energy-momentum tensor is precisely the one needed to obtain cancellation of the divergence. By applying the divergence theorem to a causal domain, one only needs to prove the positivity of the modified boundary terms to obtain boundedness of the energy, which can be obtained in the full subextremal range $|Q|<M$. This is done in Section \ref{sec-en}. 
  
 The derivation of Morawetz estimates is more subtle since the divergence of the current associated to $Y=f(r) \partial_{r}$ does not vanish, but has to be proved to be positive definite.  Because of the mixed term in the definition of $\TT_{\mu\nu}[\qf^\F, \pf]$, we obtain a spacetime integral containing terms of the schematic form
 \beaa
 c_1(r) |\qf^\F|^2+ c_2(r) |\pf|^2 - c_3(r) \left(\qf^\F \c \pf\right)
 \eeaa
 which has to be proved to be positive definite for a well-chosen function $f(r)$. The negativity of the discriminant of the above quadratic form ($D=c_3(r)^2-4c_1(r)c_2(r)$), together with the positivity of the coefficients $c_1(r)$ and $c_2(r)$, is used to conclude that a spacetime integral of the above schematic form is positive definite. This is done in Section \ref{sec-mo}. 
 
 Finally, the $r$-weights appearing on the right hand side of the equations in the mixed spin $\pm1$ and spin $\pm 2$ Regge--Wheeler system are sufficiently good so that the derivation of the $r^p$-hierarchy of Dafermos--Rodnianski for the system is identical to the standard wave equation.  This is done in Section \ref{rp-sec}. 
 
 A rough version of the result is as follows. For the precise version, see Theorem \ref{main-theorem}.
 
 \begin{theorem}\label{rough-version-b}[Rough version] Solutions to the mixed spin $\pm1$ and spin $\pm2$ Regge--Wheeler system on Reissner-Nordstr\"om spacetime with $|Q|<M$ arising from initial data which is prescribed on a Cauchy hypersurface $\Sigma_0$ satisfy statements of energy boundedness, integrated local energy decay, and a hierarchy of $r$-weighted energy estimates.
  \end{theorem}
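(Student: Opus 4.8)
The plan is to treat the coupled system by the vector-field (energy-momentum) method applied to a \emph{combined, suitably weighted} current for the pair $(\pf,\qf^\F)$, rather than to each field separately, so that the coupling terms are controlled through the adjoint structure of $\DDs_2$ and $\divv$ instead of through smallness of $Q$. For a single equation $\Box_{g_{M,Q}}\psi + V\psi = N$ one forms the energy-momentum tensor $\mathbb{T}_{\mu\nu}[\psi]$ and, for a multiplier $X$, the current $\mathbb{P}^X_\mu = \mathbb{T}_{\mu\nu}[\psi]X^\nu$, whose divergence yields a bulk term plus $N\cdot X(\psi)$. I would instead work with $\mathbb{P}^X_\mu = c_1\,\mathbb{T}_{\mu\nu}[\pf]X^\nu + c_2\,\mathbb{T}_{\mu\nu}[\qf^\F]X^\nu$ for weights $c_1,c_2>0$, so that the inhomogeneous contribution is $c_1 Q a_1\langle\divv\qf^\F, X(\pf)\rangle + c_2 Q a_2\langle\DDs_2\pf, X(\qf^\F)\rangle$. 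When $X$ commutes with the angular operators --- as $T=\partial_t$ does exactly, and $X=f(r)\prr$ does up to controllable errors --- the sphere integration by parts $\langle\divv\qf^\F,\,\cdot\,\rangle=-\langle\qf^\F,\DDs_2\,\cdot\,\rangle$ lets these two terms recombine: choosing the weights so that $c_1 a_1=-c_2 a_2$ turns the $T$-coupling into a spacetime total derivative, hence a boundary term, and renders the Morawetz coupling a quadratic form absorbable into the principal bulk. This is precisely the structural feature that removes the requirement $|Q|\ll M$.

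Concretely I would then run the standard hierarchy for the combined current. First, \emph{boundedness}: apply the Killing field $T=\partial_t$ together with the red-shift vector field of Dafermos--Rodnianski near the nondegenerate horizon $r=r_+$ --- available precisely because the surface gravity is positive for $|Q|<M$ --- to obtain a uniformly bounded nondegenerate energy for the pair, once the recombined coupling is identified as boundary terms. Second, \emph{integrated local energy decay}: construct a Morawetz multiplier $X=f(r)\prr$ with an accompanying function multiplier for the zeroth-order and potential terms, engineered so that the combined bulk is a positive-definite quadratic form in $(\prr\pf,\nabb\pf,\pf)$ and $(\prr\qf^\F,\nabb\qf^\F,\qf^\F)$, degenerate only at the photon sphere, with the coupling cross-terms absorbed via the adjoint identity above rather than by smallness. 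Since $\qf^\F$ is supported on $\ell\geq2$ while $\pf$ carries an additional $\ell=1$ piece annihilated by $\DDs_2$, I would treat the $\ell=1$ projection of $\pf$ as a decoupled Regge--Wheeler equation, where there is no trapping obstruction, and reserve the combined argument for $\ell\geq2$. Third, to recover the derivative loss at trapping I would commute with $T$ and use higher-order energies, and finally run the $r^p$-weighted estimates of Dafermos--Rodnianski near null infinity for the combined current, with the same adjoint recombination controlling the coupling in the $r$-weighted bulks; a pigeonhole argument over dyadic time slabs then delivers the polynomial decay.

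The main obstacle I anticipate is establishing coercivity of the combined Morawetz bulk \emph{uniformly down to extremality}, i.e.\ for every $|Q|<M$ rather than only for small $Q$. Two features must cooperate: the potentials $V_1,V_2$ must retain the Regge--Wheeler-type positivity that lets the multiplier $f$ produce a sign-definite bulk, and the non-small coupling $Qa_1\divv\qf^\F$, $Qa_2\DDs_2\pf$ must not destroy that positivity. The delicate region is the photon sphere, where the principal bulk degenerates while the coupling is of the same angular order as the terms one is trying to control; after the adjoint symmetrization the surviving coupling quadratic form must be dominated by the degenerate coercive part, and verifying this for the explicit Reissner--Nordstr\"om functions across the whole range $|Q|<M$ is where the real work lies. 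A secondary difficulty is that although both fields are trapped at the photon sphere, they carry different potentials, so the single multiplier $f$ must be arranged to degenerate compatibly for both components at once; getting this simultaneous degeneration, together with the correct weights $c_1,c_2$, is the crux of extending the estimates to the full subextremal range.
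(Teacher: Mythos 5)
Your proposal is correct and takes essentially the same approach as the paper: your weighted-current recombination (choosing the weights so that the coupling terms, after integrating by parts on the spheres via the adjointness of $\DDd_2$ and $\DDs_2$, become a total $X$-derivative) is precisely the paper's combined energy-momentum tensor $\TT_{\mu\nu}[\Phi_1,\Phi_2]=\TT_{\mu\nu}[\Phi_1]+8\,\TT_{\mu\nu}[\Phi_2]-\frac{8Q}{r^2}\left(\DDs_2\Phi_1\c\Phi_2\right)\g_{\mu\nu}$ with $\Phi_1=\pf$, $\Phi_2=Q\qf^\F$. The remaining steps you outline --- $T$ plus the redshift vectorfield for boundedness, a Morawetz multiplier $f(r)\prr$ whose surviving mixed bulk (and boundary) terms are absorbed by a discriminant analysis of the resulting quadratic form uniformly for $|Q|<M$, the decoupled treatment of the $\ell=1$ mode of $\pf$, and the $r^p$ hierarchy exploiting the extra $r$-decay of the coupling --- coincide with the paper's proof, including your identification of the photon-sphere coercivity verification as the main technical content.
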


The hierarchy of $r$-weighted estimates is such that, using a pigeonhole principle \cite{rp}, one obtains in the full subextremal range $|Q|<M$, the pointwise decay estimates 
 \beaa
 | \pf | \leq C \tau^{-1+\de}, \qquad |\qf^\F| \leq C \tau^{-1+\de}
 \eeaa
for $\de>0$ and a time function $\tau$,  where $C$ is some constant depending on an appropriate Sobolev norm of the data. 
 
 The pointwise estimates for $\pf$ and $\qf^\F$ imply estimates for $\qf$ in the full subextremal range through the relation \eqref{relation}. We are then in the condition of having extended the estimates for $\qf$, $\qf^\F$ and $\pf$ to the full subextremal range $|Q|<M$, and therefore the proof of linear stability \cite{Giorgi6} can be applied to obtain control of all the remaining gauge-dependent quantities, as explained in Section \ref{section-lin-stab}.

 The paper is organized as follows. In Section \ref{sec-RN} we recall the main properties of Reissner-Nordstr\"om spacetime and in Section \ref{diag-sec}, the symmetric system used in this paper is derived from the equations obtained in \cite{Giorgi4} and \cite{Giorgi5}. In Section \ref{en-qu}, the energy quantities are defined and the main theorem is stated. The energy-momentum tensor associated to the system is defined in Section \ref{str-sy}. In Section \ref{sec-en}, boundedness of the energy for the full subextremal range $|Q|<M$ is proved. Morawetz estimates for the subextremal range $|Q|< M$ are obtained in Section \ref{sec-mo} and the $r^p$-estimates are derived in Section \ref{rp-sec}.

 \section{The Reissner-Nordstr\"om spacetime}\label{sec-RN}
 
 In this section, we introduce the Reissner-Nordstr{\"o}m exterior metric, as well as relevant background structure. We mostly highlight the properties which are needed in this paper. For a more complete description of the Reissner-Nordstr\"om spacetime see \cite{Exact}.

 \subsection{The manifold and the metric}

Define the manifold with boundary
\begin{align} \label{SchwSchmfld}
\mathcal{M} := \mathcal{D} \times S^2 := \left(-\infty,0\right] \times \left(0,\infty\right) \times S^2
\end{align}
with Kruskal coordinates $\left(U,V,\theta^1,\theta^2\right)$, as defined in Section 3 of \cite{Giorgi4}.
 The boundary $\mathcal{H}^+$ will be referred to as the \emph{horizon}. 
We denote by $S^2_{U,V}$ the $2$-sphere $\left\{U,V\right\} \times S^2 \subset \mathcal{M}$ in $\mathcal{M}$.

Fix two parameters $M>0$ and $Q$, verifying $|Q|<M$. Then the Reissner-Nordstr{\"o}m metric $g_{M, Q}$ with parameters $M$ and $Q$ is defined to be the metric:
\begin{align} \label{sskruskal}
g_{M, Q} = -4 \Up_K \left(U,V\right) d{U} d{V} +   r^2 \left(U,V\right) \gamma_{AB} d{\theta}^A d{\theta}^B.
\end{align}
where 
\beaa
\Up_K \left(U,V\right) &=& \frac{r_{-}r_{+}}{4r(U,V)^2} \Big( \frac{r(U,V)-r_{-}}{r_{-}}\Big)^{1+\left(\frac{r_{-}}{r_{+}}\right)^2}\exp\Big(-\frac{r_{+}-r_{-}}{r_{+}^2} r(U,V)\Big) \\
 \gamma_{AB} &=& \textrm{standard metric on $S^2$} \, .
\eeaa
and 
\bea\label{definiion-rpm}
r_{\pm}=M\pm \sqrt{M^2-Q^2}
\eea
and $r$ is an implicit function of the coordinates $U$ and $V$. We denote $r_{\mathcal{H}}=r_{+}=M+\sqrt{M^2-Q^2}$.

\begin{figure}[h]\label{figure1}
\centering
\def\svgwidth{0.35\textwidth} 
\includegraphics{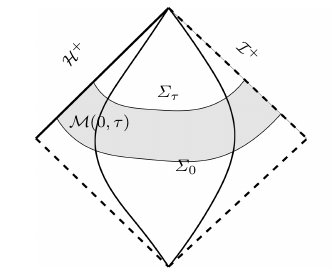}
\caption{Penrose diagram of the patch covered by the $U$ and $V$ coordinates.}
\end{figure}

The Kruskal coordinates cover the entire exterior region up to the horizon. 
 We now define another double null coordinate system that covers  the interior of $\mathcal{M}$ (up to the event horizon), modulo the degeneration of the 
angular coordinates. This coordinate system, 
$\left(u,v,\theta^1, \theta^2\right)$, is called  \emph{double null coordinates} and are defined via the relations
\begin{align} \label{UuVv}
U = -\frac{2r_{+}^2}{r_{+}-r_{-}}\exp\left(-\frac{r_{+}-r_{-}}{4r_{+}^2} u\right) \ \ \ \textrm{and} \ \ \ \ V = \frac{2r_{+}^2}{r_{+}-r_{-}}\exp \left(\frac{r_{+}-r_{-}}{4r_{+}^2} v\right) \, .
\end{align}
Using (\ref{UuVv}), we obtain the Reissner-Nordstr{\"o}m metric on the interior of $\MM$ in $\left(u,v,\theta^1, \theta^2\right)$-coordinates:
\begin{align} \label{ssef}
g_{M, Q} =  - 4 \Up \left(u,v\right) \, d{u} \, d{v} +   r^2 \left(u,v\right) \gamma_{AB} d{\theta}^A d{\theta}^B  
\end{align}
with
\begin{align}
\label{officialOmegadef}
\Up:= 1-\frac{2M}{r}+\frac{Q^2}{r^2}
\end{align}
  We denote by $S_{u, v}$ the sphere $S^2_{U, V}$ where $U$ and $V$ are given by \eqref{UuVv}.

Note that $u, v$ are  regular optical functions. Their corresponding   null geodesic generators are
\bea\label{definition-L-Lb}
\Lb:=-g^{ab}\pr_a v  \pr_b=\frac{1}{\Up} \pr_u,         \qquad  L:=-g^{ab}\pr_a u  \pr_b=\frac{1}{\Up} \pr_v,
\eea

The          null frame $(e_3^*, e_4^*) $        for which $e_3$ is geodesic  (which is regular towards the future along the event horizon) is given by
\bea
\label{eq:regular-nullpair}
e_3^*=\Lb,\,\, \qquad e_4^*=\Up L.
\eea

The null frame $(e_3, e_4)$  for which $e_4$ is geodesic (which is regular towards null infinity) is given by     
\bea\label{outgoing-null-pair}
e_3=\Up \Lb,\,\, \qquad  e_4 =L
\eea

The photon sphere of Reissner-Nordstrom corresponds to the hypersurface in which null geodesics are trapped. It is the hypersurface given by $\{ r=r_P \}$ where $r_P$ is the largest root of the polynomial $r^2-3M r +2Q^2$, and is given by 
\bea\label{def-rP}
r_P=\frac{3M+\sqrt{9M^2-8Q^2}}{2}
\eea
The curvature and electromagnetic components which are non-vanishing are given by 
\bea\label{value-rho}
\begin{split}
\rhoF&=\frac 1 2 F(e_3, e_4)=\frac{Q}{r^2}, \\
 \rho  &=\frac 1 4 W(e_3, e_4, e_3, e_4)=  -\frac{2M}{r^3}+\frac{2Q^2}{r^4} 
 \end{split}
\eea
where $F$ is the electromagnetic tensor and $W$ is the Weyl curvature of the Reissner-Nordstr\"om solution.

\subsection{The Killing vector fields}

We define the vectorfield $T$ to be the timelike Killing vector field $\pr_t$ of the $(t, r)$ coordinates in \eqref{RN}, which in double null coordinates is given by 
\beaa
T=\frac 1 2 (\pr_u+\pr_v)=\frac 12 (\Up e_3^*+e_4^*)=\frac 1 2 (e_3+\Up e_4)
\eeaa
We can also define a basis of angular momentum operator $\Omega_i$, $i=1,2,3$ (see for example \cite{DHR}). The Lie algebra of Killing vector fields of $g_{M,Q}$ is then generated by $T$ and $\Omega_i$, for $i=1,2,3$.

\subsection{The spherical harmonics and elliptic estimates}\label{spherical-harmonics}

We collect some known definitions and properties of the Hodge decomposition of scalars, one forms and symmetric traceless two tensors in spherical harmonics. We also recall some known elliptic estimates. See Section 4.4 of \cite{DHR} for more details.

We denote by $Y_m^\ell$, with $|m|\leq \ell$, the spherical harmonics on the sphere of radius $r$, i.e. 
\bea\label{sph-arm}
\lapp_0 Y^\ell_m=-\frac{1}{r^2} \ell(\ell+1) Y^\ell_m
\eea
 where $\lapp_0$ denotes the laplacian on the sphere $S_{u,v}$ of radius $r=r(u, v)$ for scalar functions. 

\begin{definition}\label{lemma-spherical-harmonics} We say that  a function $f$ on $\mathcal{M}$ is supported on $\ell\geq 2$ if the projections 
\beaa
\int_{S_{u, v}}  f \c Y^\ell_m=0
\eeaa
vanish for $Y_m^{\ell=1}$ for $m=-1, 0, 1$. 
\end{definition}

We recall the following angular operators on $S_{u,v}$-tensors. 
Let $\xi$ be an arbitrary one-form and $\th$ an arbitrary symmetric traceless $2$-tensor on $S_{u,v}$. 
\begin{itemize}
\item $\nabb$ denotes the covariant derivative associated to the metric $\slashed{g}$ on $S_{u,v}$.
\item $\DDd_1$ takes $\xi$ into the pair of functions $(\divv \xi, \curll \xi)$, where $$\divv \xi=\slashed{g}^{AB} \nabb_A \xi_B, \qquad \curll\xi=\slashed{\ep}^{AB}\nabb_A \xi_B$$
\item $\DDs_1$ is the formal $L^2$-adjoint of $\DDd_1$, and takes any pair of functions $(\rho, \sigma)$ into the one-form $-\nabb_A \rho+\slashed{\ep}_{AB} \nabb^B \sigma$.
\item  $\DDd_2$ takes $\th$ into the one-form $\DDd_2\th=(\divv \th)_C=\slashed{g}^{AB}\nabb_A \th_{BC}$.
\item $\DDs_2$ is the formal $L^2$-adjoint of $\DDd_2$, and takes $\xi$ into the symmetric traceless two tensor $$(\DDs_2\xi)_{AB}=-\frac 12 \left( \nabb_B\xi_A+\nabb_A\xi_B-(\divv \xi)\slashed{g}_{AB}\right)$$
\end{itemize}

We can easily check that $\DDs_k$ is the formal adjoint of $\DDd_k$, i.e.
\bea\label{adjoint}
\int_{S} (\DDd_k f) g =\int_{S} f (\DDs_k g )
\eea

Recall that an arbitrary one-form $\xi$ on $S_{u,v}$ has a unique representation $\xi=r \DDs_1(f, g)$, where $\DDs_1(f, g)$, for two uniquely defined functions $f$ and $g$ on the unit sphere, both with vanishing mean. In particular, the scalars $\divv \xi$ and $\curll \xi$ are supported in $\ell\ge1$. 

Recall that an arbitrary symmetric traceless two-tensor $\th$ on $S_{u,v}$ has a unique representation $\th=r^2\DDs_2\DDs_1(f, g)$ for two uniquely defined functions $f$ and $g$ on the unit sphere, both supported in $\ell\ge 2$. In particular, the scalars $\divv \divv \th$ and $\curll \divv \th$ are supported in $\ell\ge 2$.

We now derive the decomposition in spherical harmonics for one-forms and for two-tensors. We denote $\lapp_1$ and $\lapp_2$ the laplacian on the sphere $S_{u,v}$ of radius $r=r(u, v)$ for one-forms and two-tensors respectively. The laplacian is related to the angular Hodge operators     by the following relations \cite{Ch-Kl}: 
\beaa
\DDd_1 \DDs_1=-\lapp_0, \qquad \DDs_1 \DDd_1= -\lapp_1+K, \\
\DDd_2 \DDs_2= -\frac 1 2 \lapp_1-\frac 1 2 K, \qquad \DDs_2 \DDd_2= -\frac 1 2 \lapp_2+K
\eeaa
Using the above one can prove the following commutators (see Appendix in \cite{Giorgi4}):
 \bea\label{commutators}
 \begin{split}
 -\DDs_1 \lapp_0+\lapp_1\DDs_1&=& K \DDs_1, \\
 -\DDs_2 \lapp_1+\lapp_2 \DDs_2&=&3K \DDs_2 
\end{split}
 \eea
 where $K=\frac{1}{r^2}$ is the Gauss curvature of the sphere of radius $r$. 

Let $\xi$ be a one-form supported on the spherical harmonic $\ell \geq 1$, i.e. $\xi =r \DDs_1(f, g)$ with $f$ and $g$ scalar functions supported on $\ell \geq 1$. We then have from  \eqref{sph-arm} and \eqref{commutators}:
\beaa
\lapp_1 \xi&=&r \lapp_1 \DDs_1(f, g)=r  \DDs_1\lapp_0(f, g)+ K  r\DDs_1(f, g)\\
&=&-\frac{1}{r^2} \ell(\ell+1) r\DDs_1(f, g)+ K r\DDs_1(f, g)\\
&=&-\frac{\ell(\ell+1)-1}{r^2} \xi  
\eeaa
Multiplying the above by $\xi$ and integrating by parts the left hand side, we obtain for a one-form $\xi$ supported on the spherical harmonics $\ell \geq 1$:
\bea\label{elliptic-spj}
\int_S |\nabb \xi|^2&=&\int_S \frac{\ell(\ell+1)-1}{r^2} |\xi|^2
\eea

 Let $\th$ be a symmetric traceless two-tensor supported on the spherical harmonic $\ell \geq 2$, i.e. $\th=r^2\DDs_2\DDs_1(f, g)=r \DDs_2 \xi$ with $\xi$ supported on $\ell \geq 2$.  From \eqref{sph-arm} and \eqref{commutators}, we have
 \beaa
 \lapp_2 \th&=&r \lapp_2 \DDs_2\xi =r  \DDs_2 \lapp_1\xi + 3K r\DDs_2\xi\\
&=&-\frac{\ell(\ell+1)-1}{r^2} r\DDs_2\xi+ 3K r\DDs_2\xi \\
&=&-\frac{\ell(\ell+1)-4}{r^2} \th 
 \eeaa
 Multiplying the above by $\th$ and integrating by parts the left hand side, we obtain for a symmetric traceless two-tensor supported on the spherical harmonic $\ell \geq 2$:
\bea\label{elliptic-spj-2}
\int_S |\nabb \th|^2&=&\int_S \frac{\ell(\ell+1)-4}{r^2} |\th|^2
\eea

We recall the following $L^2$ elliptic estimates. 

\begin{proposition}[\cite{Ch-Kl}] Let $(S, \gamma)$ be a compact surface with Gauss curvature $K$. Then the following identities hold for $1$-forms $\xi$ on $S$ and symmetric traceless $2$-tensors $\th$:
\bea
\int_{S} |\nabb \xi|^2-K |\xi|^2 &=& 2\int_{S} |\DDs_2 \xi|^2 \label{elliptic-one-form} \\
\int_S |\nabb\th|^2+2K |\th|^2&=& 2 \int_S |\DDd_2 \th|^2 \label{elliptic-2-tensor}
\eea
\end{proposition}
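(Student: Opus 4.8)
The plan is to prove both identities as Bochner--Weitzenb\"ock-type integral identities: expand the squared norm on the right-hand side pointwise in covariant derivatives, integrate by parts on the closed surface $S$, and commute covariant derivatives, at which point the Gauss curvature enters through the Ricci identity. The only geometric input is that on a $2$-dimensional surface the full Riemann tensor is determined by $K$, namely $R_{ABCD}=K(\gamma_{AC}\gamma_{BD}-\gamma_{AD}\gamma_{BC})$ and hence $\mathrm{Ric}_{AB}=K\gamma_{AB}$. Since the statement is for an \emph{arbitrary} compact surface, the Hodge/eigenvalue decomposition used elsewhere on the round $S^2$ (as in \eqref{elliptic-spj}, \eqref{elliptic-spj-2}) is unavailable, so the commutation-based argument is the natural route.

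For \eqref{elliptic-one-form} I would start from the definition $(\DDs_2\xi)_{AB}=-\frac12(\nabb_A\xi_B+\nabb_B\xi_A-(\divv\xi)\gamma_{AB})$. Squaring and using $\gamma_{AB}\gamma^{AB}=2$ together with $\gamma^{AB}\nabb_A\xi_B=\divv\xi$ gives the pointwise identity
\[
2|\DDs_2\xi|^2=|\nabb\xi|^2+\nabb_A\xi_B\nabb^B\xi^A-(\divv\xi)^2 .
\]
The cross term is the only one requiring work: integrating by parts once yields $-\int_S\xi_B\,\nabb_A\nabb^B\xi^A$, and the Ricci identity on a surface gives $\nabb_A\nabb^B\xi^A=\nabb^B(\divv\xi)+K\xi^B$. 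A second integration by parts turns $\int_S\xi^B\nabb_B(\divv\xi)$ into $-\int_S(\divv\xi)^2$, so that $\int_S\nabb_A\xi_B\nabb^B\xi^A=\int_S\big((\divv\xi)^2-K|\xi|^2\big)$. Substituting back, the two $(\divv\xi)^2$ contributions cancel and one is left with $2\int_S|\DDs_2\xi|^2=\int_S\big(|\nabb\xi|^2-K|\xi|^2\big)$, which is \eqref{elliptic-one-form}.

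For \eqref{elliptic-2-tensor} the scheme is identical but one order higher. Writing $\omega_C:=(\DDd_2\th)_C=\nabb^A\th_{AC}$, I would integrate $\int_S|\DDd_2\th|^2=\int_S\omega_C\nabb_B\th^{BC}$ by parts to get $-\int_S\th^{BC}\nabb_B\nabb^A\th_{AC}$. Commuting $\nabb_B\nabb^A$ produces two Riemann-curvature contractions, one for each slot of $\th$; on a $2$-surface these reduce, using $R_{ABCD}=K(\gamma_{AC}\gamma_{BD}-\gamma_{AD}\gamma_{BC})$ and the tracelessness $\gamma^{AB}\th_{AB}=0$, to a single term proportional to $K|\th|^2$, while the remaining purely-derivative piece, after a further integration by parts, reconstructs $\int_S|\nabb\th|^2$. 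Collecting the curvature terms produces the coefficient $+2K$, giving $2\int_S|\DDd_2\th|^2=\int_S\big(|\nabb\th|^2+2K|\th|^2\big)$.

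The main obstacle in both cases is the precise bookkeeping of the curvature terms coming out of the commutators --- in particular getting the sign and the coefficient right ($-K$ for one-forms versus $+2K$ for symmetric traceless two-tensors, the difference reflecting the extra tensor slot on $\th$). For \eqref{elliptic-2-tensor} the delicate point is to use the symmetry and tracelessness of $\th$ to combine the two Riemann contractions correctly and to invoke the two-dimensional identity $R_{ABCD}=K(\gamma_{AC}\gamma_{BD}-\gamma_{AD}\gamma_{BC})$, since in higher dimension the analogous computation would also retain the traceless part of the curvature. Everything else is routine integration by parts on the closed surface $S$, with no boundary terms to track.
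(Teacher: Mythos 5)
The paper itself offers no proof of this proposition --- it simply cites \cite{Ch-Kl} --- so your argument must be judged on its own merits. Your proof of \eqref{elliptic-one-form} is correct and complete: the pointwise expansion $2|\DDs_2\xi|^2=|\nabb\xi|^2+\nabb_A\xi_B\nabb^B\xi^A-(\divv\xi)^2$, the commutation $\nabb_A\nabb^B\xi^A=\nabb^B(\divv\xi)+K\xi^B$, and the two integrations by parts are exactly what is needed, and the two $(\divv\xi)^2$ contributions cancel as you say.

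Your sketch for \eqref{elliptic-2-tensor}, however, has a genuine gap: the ``remaining purely-derivative piece'' does \emph{not} reconstruct $\int_S|\nabb\th|^2$. After commuting, you are left with $-\int_S\th^{BC}\nabb^A\nabb_B\th_{AC}$, and integrating by parts produces the \emph{crossed} term $\int_S\nabb^A\th^{BC}\,\nabb_B\th_{AC}$, in which the differentiation index is contracted against a tensor slot rather than against itself. Taken literally, your scheme yields
\[
\int_S |\DDd_2\th|^2=\int_S \nabb^A\th^{BC}\,\nabb_B\th_{AC}+2\int_S K|\th|^2 ,
\]
and identifying the crossed term with $\int_S|\nabb\th|^2$ would give $\int_S|\DDd_2\th|^2=\int_S|\nabb\th|^2+2K|\th|^2$, which is wrong by exactly the factor $2$ on the left --- one can check this on the round sphere against \eqref{elliptic-spj-2}, where the correct identity gives $2\int_S|\DDd_2\th|^2=\frac{\ell(\ell+1)-2}{r^2}\int_S|\th|^2$. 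The missing ingredient is special to symmetric traceless $2$-tensors in two dimensions. First, contracting two $\ep$'s gives the pointwise identity $|\nabb\th|^2-\nabb^A\th^{BC}\nabb_B\th_{AC}=|\curll\th|^2$, where $(\curll\th)_C=\ep^{AB}\nabb_A\th_{BC}$. Second, for $\th$ symmetric and traceless one has the algebraic identity $\ep^{AB}\th_{BC}=\ep_C{}^{B}\th_B{}^{A}$ (checkable in an orthonormal frame), whence $\curll\th=\dual(\divv\th)$ and therefore $|\curll\th|^2=|\DDd_2\th|^2$ pointwise. Substituting, the crossed term equals $\int_S|\nabb\th|^2-\int_S|\DDd_2\th|^2$, and carrying the extra $-\int_S|\DDd_2\th|^2$ to the left-hand side is precisely what produces the coefficient $2$ in \eqref{elliptic-2-tensor}. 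Note that your closing remark locates the delicacy in the curvature bookkeeping, but that part of your computation is actually fine (the two Riemann contractions do combine, via tracelessness, to $+2K|\th|^2$); the subtlety sits entirely in the derivative term.
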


We can specialize the above elliptic estimates to tensor supported on a fixed spherical harmonic $\ell$. 
Using \eqref{elliptic-spj} and \eqref{elliptic-one-form}, we deduce for a one form supported on a fixed spherical harmonic $\ell$:
\beaa
\int_S |\DDs_2 \xi|^2&=& \int_S \frac 1 2 |\nabb \xi|^2 -\frac 1 2 K |\xi|^2\\
 &=&\int_S \frac 1 2\frac{\ell(\ell+1)-1}{r^2} |\xi|^2  -\frac 1 2 \frac{1}{r^2} |\xi|^2\\
 &=&\int_S \frac 1 4\frac{2\ell(\ell+1)-4}{r^2} |\xi|^2  
\eeaa
This implies, for one form $\xi$ and two tensor $\th$ both supported on a fixed spherical harmonic $\ell$:
\bea\label{dot-product-estimate-l}
\int_S \DDs_2 \xi \c \th &\geq& -\int_S |\DDs_2 \xi | |\th| = - \int_S \frac 1 2 \frac{(2\ell(\ell+1)-4)^{1/2}}{r} |\xi | |\th|
\eea

\section{The derivation of the mixed spin $\pm1$ and spin $\pm2$ system of equations}\label{diag-sec}

In this section, we derive the mixed spin $\pm1$ and spin $\pm2$ system of equations from the spin $\pm2$ equations obtained in \cite{Giorgi4} and the spin $\pm1$ equation obtained in \cite{Giorgi5}.

Recall the quantities $\pf$, $\qf^\F$ and $\qf$ for linear perturbations of Reissner-Nordstr\"om spacetime as defined in the Introduction.

 In \cite{Giorgi4}, the following wave equation for $\qf^\F$, coupled with $\qf$, has been derived (see Proposition 16 Appendix B.1 in \cite{Giorgi4}):
\bea\label{equation-qf-F-1}
\Box_{g_{M, Q}} \qf^\F+\left( \ka\kab+3 \rho \right)\ \qf^\F &=&\rhoF \left(-\frac 1 r  \qf+4\rhoF \ r^3 \kab \ \ff \right)
\eea
 where here $\Box_{g_{M, Q}}=D^\mu D_\mu$ is the d'Alembertian of the Reissner--Nordstr\"om metric $g_{M, Q}$ applied to 2-tensors. Being an equation for  the linearized quantity $\qf$, the coefficients of the equations are the background values in Reissner-Nordstr\"om. More precisely
  \begin{itemize}
 \item $\ka:=\tr \chi$ and $\kab:=\tr\chib$ are the trace of the second null fundamental forms and $\ka\kab=- \frac{4}{r^2}\left(1-\frac{2M}{r}+\frac{Q^2}{r^2} \right)$, 
 \item $\rho= -\frac{2M}{r^3}+\frac{2Q^2}{r^4} $ and $\rhoF=\frac{Q}{r^2}$ are given by \eqref{value-rho}.
 \end{itemize}
 
 In \cite{Giorgi5},  the wave equation for $\pf$, coupled with $\qf^\F$, has been derived (see Proposition B.1 Appendix B in \cite{Giorgi5}):
\bea\label{equation-pf-1}
\Box_{g_{M, Q}}\pf+\left(\frac 1 4 \ka\kab-5\rhoF^2\right)\pf&=&8r^2 \rhoF^2\divv(\qf^\F)
\eea
where here $\Box_{g_{M, Q}}=D^\mu D_\mu$ is the d'Alembertian of the Reissner--Nordstr\"om metric $g_{M, Q}$ applied to 1-tensors. As above, the coefficients of the equation are the background values in Reissner-Nordstr\"om.

In \cite{Giorgi5}, the following relation between $\tilde{\b}$, $\a$ and $\ff$ has been derived (see Lemma 6.3.1 in \cite{Giorgi5}):
 \beaa
\DDs_2( r^3 \kab \tilde{\b})&=& -\rhoF\frac{1}{\kab}\nab_3(r^3 \kab^2 \a)-\left(2\rhoF^2+3\rho \right)r^3\kab \ff
 \eeaa
 where $\kab$, $\rhoF$ and $\rho$ are again the background values.

We multiply the above by $r^3$, apply  $\frac{1}{\kab} \nabb_3$, and recall that $[\nabb_3, r \DDs_2]=0$ \cite{Giorgi4}. We then obtain
  \beaa
 r\DDs_2 \pf &=& -r^2\rhoF \qf -r^3\left(2\rhoF^2+3\rho \right)\qf^\F\\
&&  -\frac{1}{\kab} \nabb_3\left(r^2\rhoF\right) r\frac{1}{\kab}\nab_3(r^3 \kab^2 \a)-\frac{1}{\kab} \nabb_3\left(r^3(2\rhoF^2+3\rho)\right)r^3\kab \ff
 \eeaa
where we recall that $\qf=\frac{1}{\kab}\nabb_3\left(\frac{r}{\kab}\nabb_3(r^3 \kab^2 \a)\right)$, $\qf^\F = \frac{1}{\kab}\nabb_3(r^3 \kab \ \ff)$ and $\pf=\frac{1}{\kab} \nabb_3(r^5\kab \ \tilde{\b})$. 
 Using that $r^2\rhoF=Q$, and $r^3(2\rhoF^2+3\rho)=-6M+\frac{8Q^2}{r} $,  we obtain
   \beaa
 \DDs_2 \pf &=& -r\rhoF \qf -r^2\left(2\rhoF^2+3\rho \right)\qf^\F +4Q^2 r \kab^2 \ff
 \eeaa
By writing $Q^2 r=r^5 \rhoF^2$, we proved that the quantities $\pf$, $\qf$ and $\qf^\F$ are related through the following relation:
\bea\label{relations-qf-qfF-pf}
\DDs_2\pf&=&  -r\rhoF\qf-r^2(3\rho+2\rhoF^2)\qf^\F +4r^5\rhoF^2 \kab \ff
\eea
We use relation \eqref{relations-qf-qfF-pf} to substitute $\qf$ in \eqref{equation-qf-F-1}:
\beaa
-\frac 1 r \rhoF\qf&=& \frac{1}{r^2}\DDs_2\pf -4r^3\rhoF^2 \kab \ff+(3\rho+2\rhoF^2)\qf^\F 
\eeaa
Equation \eqref{equation-qf-F-1} then becomes
\beaa
\Box_{g_{M, Q}} \qf^\F+\left( \ka\kab+3 \rho \right)\ \qf^\F &=&-\frac 1 r \rhoF \qf+4r^3\rhoF^2 \kab \ \ff \\
&=& \frac{1}{r^2}\DDs_2\pf -4r^3\rhoF^2 \kab \ \ff +(3\rho+2\rhoF^2)\qf^\F+4r^3\rhoF^2 \kab \ \ff  \\
&=& \frac{1}{r^2}\DDs_2\pf +(3\rho+2\rhoF^2)\qf^\F
\eeaa
where observe the cancellation of the term $4r^3\rhoF^2 \kab \ \ff$.
We therefore obtain
\beaa
\Box_{g_{M, Q}} \qf^\F+\left( \ka\kab-2\rhoF^2 \right)\ \qf^\F &=& \frac{1}{r^2}\DDs_2\pf 
\eeaa
We now combine the above equation together with equation \eqref{equation-pf-1} for $\pf$, and we obtain the following system, which we denote \textit{mixed spin $\pm1$ and spin $\pm2$ Regge--Wheeler system}: 
\bea
\begin{split}
\Box_{g_{M, Q}}\pf-V_1(r) \ \pf&=\frac{8Q^2}{r^2}\DDd_2\qf^\F  \label{box-pf}\\
\Box_{g_{M, Q}} \qf^\F-V_2(r) \  \qf^\F &=\frac{1}{r^2}\DDs_2\pf \label{box-qfF}
\end{split}
\eea
where we wrote the divergence as $\div=\DDd_2$. The potentials are given by
\bea
V_1(r)&=&-\frac 1 4 \ka\kab+5\rhoF^2=\frac 1 4  \frac{4}{r^2}\left(1-\frac{2M}{r}+\frac{Q^2}{r^2} \right)+5\frac{Q^2}{r^4}=\frac{1}{r^2}\left(1-\frac{2M}{r}+\frac{6Q^2}{r^2} \right)\label{potential-1} \\ 
V_2(r)&=&-\ka\kab+2\rhoF^2= \frac{4}{r^2}\left(1-\frac{2M}{r}+\frac{Q^2}{r^2} \right)+2\frac{Q^2}{r^4}=\frac{4}{r^2}\left( 1-\frac{2M}{r}+\frac{3Q^2}{2r^2}\right) \label{potential-2}
\eea
In order to make the right hand sides symmetric in the presence of $Q$, we can assume\footnote{This case is contained in the case of $|Q| \ll M$ treated in \cite{Giorgi4} and \cite{Giorgi5}.}  that $Q \neq 0$ and define 
\bea\label{def-phis}
\Phi_1:=\pf, \qquad \Phi_2:=Q \c \qf^\F
\eea
with $\Phi_1$ a 1-tensor and $\Phi_2$ a symmetric traceless 2-tensor on the sphere.
Then the above system becomes
\bea
\Box_{g_{M, Q}}\Phi_1-V_1(r)  \ \Phi_1&=&\frac{8Q}{r^2}\DDd_2 \Phi_2 \label{box-phi-1}\\
\Box_{g_{M, Q}} \Phi_2-V_2(r) \  \Phi_2 &=&\frac{Q}{r^2}\DDs_2\Phi_1 \label{box-phi-2}
\eea
The above two equations form the symmetric system which we will analyze below. 

Observe that we can restrict our attention to the case of $\Phi_1$ supported to the $\ell \geq 2$ spherical harmonics. Indeed, if $\Phi_1$ is supported on the $\ell=1$ spherical harmonics, the two equations decouple since $(\DDd_2 \Phi_2)_{\ell=1}=0$ and $\DDs_2( \Phi_1)_{\ell=1}=0$. More precisely, the first equation reduces to the main equation analyzed in \cite{Giorgi5}, and the second equation reduces to one of the two equations analyzed in \cite{Giorgi4}, with trivial right hand side. In what follows, we will therefore restrict to the case of $\Phi_1$ and $\Phi_2$ both supported to the $\ell \geq 2$ spherical harmonics.

\section{Energy quantities and statements of the main theorem}\label{en-qu}

We define a foliation in Reissner--Nordstr\"om spacetime $\Sigma_\tau$ which connects the event horizon and future null infinity.                 
                   We foliate  $\MM$    by  hypersurfaces  $\Si_\tau $ which are:
             \begin{enumerate}
             \item  Incoming null    in $\{ r_{\mathcal{H}} \leq r \leq \frac{11}{10}r_{\mathcal{H}} \}$,        
              with $e_3^*$ as null incoming   generator (which is regular up to horizon). We denote  this portion 
              $\Si_{red}$. This is realized by a portion of $\{ v=\text{const}\}$ in the ingoing Eddington-Finkelstein coordinates.
             \item   Strictly spacelike  in $\{\frac{11}{10} r_{\mathcal{H}} < r < R \}$ with  $R$  a fixed  number  $R \gg r_P$. We denote  this  portion   by $\Si_{trap}$. This is realized by a portion of $\widetilde{\Sigma}_\tau=\{ t=\tau \}$. 
                          \item     Outgoing null    in  $\{  r  \ge R \}$
             with $e_4$ as null  outgoing generator.   We denote this portion by $\Si_{far}$. This is realized by a portion of $\{ u= \text{const}\}$ in the outgoing Eddington-Finkelstein coordinates. 
                      \end{enumerate}
 We denote $\MM(\tau_1, \tau_2)\subset \MM$  the spacetime region   in  the past of $\Si(\tau_2)$ and in the future of $\Si(\tau_1)$.  For fixed $R$ we denote by $\MM_{\le R}$ and  $\MM_{\ge R} $ the   regions  defined by $r\le R$ and $r\ge R$. We denote by $\Sigma_{ \geq R} (\tau)$ the portion of hypersurface for $r \geq R$. See Figure 1.

 Let $p$ be a free parameter with $\de \leq p \leq 2-\de$, for $\de>0$, as in standard application of the $r^p$-method of Dafermos-Rodnianski \cite{rp}.

    We introduce the following  weighted energies for $\Phi_1$ and $\Phi_2$. 
    \begin{enumerate}
    \item Energy quantities on $\Si_\tau$:
    \begin{itemize}
    \item Basic energy quantity
    \bea
    \begin{split}
\label{def:basic-energy-deg}
E[\Phi_1, \Phi_2](\tau):&= \int_{\Si_{red}}    |\nabb^*_3(\Phi_1)|^2 +|\nabb\Phi_1|^2 + |\Phi_1|^2 +   |\nabb^*_3(\Phi_2)|^2 +|\nabb\Phi_2|^2 + |\Phi_2|^2 \\
&+ \int_{\Si_{trap}}    |\nabb_4 (\Phi_1)|^2  + |\nabb_3(\Phi_1)|^2 +|\nabb\Phi_1|^2 + r^{-2}|\Phi_1|^2 \\
&+\int_{\Si_{trap}}    |\nabb_4 (\Phi_2)|^2  + |\nabb_3(\Phi_2)|^2 +|\nabb\Phi_2|^2 + r^{-2}|\Phi_2|^2 \\
&+ \int_{\Si_{far}}    |\nabb_4 (\Phi_1)|^2   +|\nabb\Phi_1|^2 + r^{-2}|\Phi_1|^2 + |\nabb_4 (\Phi_2)|^2  +|\nabb\Phi_2|^2 + r^{-2}|\Phi_2|^2
\end{split}
\eea
Notice that the above energy quantity is regular up to the horizon. Observe that along outgoing null hypersurfaces $\{u=\text{const}\}$ the volume form is given by $dv d\text{vol}_{S_{u,v}}$, and along ingoing null hypersurfaces $\{ v=\text{const} \}$ the volume form is $du d\text{vol}_{S_{u,v}}$, where $d\text{vol}_{S_{u,v}}$ denotes the volume form of the sphere $S_{u, v}$. 
\item Weighted energy quantity in the far away region
\beaa
  E_{p\,; \,R}[\Phi_1, \Phi_2](\tau)&:= & \int_{\Si_{r \ge  R}(\tau)}  r^p |\check{\nabb}_4\Phi_1|^2+ r^p |\check{\nabb}_4\Phi_2|^2
  \eeaa
  where $\check{\nabb}_4\Psi:=\nabb_4\Psi+\frac 1 r \Psi$. 
  \item Weighted energy quantity
  \beaa
  E_{p}[\Phi_1, \Phi_2](\tau)&:=&E[\Phi_1, \Phi_2](\tau)+ E_{p\,; \,R}[\Phi_1, \Phi_2](\tau)
  \eeaa
\end{itemize}

    \item  Weighted spacetime bulk energies in $\MM(\tau_1, \tau_2)$:
    \begin{itemize}
    \item Basic Morawetz bulk 
    \bea
    \label{def:Mor-bulk}
    \begin{split}
&\Mor[\Phi_1, \Phi_2](\tau_1, \tau_2)\\
&:= \int_{\MM(\tau_1, \tau_2)} \frac{1}{r^3} |R(\Phi_1)|^2+ \frac{1}{r^4} |\Phi_1|^2 + \frac{(r^2-3M r +2Q^2)^2}{r^5}\left( |\nabb \Phi_1|^2+\frac{1}{r^2} |T\Phi_1|^2 \right)\\
&+ \int_{\MM(\tau_1, \tau_2)} \frac{1}{r^3} |R(\Phi_2)|^2+ \frac{1}{r^4} |\Phi_2|^2 + \frac{(r^2-3M r +2Q^2)^2}{r^5}\left( |\nabb \Phi_2|^2+\frac{1}{r^2} |T\Phi_2|^2 \right)   
\end{split}  
\eea
where $R=\frac 1 2(-\Up e_3^*+e_4^*)=\frac 1 2 (-e_3+\Up e_4)$. 
Notice that the Morawetz bulk $\Mor[\Psi](\tau_1, \tau_2)$ is degenerate at the photon sphere $\{ r=r_P\}$. 
\item  Weighted bulk norm in the far away region 
\bea\label{Morawetz-far-away}
\begin{split}
\MM_{p\,; \,R}[\Phi_1, \Phi_2](\tau_1, \tau_2):&=\int_{\MM_{r\ge  R}(\tau_1, \tau_2) }  r^{p-1}  \left( p | \check{\nabb}_4(\Phi_1) |^2 +(2-p)  ( |\nabb \Phi_1|^2+   r^{-2}  |\Phi_1|^2)\right)  \\
&+\int_{\MM_{r\ge  R}(\tau_1, \tau_2) }  r^{p-1}  \left( p | \check{\nabb}_4(\Phi_2) |^2 +(2-p)  ( |\nabb \Phi_2|^2+   r^{-2}  |\Phi_2|^2)\right)  
\end{split}
\eea
Observe that for $\de \leq p \leq 2-\de$, for $\de>0$, the above bulk norm is positive definite. 
\item Weighted bulk norm
\beaa
\MM_{p}[\Phi_1, \Phi_2](\tau_1, \tau_2)&:=&\Mor[\Phi_1, \Phi_2](\tau_1, \tau_2)+\MM_{p\,; \,R}[\Phi_1, \Phi_2](\tau_1, \tau_2)
\eeaa
\end{itemize}
    \end{enumerate}

We can now state the main theorem in terms of the above energy quantities.

\begin{theorem}\label{main-theorem} Let $\Phi_1$ and $\Phi_2$ be a 1-tensor and a symmetric traceless 2-tensor respectively, satisfying equations \eqref{box-phi-1} and \eqref{box-phi-2} in Reissner-Nordstrom spacetime with $|Q|<M$ and supported on the $\ell \geq2$ spherical harmonics. Then, 
\begin{enumerate}
\item  for all $\de \leq p \leq 2-\de$ and for any $\tau>0$, boundedness of the weighted energy holds:
\bea\label{main-1}
E_{p}[\Phi_1, \Phi_2](\tau) \leq C   E_{p}[\Phi_1, \Phi_2](0)
\eea
\item for all $\de \leq p \leq 2-\de$ and for any $\tau>0$, the following integrated local energy decay estimates for $\Phi_1$ and $\Phi_2$ holds:
\bea\label{main-2}
\MM_{p}[\Phi_1, \Phi_2](0, \tau) \leq C   E_{p}[\Phi_1, \Phi_2](0)
\eea
\end{enumerate}
\end{theorem}

Observe that, as in the case of integrated local energy decay estimates for even the scalar linear wave equation on black hole backgrounds, the degeneracy at the photon sphere of the Morawetz bulk cannot be eliminated. The degeneracy is caused by the presence of orbital trapped geodesics, which are an obstruction to decay \cite{lectures}.

In the following we show how to obtain the boundedness of the energy in Section \ref{sec-en}, the Morawetz estimates in Section \ref{sec-mo} and the $r^p$-estimates in Section \ref{rp-sec}. Combining those estimates, we prove the boundedness of the the weighted energy \eqref{main-1} and the integrated local energy decay estimates \eqref{main-2}, therefore proving Theorem \ref{main-theorem}.

\section{The energy-momentum tensor associated to the system}\label{str-sy}

In this section, we define a combined energy-momentum tensor associated to the system formed by equations \eqref{box-phi-1} and \eqref{box-phi-2}.  We first recall the definition of energy-momentum tensor and current associated to a solution of a wave equation. 

We define the stress-energy tensor of equations \eqref{box-phi-1} and \eqref{box-phi-2} as:
\beaa
\TT_{\mu\nu}[\Phi_1]:&=& D_\mu \Phi_1 \c D_\nu \Phi_1-\frac 1 2 g_{\mu \nu}\left( D_{\lambda} \Phi_1 \c D^{\lambda} \Phi_1 +V_1 |\Phi_1|^2\right)=D_\mu \Phi_1 \c D_\nu \Phi_1-\frac 1 2 g_{\mu \nu} \LL_1[\Phi_1]\\
\TT_{\mu\nu}[\Phi_2]:&=& D_\mu \Phi_2 \c D_\nu \Phi_2-\frac 1 2 g_{\mu \nu}\left( D_{\lambda} \Phi_2 \c D^{\lambda} \Phi_2 +V_2 |\Phi_2|^2\right)= D_\mu \Phi_2 \c D_\nu \Phi_2-\frac 1 2 g_{\mu \nu}\LL_2[\Phi_2]
\eeaa
where $D$ denotes the covariant derivative of the Reissner-Nordstr\"om metric $g=g_{M, Q}$, and $V_1$ and $V_2$ are the potentials of the equations defined in \eqref{potential-1} and \eqref{potential-2}.

For  $X$ a vectorfield, $w$ a scalar function and $M$ a one form, we define the associated currents as: 
\beaa
 \PP_\mu^{(X, w_1, M_1)}[\Phi_1]&:=&\TT_{\mu\nu}[\Phi_1] X^\nu +\frac 1 2  w_1 \Phi_1 D_\mu \Phi_1 -\frac 1 4   \pr_\mu w_1  |\Phi_1|^2+\frac 1 4  M_{1\mu} |\Phi_1|^2\\
  \PP_\mu^{(X, w_2, M_2)}[\Phi_2]&:=&\TT_{\mu\nu}[\Phi_2] X^\nu +\frac 1 2  w_2 \Phi_2 D_\mu \Phi_2 -\frac 1 4   \pr_\mu w_2  |\Phi_2|^2+\frac 1 4  M_{2\mu} |\Phi_2|^2
  \eeaa
From a standard computation (see for example \cite{Giorgi4}) we obtain that for $X=a(r)e_3+b(r)e_4$, in a spherically symmetric spacetime, the divergence of the current is given by
  \bea
  \label{le:divergPP-gen-new}
  \begin{split}
  D^\mu \PP_\mu^{(X, w_1, M_1)}[\Phi_1]&= \frac 1 2 \TT[\Phi_1]  \c\piX+\left( - \frac 1 2 X( V_1 ) -\frac 1 4   \Box_g  w_1 \right)|\Phi_1|^2+\frac 12  w_1 \LL_1[\Phi_1] \\
  &+\frac 1 4  D^\mu (\Phi_1^2 M_\mu)    +  \left(X(\Phi_1)+\frac 1 2   w_1 \Phi_1\right)\c \frac{8Q}{r^2}\DDd_2\Phi_2
   \end{split}
 \eea
 and 
\bea
  \label{le:divergPP-gen-2-new}
  \begin{split}
  D^\mu \PP_\mu^{(X, w_2, M_2)}[\Phi_2]&= \frac 1 2 \TT[\Phi_2]  \c\piX+\left( - \frac 1 2 X( V_2 ) -\frac 1 4   \Box_g  w_2 \right)|\Phi_2|^2+\frac 12  w_2 \LL_2[\Phi_2]\\
  & +\frac 1 4  D^\mu (|\Phi_2|^2 M_\mu)    +  \left(X(\Phi_2)+\frac 1 2   w_2 \Phi_2\right)\c \frac{Q}{r^2}\DDs_2\Phi_1 
   \end{split}
 \eea
 where $\piX$ is the deformation tensor of the vectorfield $X$. 
 
 Notice the presence of the right hand sides of the equations, $\frac{8Q}{r^2}\DDd_2\Phi_2$ and $\frac{Q}{r^2}\DDs_2\Phi_1$ in \eqref{le:divergPP-gen-new} and \eqref{le:divergPP-gen-2-new}. In particular, those right hand sides appear in the divergence of the current. Our goal is to define an energy-momentum tensor for the system which has good cancellation properties with respect to these additional terms. It turns out that the system composed by equations \eqref{box-phi-1} and \eqref{box-phi-2} admits a conserved energy-momentum tensor.

 \begin{definition}\label{def-str} Let $\Phi_1$ and $\Phi_2$ be a 1-tensor and a symmetric traceless 2-tensor respectively, satisfying the system of coupled wave equations \eqref{box-phi-1} and \eqref{box-phi-2}. We define the energy-momentum tensor for the system as the following symmetric two tensor $\TT_{\mu\nu}[\Phi_1, \Phi_2]$: 
\bea\label{definition-combined-stress-energy-tensor}
\TT_{\mu\nu}[\Phi_1, \Phi_2]&:=& \TT_{\mu\nu}[\Phi_1]+8\TT_{\mu\nu}[\Phi_2]-\frac{8Q}{r^2} \left(\DDs_2 \Phi_1 \c \Phi_2 \right) g_{\mu\nu}
\eea
where $\TT_{\mu\nu}[\Phi_1]$ and $\TT_{\mu\nu}[\Phi_2]$ are the standard energy-momentum tensors associated to equations \eqref{box-phi-1} and \eqref{box-phi-2} respectively. 

We also define the associated combined current $\PP^{(X, w, M)}_\mu[\Phi_1, \Phi_2]$ for a vectorfield $X$, a pair of scalar functions $w=(w_1, w_2)$, a pair of one forms $M=(M_1, M_2)$ as
\bea\label{definition-combined-current}
\PP^{(X, w, M)}_\mu[\Phi_1, \Phi_2]&=&  \PP_\mu^{(X, w_1, M_1)}[\Phi_1]+8  \PP_\mu^{(X, w_2, M_2)}[\Phi_2]-\frac{8Q}{r^2} \left(\DDs_2 \Phi_1 \c \Phi_2 \right) X_\mu
\eea
\end{definition}

The above definition is motivated by the cancellation properties for the divergence of the new combined current $\PP^{(X, w, M)}_\mu[\Phi_1, \Phi_2]$, as showed in the following lemma. 

\begin{lemma}\label{divergence-combined-current} Let $\Phi_1$ and $\Phi_2$ be a 1-tensor and a symmetric traceless 2-tensor respectively, satisfying the system of coupled wave equations \eqref{box-phi-1} and \eqref{box-phi-2}. For $X=a(r)e_3+b(r)e_4$, the divergence of the combined current $\PP^{(X, w, M)}_\mu[\Phi_1, \Phi_2]$ is given by
\bea\label{divergence-P-EE}
D^\mu \PP^{(X, w, M)}_\mu[\Phi_1, \Phi_2]&=_s& \EE^{(X, w, M)}[\Phi_1, \Phi_2]
\eea
where $=_s$ indicates that the equality holds upon integration on the sphere and
\bea\label{definition-EE}
\begin{split}
&\EE^{(X, w, M)}[\Phi_1, \Phi_2]\\
&:=  \frac 1 2 \TT[\Phi_1]  \c\piX+\left( - \frac 1 2 X( V_1 ) -\frac 1 4   \Box_g  w_1 \right)|\Phi_1|^2+\frac 12  w_1 \LL_1[\Phi_1] +\frac 1 4  D^\mu (|\Phi_1|^2 M_\mu)      \\
 &+4 \TT[\Phi_2]  \c\piX+\left( - 4 X( V_2 ) -2   \Box_g  w_2 \right)|\Phi_2|^2+4  w_2 \LL_2[\Phi_2] +2 D^\mu (|\Phi_2|^2 M_\mu)      \\
  &+\frac{ 4Q}{r^2}\left(w_1+w_2 +\frac{4}{r}  X(r) - \tr \piX\right)\DDs_2\Phi_1 \c  \Phi_2  -\frac{8Q}{r^2}([X, \DDs_2] \Phi_1) \c \Phi_2  
  \end{split}
\eea
\end{lemma}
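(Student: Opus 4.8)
The plan is to expand $\D^\mu \PP^{(X, w, M)}_\mu[\Phi_1, \Phi_2]$ by linearity according to the definition \eqref{definition-combined-current},
\[
\D^\mu \PP^{(X, w, M)}_\mu[\Phi_1, \Phi_2] = \D^\mu \PP_\mu^{(X, w_1, M_1)}[\Phi_1] + 8\, \D^\mu \PP_\mu^{(X, w_2, M_2)}[\Phi_2] - \D^\mu\Big( \tfrac{8Q}{r^2}\big(\DDs_2 \Phi_1 \c \Phi_2\big) X_\mu \Big),
\]
and to substitute the two single-field divergence identities \eqref{le:divergPP-gen-new} and \eqref{le:divergPP-gen-2-new}. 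The ``diagonal'' contributions --- the terms $\frac 1 2 \TT[\Phi_1]\c\piX$ and $4\TT[\Phi_2]\c\piX$, the potential and $\Box_\g w_i$ terms multiplying $|\Phi_1|^2$ and $|\Phi_2|^2$, the Lagrangian terms $\frac 12 w_1 \LL_1[\Phi_1]$ and $4 w_2 \LL_2[\Phi_2]$, and the $M$-current divergences --- assemble verbatim into the first two lines of \eqref{definition-EE}. Thus the entire content of the lemma reduces to the treatment of the three coupling terms.

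I would then collect the coupling terms, namely $\big(X(\Phi_1) + \frac 1 2 w_1 \Phi_1\big)\c \frac{8Q}{r^2}\DDd_2\Phi_2$ from the first divergence, $8\big(X(\Phi_2) + \frac 1 2 w_2 \Phi_2\big)\c \frac{Q}{r^2}\DDs_2\Phi_1$ from the second, and the extra-term divergence $-\D^\mu\big( \frac{8Q}{r^2}(\DDs_2 \Phi_1 \c \Phi_2) X_\mu \big)$. Since $\frac{8Q}{r^2}$, $X(r)$ and $\tr\piX$ are functions of $r$ alone, they pull out of every sphere integral. Integrating the first group over $S_{u,v}$ and using the adjointness \eqref{adjoint} of $\DDd_2$ and $\DDs_2$, the $X(\Phi_1)$-piece becomes $\frac{8Q}{r^2}\DDs_2(X(\Phi_1))\c\Phi_2$ and the $w_1\Phi_1$-piece becomes $\frac{4Qw_1}{r^2}\DDs_2\Phi_1\c\Phi_2$; likewise the $w_2\Phi_2$-piece already reads $\frac{4Qw_2}{r^2}\DDs_2\Phi_1\c\Phi_2$, together producing the factor $w_1+w_2$. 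Writing $\DDs_2(X(\Phi_1)) = X(\DDs_2\Phi_1) - [X, \DDs_2]\Phi_1$ isolates exactly the commutator appearing in \eqref{definition-EE}, while the Leibniz rule for the metric-compatible derivative along $X$ gives $X(\DDs_2\Phi_1)\c\Phi_2 + \DDs_2\Phi_1\c X(\Phi_2) = X\big(\DDs_2\Phi_1\c\Phi_2\big)$.

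The last step is to compute the extra-term divergence directly. With $f := \frac{8Q}{r^2}$ and the scalar $s := \DDs_2\Phi_1\c\Phi_2$, using $\D^\mu X_\mu = \frac 1 2 \tr\piX$ and $f'(r) = -\frac 2 r f$,
\[
\D^\mu(f s X_\mu) = f'(r)X(r)\, s + f\, X(s) + f s\, \D^\mu X_\mu = -\tfrac{2}{r} f X(r)\, s + f\, X(s) + \tfrac 1 2 f s\, \tr\piX.
\]
The crucial cancellation is that the $f\,X(s)$ produced here is cancelled exactly by the $X\big(\DDs_2\Phi_1\c\Phi_2\big)$ term generated by the Leibniz step, leaving only algebraic multiples of $\DDs_2\Phi_1\c\Phi_2$ and the commutator term. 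Collecting the coefficients of $\frac{Q}{r^2}\DDs_2\Phi_1\c\Phi_2$ --- namely $4(w_1+w_2)$ from the $w$-terms, $\frac{16}{r}X(r)$ from the $-f'(r)X(r)$ contribution, and $-4\tr\piX$ from $-f\,\D^\mu X_\mu$ --- reproduces precisely the factor $\frac{4Q}{r^2}\big(w_1 + w_2 + \frac 4 r X(r) - \tr\piX\big)$, together with the commutator term $-\frac{8Q}{r^2}([X,\DDs_2]\Phi_1)\c\Phi_2$, which is the third line of \eqref{definition-EE}.

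The main obstacle is the bookkeeping around the commutator $[X,\DDs_2]$ and the reason the identity holds only $=_s$. Because $X=a(r)e_3+b(r)e_4$ is transverse to the spheres $S_{u,v}$, it does not commute with the angular operator $\DDs_2$, and the integration by parts underlying the adjointness \eqref{adjoint} is legitimate only after integrating over the sphere; this is the origin of the $=_s$ qualifier and of the surviving commutator. The only genuinely delicate point is verifying that the three independent sources of the coefficient of $\DDs_2\Phi_1\c\Phi_2$ --- the potentials $w_i$, the $r$-derivative of the $\frac1{r^2}$ weight, and the spacetime divergence of $X$ --- combine with the exact numerical factors; once the $f\,X(s)$ cancellation is observed, the remainder is routine.
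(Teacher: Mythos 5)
Your proposal is correct and follows essentially the same route as the paper: expand the combined divergence via the single-field identities \eqref{le:divergPP-gen-new} and \eqref{le:divergPP-gen-2-new}, compute the divergence of the extra term $-\frac{8Q}{r^2}(\DDs_2\Phi_1\c\Phi_2)X_\mu$ using $\D^\mu X_\mu=\frac12\tr\piX$, and use adjointness on the sphere together with the commutator $[X,\DDs_2]$ to cancel the transport terms and assemble the coefficient $\frac{4Q}{r^2}\left(w_1+w_2+\frac4r X(r)-\tr\piX\right)$. The only difference is cosmetic: the paper cancels the two terms $X(\Phi_1)\c\DDd_2\Phi_2$ and $\DDs_2\Phi_1\c X(\Phi_2)$ separately, while you bundle them into $X(\DDs_2\Phi_1\c\Phi_2)$ and cancel in one stroke — the same computation traversed in reverse.
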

\begin{proof} We compute, using \eqref{le:divergPP-gen-new} and \eqref{le:divergPP-gen-2-new}:
\beaa
D^\mu\PP^{(X, w, M)}_\mu[\Phi_1, \Phi_2]&=&  \frac 1 2 \TT[\Phi_1]  \c\piX+\left( - \frac 1 2 X( V_1 ) -\frac 1 4   \Box_g  w_1 \right)|\Phi_1|^2+\frac 12  w_1 \LL_1[\Phi_1] +\frac 1 4  D^\mu (\Phi_1^2 M_\mu)      \\
  &&+  \left(X(\Phi_1)+\frac 1 2   w_1 \Phi_1\right)\c \frac{8Q}{r^2}\DDd_2\Phi_2\\
  &&+4 \TT[\Phi_2]  \c\piX+\left( - 4 X( V_2 ) -2   \Box_g  w_2 \right)|\Phi_2|^2+4  w_2 \LL_2[\Phi_2] +2 D^\mu (|\Phi_2|^2 M_\mu)      \\
  &&+  \left(X(\Phi_2)+\frac 1 2   w_2 \Phi_2\right)\c \frac{8Q}{r^2}\DDs_2\Phi_1 \\
&&-X(\frac{8Q}{r^2}) \left(\DDs_2 \Phi_1 \c \Phi_2 \right) -\frac{8Q}{r^2} \left(X(\DDs_2 \Phi_1) \c \Phi_2 \right)-\frac{8Q}{r^2} \left(\DDs_2 \Phi_1 \c X(\Phi_2) \right)\\
&& -\frac{8Q}{r^2} \left(\DDs_2 \Phi_1 \c \Phi_2 \right) D^\mu(X_\mu)
\eeaa
where the last two lines are the divergence of the additional term $-\frac{8Q}{r^2} \left(\DDs_2 \Phi_1 \c \Phi_2 \right) X_\mu$ in the definition of $\PP^{(X, w, M)}_\mu[\Phi_1, \Phi_2]$. 
Recall that $D^\mu(X_\mu)=\frac 1 2 \tr \piX$. We compute
\beaa
-\frac{8Q}{r^2} \left(X(\DDs_2 \Phi_1) \c \Phi_2 \right)&=&-\frac{8Q}{r^2} \Big(\DDs_2 (X(\Phi_1)) \c \Phi_2+([X, \DDs_2] \Phi_1) \c \Phi_2 \Big)\\
&=_s&  -\frac{8Q}{r^2} \Big( X(\Phi_1) \c \DDd_2\Phi_2+([X, \DDs_2] \Phi_1) \c \Phi_2 \Big)
\eeaa
where the last equality holds upon integration on the spheres, where we used that $\DDd_2$ and $\DDs_2$ are adjoint operators on the sphere as in \eqref{adjoint}.  We then obtain the cancellation of the terms $X(\Phi_2) \c \frac{8Q}{r^2} \DDs_2 \Phi_1$ and $X(\Phi_1) \c \frac{8Q}{r^2} \DDs_2 \Phi_2$. This implies the lemma.
\end{proof}

\section{Boundedness of the energy}\label{sec-en}

In this section we prove boundedness of the energy for the mixed spin $\pm1$ and spin $\pm2$ system of equations. 

To derive the energy estimates we apply Lemma \ref{divergence-combined-current} to the Killing vectorfield $X=T$, with $w=0$, $M=0$. We obtain 
\bea\begin{split}\label{energy-}
D^\mu \PP^{(T, 0, 0)}_\mu[\Phi_1, \Phi_2]&=_s  \frac 1 2 \TT[\Phi_1]  \c\piT - \frac 1 2 T( V_1 ) |\Phi_1|^2  +4 \TT[\Phi_2]  \c\piT - 4 T( V_2 )|\Phi_2|^2    \\
  &+\frac{ 4Q}{r^2}\left(\frac{4}{r}  T(r) - \tr \piT\right)\DDs_2\Phi_1 \c  \Phi_2  -\frac{8Q}{r^2}([T, \DDs_2] \Phi_1) \c \Phi_2\\
  &=_s 0
  \end{split}
\eea
since $\piT=0$,  $T(r)=T(V_1)=T(V_2)=0$ and $T$ commutes with the angular operator. 

By applying the divergence theorem to $D^\mu \PP^{(T, 0, 0)}_\mu[\Phi_1, \Phi_2]=_s 0 $ in the region $\MM(0, \tau)$, we are left to analyze the boundary terms $\int_{\Sigma_\tau} \PP^{(T, 0, 0)}_\mu[\Phi_1, \Phi_2] \c n_{\Sigma_\tau}$,
where $n_{\Sigma_\tau}$ is the normal vector to $\Sigma_\tau$. Explicitly, $n_{\Sigma_{\tau}}=e_3^*$ in $\Sigma_{red}$, $n_{\Sigma_{\tau}}=\frac{1}{\Up} T$ in $\Sigma_{trap}$, $n_{\Sigma_{\tau}}=e_4$ in $\Sigma_{far}$ and along $\mathcal{H}^+$, and $n_{\Sigma_{\tau}}=e_3$ along $\mathscr{I}^+$. In particular, $g(T, n_{\Sigma_{\tau}})=-1$.
Our goal is to show that 
 $\int_{\Sigma_\tau} \PP^{(T, 0, 0)}_\mu[\Phi_1, \Phi_2] \c n_{\Sigma_\tau}$ is positive definite, and comparable with $\TT_{\mu\nu}[\Phi_1]T^\mu n_{\Sigma_\tau}+8\TT_{\mu\nu}[\Phi_2]T^\mu n_{\Sigma_\tau}$ (and therefore with $E[\Phi_1, \Phi_2](\tau)$). 
 Observe that $V_1=\frac{1}{r^2}\Up+\frac{5Q^2}{r^4} \geq 0$, and $V_2=\frac{4}{r^2}\Up+\frac{2Q^2}{r^4} \geq 0$ are positive in the whole exterior region, therefore $\TT_{\mu\nu}[\Phi_1]T^\mu n_{\Sigma_t}+8\TT_{\mu\nu}[\Phi_2]T^\mu n_{\Sigma_t}$ is coercive.

We compute 
\beaa
\PP^{(T, 0, 0)}_\mu[\Phi_1, \Phi_2] \c n_{\Sigma_\tau}&=& \TT_{\mu\nu}[\Phi_1]T^\mu n^\nu_{\Sigma_\tau}+8\TT_{\mu\nu}[\Phi_2]T^\mu n^\nu_{\Sigma_\tau}-\frac{8Q}{r^2}\left( \DDs_2 \Phi_1 \c \Phi_2\right)\g(T,n_{\Sigma_\tau})
\eeaa
In $\Sigma_{red}$, where $T=\frac 1 2 (\Up e_3^*+e_4^*)$, the above boundary term becomes
\beaa
\PP^{(T, 0, 0)}_\mu[\Phi_1, \Phi_2] \c n_{\Sigma_\tau}&=& \frac 1 2 \Up \TT(e^*_3, e^*_3)[\Phi_1]+\frac 1 2 \TT(e^*_3, e^*_4)[\Phi_1]+4\Up \TT(e^*_3, e^*_3)[\Phi_2]+4 \TT(e^*_3, e^*_4)[\Phi_2]\\
&&+\frac{8Q}{r^2}\left( \DDs_2 \Phi_1 \c \Phi_2\right)\\
&=& \frac{ \Up}{ 2} |\nabb_3^*\Phi_1|^2+\frac 1 2|\nabb \Phi_1|^2 +\frac 1 2 V_1|\Phi_1|^2 +4\Up |\nabb_3^*\Phi_2|^2+4|\nabb \Phi_2|^2 +4 V_2|\Phi_2|^2\\
&&+\frac{8Q}{r^2} \DDs_2 \Phi_1 \c \Phi_2
\eeaa
In $\Sigma_{trap}$, where $T=\frac 1 2 (e_3+\Up e_4)$, the boundary term becomes
\beaa
\PP^{(T, 0, 0)}_\mu[\Phi_1, \Phi_2] \c n_{\Sigma_\tau}&=& \frac{ 1}{ 4 \Up} \TT_{33}[\Phi_1]+\frac 1 2 \TT_{34}[\Phi_1]+\frac 1 4 \Up^2 T_{44}[\Phi_1]+ \frac{ 2}{ \Up} \TT_{33}[\Phi_2]+4 \TT_{34}[\Phi_2]\\
&&+2 \Up^2 \TT_{44}[\Phi_2]+\frac{8Q}{r^2}\left( \DDs_2 \Phi_1 \c \Phi_2\right)\\
&=&\frac{ 1}{ 4 \Up} |\nabb_3\Phi_1|^2+\frac 1 4 \Up^2 |\nabb_4\Phi_1|^2+\frac 1 2|\nabb \Phi_1|^2 +\frac 1 2 V_1|\Phi_1|^2 \\
&&+\frac{ 2}{ \Up}  |\nabb_3\Phi_2|^2+2\Up^2|\nabb_4\Phi_2|^2+4|\nabb \Phi_2|^2 +4 V_2|\Phi_2|^2+\frac{8Q}{r^2} \DDs_2 \Phi_1 \c \Phi_2
\eeaa
Similarly, in $\Sigma_{far}$ we have 
\beaa
\PP^{(T, 0, 0)}_\mu[\Phi_1, \Phi_2] \c n_{\Sigma_\tau}&=&\frac 1 2 \Up T_{44}[\Phi_1]+\frac 1 2 \TT_{34}[\Phi_1]+4 \Up T_{44}[\Phi_2]+4 \TT_{34}[\Phi_2]+\frac{8Q}{r^2}\left( \DDs_2 \Phi_1 \c \Phi_2\right)\\
&=&\frac 1 2 \Up |\nabb_4\Phi_1|^2+\frac 1 2|\nabb \Phi_1|^2 +\frac 1 2 V_1|\Phi_1|^2+4\Up|\nabb_4\Phi_2|^2+4|\nabb \Phi_2|^2 +4 V_2|\Phi_2|^2\\
&&+\frac{8Q}{r^2} \DDs_2 \Phi_1 \c \Phi_2
\eeaa
The boundary terms at the event horizon and at future null infinity can be similarly analyzed. In particular, in each portion of $\Sigma_\tau$ we can estimate the boundary terms by
\beaa
\PP^{(T, 0, 0)}_\mu[\Phi_1, \Phi_2] \c n_{\Sigma_\tau}&\geq &\frac 1 2|\nabb \Phi_1|^2 +\frac 1 2 V_1|\Phi_1|^2+4|\nabb \Phi_2|^2 +4 V_2|\Phi_2|^2+\frac{8Q}{r^2} \DDs_2 \Phi_1 \c \Phi_2\\
&\geq &\frac 1 2|\nabb \Phi_1|^2 +\frac{5Q^2}{2r^4}|\Phi_1|^2+4|\nabb \Phi_2|^2 +\frac{8Q^2}{r^4}|\Phi_2|^2+\frac{8Q}{r^2} \DDs_2 \Phi_1 \c \Phi_2
\eeaa
since $V_1=\frac{1}{r^2}\Up+\frac{5Q^2}{r^4} \geq \frac{5Q^2}{r^4}$, and $V_2=\frac{4}{r^2}\Up +\frac{2Q^2}{r^4}\geq \frac{2Q^2}{r^4}$. 

We now show that the above right hand side defines a positive definite quadratic form, therefore implying that $\PP^{(T, 0, 0)}_\mu[\Phi_1, \Phi_2] \c n_{\Sigma_t}$ is positive definite.

Suppose that $\Phi_1$ and $\Phi_2$ are supported on the fixed $\ell \geq 2$ spherical harmonic. Using  \eqref{elliptic-spj}, \eqref{elliptic-spj-2} and \eqref{dot-product-estimate-l} we bound 
\beaa
&& \frac 1 2|\nabb \Phi_1|^2  +\frac{5Q^2}{2r^4}|\Phi_1|^2 +4|\nabb \Phi_2|^2 +\frac{8Q^2}{r^4}|\Phi_2|^2+\frac{8Q}{r^2} \DDs_2 \Phi_1 \c \Phi_2 \\
 &\geq &\frac{1}{ 2r^2}\left(\lambda-1+\frac{5Q^2}{r^2}\right) |\Phi_1|^2   +\frac{4}{r^2} \left(\lambda-4+\frac{2Q^2}{r^2}\right) |\Phi_2|^2 -\frac{4Q}{r^2}  \frac{(2\lambda-4)^{1/2}}{r} |\Phi_1 | |\Phi_2|
\eeaa
where we denoted $\lambda:=\ell(\ell+1) \geq 6$.

The above is a quadratic form of the type $a|\Phi_1|^2 +b |\Phi_2|^2-c|\Phi_1||\Phi_2|$. Its discriminant is given by $D=c^2-4ab$, and if the discriminant is negative, then the quadratic form is positive definite.
We compute the discriminant of the above quadratic form: 
\beaa
-D &=&  \frac{8}{ r^4}\Big[\left(\lambda-1+\frac{5Q^2}{r^2}\right) \left(\lambda-4+\frac{2Q^2}{r^2}\right)-\frac{4Q^2}{r^2}  (\lambda-2)\Big]\\
&\geq & \frac{8}{ r^4}\Big[\left(\lambda-1\right) \left(\lambda-4\right)+\frac{2Q^2}{r^2}\left(\lambda-1\right) +\frac{5Q^2}{r^2} \left(\lambda-4\right)-\frac{4Q^2}{r^2}  (\lambda-2)\Big]\\
&= & \frac{8}{ r^4}\Big[\left(\lambda-1\right) \left(\lambda-4\right)+\frac{Q^2}{r^2} \left(3\lambda-14\right)\Big]
\eeaa
Since $\lambda \geq 6$, we have that $3\lambda-14>0$, therefore implying positivity of the above quadratic form. 

To obtain the estimates for the non-degenerate energy, we make use of the celebrated redshift vectorfield, as introduced in \cite{redshift}. Notice that the non-degeneracy along the horizon given by the redshift vectorfield fails exactly at the extremal case $|Q|=M$. We have  for $X=a(r) e_3+b(r) e_4$ (see Lemma 5 in \cite{Giorgi4})
 \beaa
\TT\c \piX  &=& \left( \Up a'- b'+\left(\frac{2M}{r^2}-\frac{2Q^2}{r^3} \right)a \right)|\nabb \Psi|^2+ \left( \Up b'+\left(-\frac{2M}{r^2}+\frac{2Q^2}{r^3} \right)b \right)|\nabb_4 \Psi|^2 -a'|\nabb_3 \Psi|^2\\
 &&+\left(-\frac{2\Up}{r}a+\frac{2}{r}b\right)\nabb_3 \Psi \c \nabb_4 \Psi+\left(\Up a'- b'+\left(\frac{2M}{r^2}-\frac{2Q^2}{r^3} +\frac{2\Up}{r}\right)a -\frac{2}{r}b\right)V_i |\Psi|^2
 \eeaa
  Consider a vector field $N$ defined as $N= a(r) \nabb_3+b(r) \nabb_4$, and such that the functions $a(r)$ and $b(r)$ verify
\beaa
a(r_\HH)=0, \qquad   b(r_\HH)=-1.
\eeaa
Clearly one can choose the functions $a$ and $b$ such that $\EE^{(N, 0, 0)}[\Phi_1, \Phi_2]$ is positive definite. Notice that the coefficient of $\nabb_4$ along the horizon reduces to $\frac{2M}{r_{\mathcal{H}}^2}-\frac{2Q^2}{r_{\mathcal{H}}^3} $, which degenerates to zero at the horizon of an extremal Reissner-Nordstr\"om, for which $r_{\mathcal{H}}=M=Q$. Therefore the above bulk fails to be positive definite in the extremal case.

We have therefore obtained the following. 
\begin{proposition}\label{prop-energy-deg}[Boundedness of the energy] Let $\Phi_1$ and $\Phi_2$ be a 1-tensor and a symmetric traceless 2-tensor respectively, satisfying the system of coupled wave equations \eqref{box-phi-1} and \eqref{box-phi-2} in Reissner-Nordstr\"om spacetime with $|Q| < M$ and supported in $\ell \geq 2$ spherical harmonics. Then we have
\beaa
E[\Phi_1, \Phi_2](\tau_2) \leq C E[\Phi_1, \Phi_2](\tau_1)
\eeaa
for every $\tau_1 \leq \tau_2$. 
\end{proposition}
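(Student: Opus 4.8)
The plan is to run a vector-field multiplier argument driven by the \emph{timelike Killing field} $T$, exploiting the conservation law built into the combined current of Definition~\ref{def-str}. First I would specialize Lemma~\ref{divergence-combined-current} to the case $X=T$, $w=0$, $M=0$. Since $T$ is Killing we have $\piT=0$; since $T$ annihilates the area radius and the potentials, $T(r)=T(V_1)=T(V_2)=0$; and since $T$ commutes with the angular operators, the commutator term $[T,\DDs_2]$ in \eqref{definition-EE} drops. Inspecting \eqref{definition-EE} term by term, every contribution then vanishes, yielding the exact conservation law $\D^\mu\PP^{(T,0,0)}_\mu[\Phi_1,\Phi_2]=_s 0$. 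Integrating this over the causal region $\MM(0,\tau)$ and applying the divergence theorem produces the flux identity \eqref{eq-ener} equating the fluxes of $\PP^{(T,0,0)}$ through $\Sigma_\tau$ and $\Sigma_0$, the future horizon contributing a flux of favorable sign.

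The core of the argument is to show that the boundary flux $\int_{\Sigma_\tau}\PP^{(T,0,0)}_\mu[\Phi_1,\Phi_2]\c n_{\Sigma_\tau}$ is positive definite and comparable to $E[\Phi_1,\Phi_2](\tau)$. The obstruction is the sign-indefinite mixed term $\tfrac{8Q}{r^2}\DDs_2\Phi_1\c\Phi_2$ inherited from the last summand of \eqref{definition-combined-stress-energy-tensor}. I would evaluate $\PP^{(T,0,0)}_\mu\c n_{\Sigma_\tau}$ separately on the three pieces $\Sigma_{red}$, $\Sigma_{trap}$, $\Sigma_{far}$; in each, the positive parts of $\TT[\Phi_1]$ and $8\,\TT[\Phi_2]$ (coercive because $V_1,V_2\ge 0$ throughout the exterior when $|Q|<M$, by \eqref{potential-1} and \eqref{potential-2}) dominate, up to the mixed term, a single model expression $\tfrac12|\nabb\Phi_1|^2+\tfrac12 V_1|\Phi_1|^2+4|\nabb\Phi_2|^2+4V_2|\Phi_2|^2+\tfrac{8Q}{r^2}\DDs_2\Phi_1\c\Phi_2$.

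Having reduced positivity to this single expression, the decisive step is to decompose $\Phi_1,\Phi_2$ into fixed spherical harmonics — legitimate since, as noted at the end of Section~\ref{diag-sec}, the $\ell=1$ mode of $\Phi_1$ kills the coupling $\DDs_2\Phi_1$ and reduces the system to standard wave equations, so it suffices to treat $\ell\ge 2$ — and to convert the angular norms using the Poincaré identities \eqref{elliptic-spj}, \eqref{elliptic-spj-2} together with the Cauchy--Schwarz bound \eqref{dot-product-estimate-l}. This turns the expression into a pointwise quadratic form $a|\Phi_1|^2+b|\Phi_2|^2-c|\Phi_1||\Phi_2|$ with $\lambda=\ell(\ell+1)\ge 6$, whose positivity follows from a \emph{negative discriminant}: the computation gives $-D\ges (\lambda-1)(\lambda-4)+\tfrac{Q^2}{r^2}(3\lambda-14)>0$ for $\lambda\ge 6$. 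I expect this discriminant estimate — which is exactly where the relative weight $8$ on $\TT[\Phi_2]$ and the precise form of $V_1,V_2$ are forced, in the spirit of the $H_1$--$H_2$ analysis of Hung in \cite{pei-ken} — to be the main obstacle.

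The final step upgrades the coercive but horizon-degenerate $T$-flux to the non-degenerate energy $E[\Phi_1,\Phi_2]$, which also controls transversal derivatives at $\mathcal{H}^+$. For this I would add the Dafermos--Rodnianski redshift multiplier $N=a(r)\nabb_3+b(r)\nabb_4$ with $a(r_\HH)=0$ and $b(r_\HH)=-1$, chosen so that $\EE^{(N,0,0)}[\Phi_1,\Phi_2]$ is positive definite in a neighborhood of the horizon; combining this bulk with the $T$-conservation law and the interior coercivity above yields $E[\Phi_1,\Phi_2](\tau_2)\le C\,E[\Phi_1,\Phi_2](\tau_1)$. The crucial structural fact is that the coefficient of $|\nabb_4\Phi|^2$ in $\TT\c\piX$ at the horizon equals $\tfrac{2M}{r_\HH^2}-\tfrac{2Q^2}{r_\HH^3}$, which is strictly positive precisely for $|Q|<M$ and vanishes at extremality $r_\HH=M=|Q|$; this is the second, independent place where strict subextremality is essential.
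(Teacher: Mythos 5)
Your proposal is correct and follows essentially the same route as the paper: the $T$-multiplier conservation law for the combined current, reduction of the boundary flux positivity on the three portions of $\Sigma_\tau$ to the single model expression, the fixed-$\ell$ spherical harmonic decomposition with the discriminant bound $-D\ges(\lambda-1)(\lambda-4)+\tfrac{Q^2}{r^2}(3\lambda-14)>0$ for $\lambda\ge 6$, and the redshift vector field with horizon coefficient $\tfrac{2M}{r_\HH^2}-\tfrac{2Q^2}{r_\HH^3}$ degenerating only at extremality. All key steps, including the exact constants and the two places where strict subextremality enters, match the paper's argument.
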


\section{Morawetz estimates}\label{sec-mo}

In this section we prove Morawetz estimates for the mixed spin $\pm1$ and spin $\pm2$ system of equations.
 
To derive the Morawetz estimates, we apply Lemma \ref{divergence-combined-current}  to the radial vector field $Y= f(r) R$, for $R=\Up \partial_r$, and a function $f(r)$ to be determined.

 \begin{proposition}
  \label{prop-ident-Mor1}
 Let  $Y=f(r) R$  and   $w=  r^{-2}\Up\pr_r \left( r^2 f(r)\right)  $. Let $\Phi_1$ and $\Phi_2$ be a 1-tensor and a symmetric traceless 2-tensor respectively, satisfying the system of coupled wave equations \eqref{box-phi-1} and \eqref{box-phi-2}. Then we have 
  \beaa
\EE^{(Y, w, 0)}[\Phi_1, \Phi_2]&=& \frac{ f}{r}\left( 1-\frac{3M}{r}+\frac{2Q^2}{r^2}  \right)|\nabb \Phi_1|^2+ f'|R \Phi_1|^2+\left(-\frac 1 2 \pr_r\left(\Up V_1\right)f -\frac 1 4   \Box_g  w \right)|\Phi_1|^2   \\
 &&+\frac{8f}{r}\left( 1-\frac{3M}{r}+\frac{2Q^2}{r^2}  \right)|\nabb \Phi_2|^2+8 f'|R \Phi_2|^2+8\left(-\frac 1 2 \pr_r\left(\Up V_2\right)f -\frac 1 4    \Box_g  w \right)|\Phi_2|^2    \\
  &&+\frac{ 16 Q}{r^3} f\left(1-\frac{3M}{r}+\frac{2Q^2}{r^2} +\frac 1 2 \Up\right)\DDs_2\Phi_1 \c  \Phi_2
\eeaa
  \end{proposition}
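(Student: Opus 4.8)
The plan is to specialize the general divergence identity of Lemma~\ref{divergence-combined-current} to the multiplier $X=Y=f(r)R$, with equal scalar weights $w_1=w_2=w$ and vanishing one-forms $M_1=M_2=0$, and then to show that the prescribed choice $w=r^{-2}\Up\,\pr_r(r^2 f)$ is exactly what turns the output of \eqref{definition-EE} into the stated combination of good terms. Writing $R=\tfrac12(-e_3+\Up e_4)$ in the outgoing geodesic frame \eqref{outgoing-null-pair}, the multiplier takes the form $Y=a\,e_3+b\,e_4$ with $a=-\tfrac12 f$ and $b=\tfrac12\Up f$. I would then insert these into the explicit expression for $\TT\c\piX$ recalled in Section~\ref{sec-en} (from Lemma 9.3.1 and Corollary~9.2 of \cite{Giorgi4}), applied separately to $\Phi_1$ with potential $V_1$ and to $\Phi_2$ with potential $V_2$, the latter carrying the weight $8$ dictated by \eqref{definition-combined-stress-energy-tensor}.

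The heart of the computation is the cancellation producing $f'|R\Phi_i|^2$. For this I would expand the Lagrangian in the null frame, $\LL_i[\Phi_i]=|\nabb\Phi_i|^2-\nabb_3\Phi_i\c\nabb_4\Phi_i+V_i|\Phi_i|^2$, and collect the second-order terms. After substituting $a,b$, the coefficients of $|\nabb_3\Phi_i|^2$ and $|\nabb_4\Phi_i|^2$ coming from $\tfrac12\TT\c\piY$ are $\tfrac14 f'$ and $\tfrac14\Up^2 f'$, while the coefficient of $\nabb_3\Phi_i\c\nabb_4\Phi_i$ is $\tfrac{\Up f}{r}$. The term $\tfrac12 w\LL_i$ adds $-\tfrac12 w$ to this cross coefficient, and the point is that $w=\tfrac{2\Up f}{r}+\Up f'$ is precisely the value for which $\tfrac{\Up f}{r}-\tfrac12 w=-\tfrac12\Up f'$; the three terms then assemble into $f'|R\Phi_i|^2$ since $|R\Phi_i|^2=\tfrac14(|\nabb_3\Phi_i|^2-2\Up\,\nabb_3\Phi_i\c\nabb_4\Phi_i+\Up^2|\nabb_4\Phi_i|^2)$. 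Simultaneously the $|\nabb\Phi_i|^2$ coefficient, combined with the $\tfrac12 w$ from $\tfrac12 w\LL_i$, simplifies to $\tfrac{f}{r}\bigl(1-\tfrac{3M}{r}+\tfrac{2Q^2}{r^2}\bigr)$ upon expanding $\tfrac{\Up f}{r}$.

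For the zeroth-order terms I would gather the $V_i|\Phi_i|^2$ contribution of $\TT\c\piY$, the piece $\tfrac12 w V_i|\Phi_i|^2$ from $\tfrac12 w\LL_i$, and the piece $-\tfrac12 Y(V_i)=-\tfrac12 f\,R(V_i)$. Using $R(r)=\Up$, so that $R(V_i)=\Up V_i'$, the first two combine to $-\tfrac12\Up' f\,V_i$ and the third gives $-\tfrac12\Up f\,V_i'$; their sum is $-\tfrac12\pr_r(\Up V_i)f$, while the $-\tfrac14\Box_\g w$ term is carried along untouched. This reproduces exactly the $|\Phi_i|^2$ coefficients in the statement.

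The main obstacle is the mixed term. By \eqref{definition-EE} it equals $\tfrac{4Q}{r^2}\bigl(2w+\tfrac{4}{r}Y(r)-\tr\piY\bigr)\DDs_2\Phi_1\c\Phi_2-\tfrac{8Q}{r^2}\bigl([Y,\DDs_2]\Phi_1\bigr)\c\Phi_2$. Here I would use $Y(r)=\Up f$ and $\tr\piY=2\,\D^\mu Y_\mu=2f\,\div R+2\Up f'$, and observe that the $2\Up f'$ cancels the $f'$-part of $2w$, so that $2w+\tfrac4r Y(r)-\tr\piY=\tfrac{8\Up f}{r}-2f\,\div R$ is purely radial and carries no $f'$, matching the absence of $f'$ in the target. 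Since the only $r$-dependence of $\DDs_2$ is through the conformal factor $r^2$ of the round metric on $S_{u,v}$, the commutator satisfies $[Y,\DDs_2]\Phi_1=f\,[R,\DDs_2]\Phi_1$ with $[R,\DDs_2]\Phi_1$ proportional to $\DDs_2\Phi_1$, so the entire expression collapses to a scalar multiple of $\DDs_2\Phi_1\c\Phi_2$. The delicate part is evaluating $\div R$ (equivalently $\trch$ and $\trchb$ of the background) and the commutator coefficient explicitly and checking that $\tfrac{8\Up}{r}-2\,\div R-2\,c(r)=\tfrac4r\bigl(1-\tfrac{3M}{r}+\tfrac{2Q^2}{r^2}+\tfrac12\Up\bigr)$, which yields the asserted factor $\tfrac{16Q}{r^3}f\bigl(1-\tfrac{3M}{r}+\tfrac{2Q^2}{r^2}+\tfrac12\Up\bigr)$; this is where the bookkeeping is heaviest, and I would cross-check it against the $Q\to 0$ decoupling and against the scalar identities already used in Section~\ref{sec-en}.
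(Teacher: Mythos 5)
Your proposal is correct and follows essentially the same route as the paper: you specialize the combined divergence identity \eqref{definition-EE} to $Y=f(r)R$ with $w_1=w_2=r^{-2}\Up\,\pr_r(r^2f)$, assemble $f'|R\Phi_i|^2$ and cancel the Lagrangian terms via that choice of $w$, obtain the potential coefficient $-\tfrac12\pr_r(\Up V_i)f$ exactly as in the paper, and reduce the mixed term to the coefficient identity $\tfrac{8\Up}{r}-2\div R-2c(r)=\tfrac4r\bigl(1-\tfrac{3M}{r}+\tfrac{2Q^2}{r^2}+\tfrac12\Up\bigr)$, which indeed holds since $\div R=\tfrac{2\Up}{r}+\tfrac{2M}{r^2}-\tfrac{2Q^2}{r^3}$ and $[R,\DDs_2]=-\tfrac{\Up}{r}\DDs_2$. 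The only difference is cosmetic: the paper quotes the pre-grouped expression for $\TT\c\piY$ with $Y=fR$ (Lemma 9.3.1 and Corollary 9.2 of \cite{Giorgi4}) and computes $\tr\piY$ explicitly, whereas you expand the general $X=ae_3+be_4$ formula in the null frame and regroup by hand.
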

 \begin{proof}
By \eqref{definition-EE}, we have
\beaa
\EE^{(Y, w, 0)}[\Phi_1, \Phi_2]&=&  \frac 1 2 \TT[\Phi_1]  \c\piY+\left( - \frac 1 2 Y( V_1 ) -\frac 1 4   \Box_\g  w_1 \right)|\Phi_1|^2+\frac 12  w_1 \LL_1[\Phi_1]     \\
 &&+4 \TT[\Phi_2]  \c\piY+\left( - 4 Y( V_2 ) -2   \Box_\g  w_2 \right)|\Phi_2|^2+4  w_2 \LL_2[\Phi_2]      \\
  &&+\frac{ 4Q}{r^2}\left(w_1+w_2 +\frac{4}{r}  Y(r) - \tr \piY\right)\DDs_2\Phi_1 \c  \Phi_2  -\frac{8Q}{r^2}([Y, \DDs_2] \Phi_1) \c \Phi_2  
\eeaa
We have for $Y=f(r) R$ (see Corollary 4 in \cite{Giorgi4})
\beaa
 \TT[\Phi_1] \c \piY&=& 2f\left( \frac{1}{r}-\frac{3M}{r^2}+\frac{2Q^2}{r^3}  \right)|\nabb \Phi_1|^2+2 f'|R \Phi_1|^2+\left(-\frac{2\Up}{r}f-\Up f'\right) \LL_1[\Phi_1]\\
 &&+\left(-\frac{2M}{r^2}+\frac{2Q^2}{r^3}\right) fV_1 |\Phi_1|^2 
 \eeaa
 and similarly for $\Phi_2$. We therefore obtain
\beaa
\EE^{(Y, w)}[\Phi_1, \Phi_2]&=&  f\left( \frac{1}{r}-\frac{3M}{r^2}+\frac{2Q^2}{r^3}  \right)|\nabb \Phi_1|^2+ f'|R \Phi_1|^2+\left(\frac 12  w_1 -\frac{\Up}{r}f-\frac{\Up}{2} f'\right) \LL_1[\Phi_1]\\
 &&+\left( - \frac 1 2 Y( V_1 )+\left(-\frac{M}{r^2}+\frac{Q^2}{r^3}\right) fV_1  -\frac 1 4   \Box_g  w_1 \right)|\Phi_1|^2   \\
 &&+8f\left( \frac{1}{r}-\frac{3M}{r^2}+\frac{2Q^2}{r^3}  \right)|\nabb \Phi_2|^2+8 f'|R \Phi_2|^2+8\left(\frac 12  w_2 -\frac{\Up}{r}f-\frac{\Up}{2} f'\right) \LL_2[\Phi_2]\\
 &&+8\left( - \frac 1 2  Y( V_2 )+\left(-\frac{M}{r^2}+\frac{Q^2}{r^3}\right) fV_2 -\frac 1 4    \Box_g  w_2 \right)|\Phi_2|^2    \\
  &&+\frac{ 4Q}{r^2}\left(w_1+w_2 +\frac{4}{r}  Y(r) - \tr \piY\right)\DDs_2\Phi_1 \c  \Phi_2  -\frac{8Q}{r^2}([Y, \DDs_2] \Phi_1) \c \Phi_2  
\eeaa
With the choice 
$w:=w_1=w_2=  r^{-2}\Up\pr_r \left( r^2 f(r)\right)= \frac{2\Up}{r}f+\Up f'$ the terms involving $\LL_1[\Phi_1]$ and $\LL_2[\Phi_2]$ above cancel out. 
 Using that (see Corollary 3 in \cite{Giorgi4})
 \beaa
  \tr\piY&=& \frac{4\Up}{r} f +\left(\frac{4M}{r^2}-\frac{4Q^2}{r^3} \right)f+2f' \Up =\frac{4}{r} \left( 1-\frac{M}{r} \right) f + 2 f' \Up
 \eeaa
we compute, recalling that $R(r)=\Up$, 
 \beaa
  [Y, \DDs_2]&=& f[ R, \DDs_2]=-\frac{\Up}{r}f \DDs_2\\
 w_1+w_2 +\frac{4}{r}  Y(r) - \tr \piY&=& 2(\frac{2\Up}{r}f+\Up f')+\frac{4}{r}  \Up f - (\frac{4\Up}{r} f +\left(\frac{4M}{r^2}-\frac{4Q^2}{r^3} \right)f+2f' \Up)\\
 &=& \frac{4\Up}{r}f - \left(\frac{4M}{r^2}-\frac{4Q^2}{r^3} \right)f= \frac 4 r  \left(1-\frac{3M}{r}+\frac{2Q^2}{r^2} \right)f
 \eeaa
 We conclude
 \beaa
&&\EE^{(Y, w, 0)}[\Phi_1, \Phi_2]\\
&=&  f\left( \frac{1}{r}-\frac{3M}{r^2}+\frac{2Q^2}{r^3}  \right)|\nabb \Phi_1|^2+ f'|R \Phi_1|^2+8f\left( \frac{1}{r}-\frac{3M}{r^2}+\frac{2Q^2}{r^3}  \right)|\nabb \Phi_2|^2+8 f'|R \Phi_2|^2\\
&&+\left(\left( - \frac \Up 2 V_1'+\left(-\frac{M}{r^2}+\frac{Q^2}{r^3}\right) V_1 \right) f -\frac 1 4   \Box_\g  w_1 \right)|\Phi_1|^2 \\
&& +8\left(\left( - \frac \Up 2  V_2'+\left(-\frac{M}{r^2}+\frac{Q^2}{r^3}\right) V_2 \right) f -\frac 1 4    \Box_\g  w_2 \right)|\Phi_2|^2    \\
  &&+\frac{ 16 Q}{r^3} \left(1-\frac{3M}{r}+\frac{2Q^2}{r^2} +\frac 1 2 \Up\right)f\DDs_2\Phi_1 \c  \Phi_2  
\eeaa
  Observing that $-\frac{M}{r^2}+\frac{Q^2}{r^3}=-\frac 12 \Up'$, we prove the proposition.

   \end{proof}
   
   For $\Phi_1$ and $\Phi_2$ supported in $\ell \geq 2$ spherical harmonics, we use \eqref{elliptic-spj} and \eqref{elliptic-spj-2} to write
     \beaa
\EE^{(Y, w, 0)}[\Phi_1, \Phi_2]&=_{s}& f'|R \Phi_1|^2+\left(\frac{f}{r^3}\left( 1-\frac{3M}{r}+\frac{2Q^2}{r^2}  \right)(\lambda-1)-\frac 1 2 \pr_r\left(\Up V_1\right)f -\frac 1 4   \Box_g  w \right)|\Phi_1|^2   \\
 &&+8 f'|R \Phi_2|^2+8\left(\frac{ f}{r^3}\left( 1-\frac{3M}{r}+\frac{2Q^2}{r^2}  \right)(\lambda-4)-\frac 1 2 \pr_r\left(\Up V_2\right)f -\frac 1 4    \Box_g  w \right)|\Phi_2|^2    \\
  &&+\frac{ 16 Q}{r^3} f\left(1-\frac{3M}{r}+\frac{2Q^2}{r^2} +\frac 1 2 \Up\right)\DDs_2\Phi_1 \c  \Phi_2
\eeaa
for $\lambda=\ell(\ell+1) \geq 6$. Denote 
 \bea
 A_1&:=&\frac{f}{r^3}\left( 1-\frac{3M}{r}+\frac{2Q^2}{r^2}  \right)(\lambda-1) -\frac 1 2 \pr_r\left(\Up V_1\right)f -\frac 1 4   \Box_g  w \label{definition-A-1}\\
 A_2&:=&\frac{f}{r^3}\left( 1-\frac{3M}{r}+\frac{2Q^2}{r^2}  \right)(\lambda-4) -\frac 1 2 \pr_r\left(\Up V_2\right)f -\frac 1 4    \Box_g  w \label{definition-A-2}
 \eea
 the coefficients of $|\Phi_1|^2$ and $|\Phi_2|^2$ respectively.

 In order to have positivity of $\EE^{(Y, w, 0)}[\Phi_1, \Phi_2]$ we need the following {\textit{necessary}} conditions in the exterior region:
 \begin{itemize}
 \item {\bf{Condition 1:}} Positivity of the coefficients of the angular derivatives $|\nabb \Phi_1|^2$ and $|\nabb \Phi_2|^2$, i.e. $ f\left( 1-\frac{3M}{r}+\frac{2Q^2}{r^2}  \right) \geq 0$, \\
 \item {\bf{Condition 2:}} Positivity of the coefficients of the radial derivatives $|R \Phi_1|^2$ and $|R \Phi_2|^2$, i.e. $f' > 0$, \\
 \item {\bf{Condition 3:}} Positivity of the coefficients of $|\Phi_1|^2$ and $|\Phi_2|^2$, i.e. $A_1> 0$ and $A_2 > 0$
 \end{itemize}

 We now consider {\textit{sufficient}}  conditions for the positivity of the quadratic form $\EE^{(Y, w, 0)}[\Phi_1, \Phi_2]$. Using \eqref{dot-product-estimate-l} to bound the mixed term, we have
  \beaa
\EE^{(Y, w, 0)}[\Phi_1, \Phi_2]&\geq_s&  f'|R \Phi_1|^2+8 f'|R \Phi_2|^2+A_1 |\Phi_1|^2  +8A_2 |\Phi_2|^2    \\
  &&-\frac{8 Q}{r^4} f \left|1-\frac{3M}{r}+\frac{2Q^2}{r^2} +\frac 1 2 \Up \right|(2 \lambda-4)^{1/2}|\Phi_1||\Phi_2|
\eeaa
Neglecting the terms in $R$ derivative in virtue of Condition 2, the discriminant of the quadratic terms is 
\beaa
\frac{-D}{32}&=&A_1 A_2- \frac{2 Q^2}{r^8} f^2 \left(1-\frac{3M}{r}+\frac{2Q^2}{r^2} +\frac 1 2 \Up \right)^2(2 \lambda-4)
\eeaa
Writing $\lambda=(\lambda-6)+6$, we have 
\beaa
\frac{-D}{32}&=& \left(\frac{ f}{r^3}\left( 1-\frac{3M}{r}+\frac{2Q^2}{r^2}  \right)(\lambda-6)+B_1\right)\left(\frac{f}{r^3}\left( 1-\frac{3M}{r}+\frac{2Q^2}{r^2}  \right)(\lambda-6)+B_2\right)\\
&&- \frac{4 Q^2}{r^8} f^2 \left(1-\frac{3M}{r}+\frac{2Q^2}{r^2} +\frac 1 2 \Up \right)^2(\lambda-6)- \frac{16 Q^2}{r^8} f^2 \left(1-\frac{3M}{r}+\frac{2Q^2}{r^2} +\frac 1 2 \Up \right)^2
\eeaa
where
\bea
B_1&:=& \frac{5 f}{r^3}\left( 1-\frac{3M}{r}+\frac{2Q^2}{r^2}  \right)-\frac 1 2 \pr_r\left(\Up V_1\right)f -\frac 1 4   \Box_g  w \label{def-B1}\\
B_2&:=& \frac{2f}{r^3}\left( 1-\frac{3M}{r}+\frac{2Q^2}{r^2}  \right)-\frac 1 2 \pr_r\left(\Up V_2\right)f -\frac 1 4   \Box_g  w \label{def-B2}
\eea
In particular, 
\beaa
\frac{-D}{32}&=& \frac{ f^2}{r^6}\left( 1-\frac{3M}{r}+\frac{2Q^2}{r^2}  \right)^2(\lambda-6)^2\\
&&+ B_1 B_2- \frac{16 Q^2}{r^8} f^2 \left(1-\frac{3M}{r}+\frac{2Q^2}{r^2} +\frac 1 2 \Up \right)^2\\
&&+(\lambda-6)\Big[ \frac{ f}{r^3}\left( 1-\frac{3M}{r}+\frac{2Q^2}{r^2}  \right) (B_1+B_2)- \frac{4 Q^2}{r^8} f^2 \left(1-\frac{3M}{r}+\frac{2Q^2}{r^2} +\frac 1 2 \Up \right)^2\Big]
\eeaa

In order to have positivity of the discriminant, we have the following {\textit{sufficient}}\footnote{Conditions 4 and 5 are not necessary. For example, one could use the positivity of the $R$ derivative to absorb part of the mixed term for high spherical harmonics. Nevertheless, we prefer to have a unique approach to all frequencies. } conditions in the exterior region:
\begin{itemize}
\item  {\bf{Condition 4:}} Positivity of the third line in the above expression of the discriminant, i.e. $D_1:= \frac{ f}{r^3}\left( 1-\frac{3M}{r}+\frac{2Q^2}{r^2}  \right) (B_1+B_2)- \frac{4 Q^2}{r^8} f^2 \left(1-\frac{3M}{r}+\frac{2Q^2}{r^2} +\frac 1 2 \Up \right)^2 \geq 0$, 
\item  {\bf{Condition 5:}} Positivity of the second line in the above expression of the discriminant, i.e.  $D_2:=B_1 B_2- \frac{16 Q^2}{r^8} f^2 \left(1-\frac{3M}{r}+\frac{2Q^2}{r^2} +\frac 1 2 \Up \right)^2 \geq 0$.
\end{itemize}

If Conditions 1, 2, 3, 4 and 5 are satisfied, then the bulk integral $ \int_{\MM(\tau_1, \tau_2)} \EE^{(Y, w, 0)}[\Phi_1, \Phi_2]$ is a positive definite form containing derivatives of $\Phi_1$ and $\Phi_2$, with degeneracy at the photon sphere for the angular derivative. Through a standard procedure, we can add control of the $T$ derivative with degeneracy at the photon sphere (see for example Proposition 8 in \cite{Giorgi4}). Recalling definition \eqref{def:Mor-bulk} of $\Mor[\Phi_1, \Phi_2](\tau_1, \tau_2)$, we obtain the following.
 \begin{proposition} Let $\Phi_1$ and $\Phi_2$ be a 1-tensor and a symmetric traceless 2-tensor respectively, satisfying the system of coupled wave equations \eqref{box-phi-1} and \eqref{box-phi-2} and supported in $\ell \geq 2$ spherical harmonics.  If Conditions 1, 2, 3, 4 and 5 hold, then 
 \bea
 \int_{\MM(\tau_1, \tau_2)} \EE^{(Y, w, 0)}[\Phi_1, \Phi_2] \ges  \Mor[\Phi_1, \Phi_2](\tau_1, \tau_2)
 \eea
 \end{proposition}
 
 Our goal is to define functions $f$ and $w$, related by $w=  r^{-2}\Up\pr_r \left( r^2 f(r)\right)  $,  which verify Conditions 1, 2, 3, 4 and 5 in the whole exterior region of the spacetime. This is done in the following subsection.

 Once those conditions are proved to hold, by adding the Morawetz estimates to the energy estimates obtained in Section \ref{sec-en}, i.e. considering the triplet
\beaa
(X, w, M):=(Y, w, 0)+\Lambda (T, 0, 0)
\eeaa
for $\Lambda$ big enough, we obtain 
\bea\label{final-morawets}
E[\Phi_1, \Phi_2](\tau)+\Mor[\Phi_1, \Phi_2] (0, \tau) \les E[\Phi_1, \Phi_2](0) \qquad \text{for any $\tau \geq 0$}
\eea

In the remaining of this Section we will prove that  Conditions 1, 2, 3, 4 and 5 are satisfied in the whole exterior region for subextremal Reissner-Nordstr\"om spacetimes.

 \subsection{Construction of the functions $w$ and $f$}
 
Consider Reissner-Nordstr\"om spacetimes with $|Q|<  M$. We collect here the following facts: 
 \begin{itemize}
 \item The event horizon $r_\mathcal{H}=M+\sqrt{M^2-Q^2}$ ranges between $ M < r_\mathcal{H} \leq 2M$. Observe that since we are restricting our analysis in the exterior region, where $r \geq r_\mathcal{H}$, we always have $\Up=1-\frac{2M}{r}+\frac{Q^2}{r^2} \geq 0$. 
 \item The photon sphere $r_P=\frac{3M+\sqrt{9M^2-8Q^2}}{2}$ ranges between $2 M < r_P \leq 3M$. Observe that $r \geq r_P$ if and only if $1-\frac{3M}{r}+\frac{2Q^2}{r^2}\geq 0$.  
 \end{itemize}

 We also define the following notable points which are used in the construction of the Morawetz functions. 
 \begin{itemize}
 \item $r_1:=\frac{5M+\sqrt{25M^2-24Q^2}}{4}<r_P$, which ranges between $\frac{3M}{2} < r_1 \leq \frac{5M}{2}$. Observe that $r \geq r_1$ if and only if $ 1-\frac{5M}{2r}+\frac{3Q^2}{2r^2}\geq 0$. 
 \item $r_2:=2M +\sqrt{4M^2-3 Q^2}>r_P $, which ranges between $3M < r_2 \leq 4M$. Observe that $r \geq r_2$ if and only if $1-\frac{4M}{r}+\frac{3Q^2}{r^2} \geq 0$. 
 \end{itemize}
Inspired\footnote{Our definition of $w$ \eqref{definitionw-2} differs from \cite{stabilitySchwarzschild} in that there $w$ is defined separately in two intervals, as opposed to three, and in one of them $w=\frac{2}{r} \Up$, as opposed to $w=\frac{2}{r^2} \Up$. We modified it in order to obtain positivity of the bulk in the exterior region for the full subextremal range $|Q|<M$.  The definition of $f$ in terms of $w$ is identical to \cite{Stogin} and \cite{stabilitySchwarzschild}.} by \cite{Stogin} and \cite{stabilitySchwarzschild}, we define $w$ as follows:
 \bea \label{definitionw-2}
 w=\begin{cases}
 &  \frac{ 2}{ r_1^2} \Up(r_1):=w_1>0  \qquad  \quad  \mbox{if}  \quad    r< r_1\\
  & \frac {2}{ r^2} \Up(r)  \qquad \qquad \qquad\mbox{if} \, \,\,\quad    r_1 \le r\le r_2 \\
  & \frac{ 2}{ r_2^2} \Up(r_2):=w_2>0  \qquad  \quad  \mbox{if}  \quad    r> r_2
 \end{cases}
 \eea 
 From the condition $w=  r^{-2}\Up\pr_r \left( r^2 f\right)  $ we  define as in \cite{Stogin} and \cite{stabilitySchwarzschild}:
 \bea
 \label{eq:John2-1}
  r^2 f= \int_{r_P} ^r  \frac{r^2}{\Up} w
   \eea
   
   Since $w>0$, by \eqref{eq:John2-1} we notice that $f$ changes sign at the photon sphere. Condition 1 is then always satisfied. 
   
 In what follows,  we will show that Conditions 2, 3, 4 and 5 are satisfied in the whole exterior region. We separate the analysis in the three regions of the spacetime according to the definition of $w$.    
   
   \subsection{The regions $r \leq r_1$ and $r \geq r_2$}
   
   In the regions  $r \leq r_1$ and $r \geq r_2$, the function $w$ is a positive constant. In particular we have
   \beaa
   w=w_1, \qquad f=\frac{w_1}{r^2} \int_{r_P}^r \frac{r^2}{\Up} \qquad \text{for $r\leq r_1$}\\
    w=w_2, \qquad f=\frac{w_2}{r^2} \int_{r_P}^r \frac{r^2}{\Up} \qquad \text{for $r\geq r_2$}
   \eeaa
    
   We start by proving that Condition 2 is satisfied in these two regions.

     \begin{lemma}\label{lemma-positivity-f'} Condition 2 is verified for $r \leq r_1$ and for $r\geq r_2$, i.e. for all subextremal Reissner-Nordstr\"om spacetimes with $|Q|<M$ we have
  \beaa
  f'(r)>0 \qquad \text{for $ r \leq  r_1$ or $r \geq r_2$}
  \eeaa
  \end{lemma}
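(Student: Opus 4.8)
The plan is to avoid the explicit evaluation of the integral defining $f$ altogether and work instead from the defining relation $r^2 f(r)=\int_{r_P}^r \frac{s^2}{\Up}\,w\,ds$. Differentiating this gives the pointwise identity $f'=\frac{w}{\Up}-\frac{2f}{r}$, valid on each region where $w$ is the relevant constant. Since $w>0$ throughout the exterior and $\Up>0$ for $r>r_{\mathcal H}$, the sign of $f$ itself is dictated by the photon sphere: for $r<r_P$ the backward integral forces $f<0$, while for $r>r_P$ one has $f>0$. This dichotomy is what separates the two regions in difficulty.

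The region $r\le r_1$ is then immediate. Here $r<r_P$, so $f<0$, and in $f'=\frac{w_1}{\Up}-\frac{2f}{r}$ both terms are manifestly nonnegative (recall $w=w_1>0$ is constant there and $\Up>0$). Hence $f'>0$ with no further computation.

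The region $r\ge r_2$ is the substantive case, because there $f>0$ and the competing term $-\frac{2f}{r}$ can no longer be dismissed. The idea is to study $g(r):=r^3 f'(r)$ rather than $f'$ directly. Differentiating and using $(r^2 f)'=\frac{r^2 w_2}{\Up}$ (valid since $w=w_2$ is constant on this region) produces the clean expression $g'(r)=\frac{r^2 w_2}{\Up^2}\big(\Up-r\Up'\big)$. The decisive structural fact is that $\Up-r\Up'=1-\frac{4M}{r}+\frac{3Q^2}{r^2}$, which is exactly the quantity whose larger root is $r_2$; by the defining property of $r_2$ recalled above, $\Up-r\Up'\ge 0$ precisely for $r\ge r_2$. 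Thus $g$ is nondecreasing on $[r_2,\infty)$, and it suffices to verify positivity at the single endpoint $r=r_2$. For that endpoint I would compute $f(r_2)$ using the middle-region form of $w$: since $w=\frac{2\Up}{s^2}$ on $[r_P,r_2]$, the integrand $\frac{s^2}{\Up}w$ collapses to the constant $2$, giving $r_2^2 f(r_2)=\int_{r_P}^{r_2}2\,ds=2(r_2-r_P)$. Combined with $\frac{w_2}{\Up(r_2)}=\frac{2}{r_2^2}$ (from the definition of $w_2$), the identity $f'=\frac{w}{\Up}-\frac{2f}{r}$ yields $f'(r_2)=\frac{2(2r_P-r_2)}{r_2^3}>0$, the positivity following from $r_2\le 4M<2r_P$ (the last inequality because $r_P>2M$). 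Therefore $g(r_2)>0$, and monotonicity of $g$ propagates $f'>0$ to all $r\ge r_2$.

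The main obstacle is precisely this region $r\ge r_2$: a naive attempt to prove $f'>0$ by crudely bounding $\int_{r_P}^r \frac{s^2}{\Up}$ runs into borderline inequalities (one finds equality at $Q=0$ and again in the extremal limit), so such estimates are too lossy. What makes the argument work is the observation that the derivative of $g=r^3 f'$ factors exactly through $\Up-r\Up'$, reducing everything to the defining algebraic property of $r_2$, together with the essentially free evaluation of the boundary value $f'(r_2)$ via the explicit middle-region expression for $w$. The whole scheme is uniform over $|Q|<M$ and does not degenerate as $|Q|\to M$, since both $r_2<2r_P$ and $\Up>0$ in the exterior persist up to the extremal limit.
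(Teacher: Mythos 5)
Your proof is correct and is essentially the paper's argument: the decisive step for $r\ge r_2$ --- monotonicity of $r^3 f'$ via $\partial_r(r^3f')=w_2\,r^2\,\Up^{-2}\left(1-\tfrac{4M}{r}+\tfrac{3Q^2}{r^2}\right)\ge 0$ for $r\ge r_2$, combined with positivity at the single endpoint $r=r_2$ --- is exactly the computation in the paper's proof. Your two deviations are harmless streamlinings rather than a different route: for $r\le r_1$ you simply note that both terms of $f'=\tfrac{w_1}{\Up}-\tfrac{2f}{r}$ are positive because $f<0$ below the photon sphere (the paper reaches the same conclusion after an additional, logically unnecessary integral bound), and at the endpoint you evaluate $f'(r_2)=\tfrac{2(2r_P-r_2)}{r_2^3}>0$ exactly from the middle-region formula $f=\tfrac{2(r-r_P)}{r^2}$, whereas the paper instead bounds $\int_{r_P}^{r}\tfrac{s^2}{\Up}\,ds$ to get $f'>0$ on all of $[r_2,2r_P)$; both verifications rest on the same inequality $r_2\le 4M<2r_P$.
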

  \begin{proof} 
   Since $f$ vanishes at $r=r_P$ we have,
\beaa
f' &=& r^{-2} (r^2 f)' - 2 r^{-3}(r^2 f)= r^{-2} (r^2 f)' -2 r^{-3}\int_{r_P}^{r} (r^2 f)' 
\eeaa
Recall that $w=r^{-2}\Up (r^2 f)'=w_1>0$ for $r\leq r_1$, and $w=r^{-2}\Up (r^2 f)'=w_2>0$ for $r \geq r_2$. We deduce
\beaa
(r^2 f)'=\frac{r^2 }{\Up}w_1 \qquad \text{for $r \leq r_1$,}  \qquad (r^2 f)'=\frac{r^2 }{\Up}w_2 \qquad \text{for $r \geq r_2$}
\eeaa
Consider the region $r \leq r_1$. We write
\beaa
f' &=&  \frac{1 }{\Up}w_1 +2w_1 r^{-3}\int_{r}^{r_P} \frac{r^2 }{\Up}=w_1\left(\frac{1 }{\Up}+2 r^{-3}\int_{r}^{r_P} \frac{r^2 }{\Up}\right)
\eeaa
Observe that the function $\frac{r^2}{\Up}$ is increasing if $r>r_P$ and decreasing otherwise. Indeed, 
\bea\label{derivative-r-2-Up}
\pr_r\left(\frac{r^2}{\Up} \right)&=& \frac{1}{\Up^2} \left(2r \Up-r^2 \Up' \right)=\frac{2r}{\Up^2} \left(1-\frac{3M}{r}+\frac{2Q^2}{r^2} \right)
\eea
The integrand $\frac{r^2}{\Up}$ is therefore decreasing for $r \leq r_1 < r_P$. 
We bound the integral of a decreasing function over an interval by the value of the function at the right end of the interval times the length of the interval. Similarly $\frac{1}{\Up}$ is everywhere decreasing. We obtain
\beaa
f' &>&w_1\left(\frac{1 }{\Up(r_P)}+2 r_P^{-3}(r_P-r) \frac{r_P^2 }{\Up(r_P)}\right)=\frac{w_1}{r_P\Up(r_P)}\left(3 r_P-2r \right)
\eeaa
which is positive for $r \leq  r_1 < r_P < \frac 3 2 r_P$.

Consider now the region $r \geq r_2$. We have 
\beaa
f' &=&w_2 \left(  \frac{1}{\Up}  -2  r^{-3}\int_{r_P}^{r} \frac{r^2}{\Up} \right)
\eeaa
The integrand $\frac{r^2}{\Up}$ is increasing for $r \geq r_2$. We bound the integral of an increasing function from above by the value of the function at the right end of the interval times the length of the interval, i.e. $\int_{r_P}^{r} \frac{r^2}{\Up} \leq \frac{r^2}{\Up}(r-r_P)$. 
This gives
\beaa
f' &\geq &w_2 \left(  \frac{1}{\Up}  -2  \frac{1}{r\Up}(r-r_P) \right)=\frac{w_2}{r\Up} \left(  -r  +2  r_P \right)
\eeaa
which is positive for $r < 2r_P$, which contains $r=r_2$. In particular, $(r^3 f')|_{r=r_2}>0$. 
Using \eqref{eq:John2-1} written as $\pr_r(r^2 f)=\frac{r^2}{\Up} w$, we have 
\beaa
 r^2 \pr_r \left(r\frac{w}{\Up}\right)&=&r^2 \left[ r^{-1} \pr_r \left(\frac{r^2}{ \Up} w \right)- r^{-2} \frac{r^2}{ \Up} w \right]=  r  \pr_r \left(\frac{r^2}{ \Up} w \right)- \frac{r^2}{ \Up} w\\
 &=& r \pr_r\pr_r(r^2 f)-\pr_r(r^2 f) = 3 r^2 f'+ r^3 f''=\pr_r( r^3 f')
 \eeaa
Therefore
   \beaa
    \pr_r(r^3 f' )=r^2 \pr_r \left(r\frac{w}{\Up}\right)   = w_2 r^2\pr_r\left (\frac{r}{ \Up}\right) 
    \eeaa
Since
   \beaa
   \pr_r\left(\frac{r}{\Up} \right)&=& \frac{1}{\Up^2} \left( \Up-r \Up' \right)= \frac{1}{\Up^2} \left( 1-\frac{4M}{r}+\frac{3Q^2}{r^2}\right)
   \eeaa
    we have
   \beaa
      \pr_r(r^3 f' ) \geq 0 \qquad \text{for $r \geq r_2$}
   \eeaa
   This implies $r^3 f' \geq (r^3 f')|_{r=r_2}>0$. This concludes the proof of the lemma. 

  \end{proof}

We now prove that Condition 3 is verified, i.e. the coefficients of the zeroth order terms are positive. 

  \begin{lemma}\label{lemmafirst-cond3-int} Condition 3 is verified for $r \leq r_1$ and for $r \geq r_2$, i.e. for all subextremal Reissner-Nordstr\"om spacetimes with $|Q| < M$ we have 
\beaa
 A_1 \geq 0, \qquad A_2 \geq 0 \qquad \text{ for $r \leq r_1$ or $r \geq r_2$}
\eeaa
\end{lemma}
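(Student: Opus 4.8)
The plan is to reduce the two inequalities $A_1\ge 0$, $A_2\ge 0$ to the single-frequency quantities $B_1,B_2$ of \eqref{def-B1}--\eqref{def-B2}, and then to a pair of polynomial sign conditions. Comparing \eqref{definition-A-1}--\eqref{definition-A-2} with \eqref{def-B1}--\eqref{def-B2}, the potential and $\Box_\g w$ terms coincide, so for $i=1,2$
\[
A_i=B_i+\frac{f}{r^3}\Big(1-\frac{3M}{r}+\frac{2Q^2}{r^2}\Big)(\lambda-6).
\]
Since $\Phi_1,\Phi_2$ are supported on $\ell\ge 2$ we have $\lambda\ge 6$, and by Condition 1 the factor $f\big(1-\tfrac{3M}{r}+\tfrac{2Q^2}{r^2}\big)$ is non-negative; hence the correction is non-negative and it suffices to prove $B_1\ge 0$ and $B_2\ge 0$ in the two regions. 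This reduction is valid for any admissible $w$, and only the evaluation of $B_i$ below uses that $w$ is a positive constant there, so that $\Box_\g w=0$.

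Using $\Box_\g w=0$ I would insert \eqref{potential-1}--\eqref{potential-2}, compute $\pr_r(\Up V_i)$, and clear denominators to write $B_i=r^{-7}\,f\,P_i(r)$ for explicit quartics $P_1,P_2$ with positive leading coefficient (for instance $P_1=6r^4-21Mr^3+(24Q^2+8M^2)r^2-35MQ^2 r+18Q^4$). By the integral representation \eqref{eq:John2-1} with $w>0$, one has $f<0$ on $r\le r_1$ and $f>0$ on $r\ge r_2$, since $r_1<r_P<r_2$. Therefore $B_i\ge 0$ is equivalent to the statements
\[
P_i(r)\le 0 \ \text{ for } r_{\mathcal{H}}\le r\le r_1,\qquad P_i(r)\ge 0 \ \text{ for } r\ge r_2 .
\]

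For $r\ge r_2$ the sign is clean: dividing $P_i$ by the quadratic $r^2-4Mr+3Q^2$ that defines $r_2$ gives $P_i=(r^2-4Mr+3Q^2)\,G_i(r)+L_i(r)$, where $G_i$ has negative discriminant (hence is manifestly positive) and $L_i$ is linear. As $r^2-4Mr+3Q^2\ge 0$ exactly for $r\ge r_2$, it remains to check $L_i\ge 0$ there; evaluating $L_i$ at the worst point $r=r_2>2M$ and using $|Q|<M$ reduces both cases to the elementary inequality $8M^2-5Q^2>0$. The analogous attempt for $r\le r_1$ would use the quadratic $2r^2-5Mr+3Q^2$ defining $r_1$, which is $\le 0$ precisely on the relevant interval, together with $r\ge r_{\mathcal{H}}=M+\sqrt{M^2-Q^2}$.

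The main obstacle is exactly this near-horizon region $r_{\mathcal{H}}\le r\le r_1$. There the polynomial division by $2r^2-5Mr+3Q^2$ does \emph{not} produce a manifestly-signed cofactor, and the linear remainder changes sign, so the simple factorization that works for $r\ge r_2$ breaks down. The positivity must instead be extracted by treating $P_i\le 0$ as a constrained inequality in the three parameters $(r,M,Q)$, combining $2r^2-5Mr+3Q^2\le 0$ with the horizon bound $r\ge r_{\mathcal{H}}$ and the strict subextremality $|Q|<M$ simultaneously. This is the step where $|Q|<M$ (rather than mere smallness of $Q$) is genuinely needed, and where the bulk of the computational work lies.
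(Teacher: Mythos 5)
Your reduction and algebra are correct as far as they go: the identity $A_i=B_i+\frac{f}{r^3}\left(1-\frac{3M}{r}+\frac{2Q^2}{r^2}\right)(\lambda-6)\ge B_i$ for $\lambda\ge 6$ is valid, the reformulation $B_i=r^{-7}f\,P_i(r)$ with $\Box_\g w=0$ is exactly right (your $P_1=6r^4-21Mr^3+(24Q^2+8M^2)r^2-35MQ^2r+18Q^4$ is the correct quartic), and your treatment of $r\ge r_2$ — division by $r^2-4Mr+3Q^2$ with a positive cofactor and a linear remainder that is positive at $r=2M<r_2$, reducing to $8M^2-5Q^2>0$ — checks out. (The paper itself takes a marginally different but equivalent reduction: it drops the whole first term of $A_i$ by Condition 1 and shows $-\frac 12\pr_r(\Up V_i)$ has the same sign as $f$ in each region.) The genuine gap is the near-horizon region $r_{\mathcal{H}}\le r\le r_1$: you correctly reduce it to the claim $P_i(r)\le 0$ there, but then you explicitly defer the verification, stating that it ``must be extracted by treating $P_i\le 0$ as a constrained inequality in the three parameters $(r,M,Q)$'' and that this is ``where the bulk of the computational work lies,'' without carrying it out. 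That is half of the lemma — and the half where, as you note, the structure of the subextremal range matters — so the proposal is an incomplete proof.

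Moreover, the deferred step does not require any heavy constrained-inequality analysis: the paper closes it with a term-by-term sign decomposition, which in your polynomial language is the identity
\beaa
P_1&=&\left(5r^2+r^2\Up\right)\left(r^2-3Mr+2Q^2\right)-\left(Mr-Q^2\right)\left(r^2-2Mr+Q^2\right)+5Q^2\left(2r^2-5Mr+3Q^2\right),
\eeaa
with the analogue $P_2=\left(2r^2+4r^2\Up\right)\left(r^2-3Mr+2Q^2\right)-4\left(Mr-Q^2\right)\left(r^2-2Mr+Q^2\right)+2Q^2\left(2r^2-5Mr+3Q^2\right)$, where $r^2\Up=r^2-2Mr+Q^2$. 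On $r_{\mathcal{H}}\le r\le r_1$ every summand is non-positive: the first because $r\le r_1<r_P$ gives $r^2-3Mr+2Q^2\le 0$ while the cofactor $5r^2+r^2\Up$ (resp. $2r^2+4r^2\Up$) is manifestly non-negative on the exterior; the second because $r^2\Up\ge 0$ and $Mr-Q^2>M^2-Q^2>0$, using $r\ge r_{\mathcal{H}}>M$ and $|Q|<M$; the third because $2r^2-5Mr+3Q^2\le 0$ is precisely the condition $r\le r_1$. Hence $P_i\le 0$ on $[r_{\mathcal{H}},r_1]$, and since $f\le 0$ there this yields $B_i\ge 0$, closing the case you left open. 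Until you supply this decomposition (or an equivalent verification), the lemma remains unproven for $r\le r_1$; note also that your diagnosis overstates the difficulty, since the needed inequality follows from the same three elementary facts ($r_1<r_P$, $r_{\mathcal{H}}>M$, and the defining quadratic of $r_1$) that you already used elsewhere.
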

   \begin{proof} Because of Condition 1, the first term in the definition of $A_1$ and $A_2$ in \eqref{definition-A-1} and \eqref{definition-A-2} is always positive. For $r \leq r_1$ or $r \geq r_2$, $w$ is constant, therefore $\square_g w=0$.  This gives
   \beaa
   A_1 \geq -\frac 1 2 \pr_r\left(\Up V_1\right) f \qquad      A_2 \geq -\frac 1 2 \pr_r\left(\Up V_2\right) f
   \eeaa

   Since $f \leq 0$ for $ r \leq r_P$ and $f \geq 0$ for $r \geq r_P$, we are only left to prove that for $r \leq r_1$ then  $-\frac 1 2 \pr_r\left(\Up V_1\right), -\frac 1 2 \pr_r\left(\Up V_2\right) \leq 0$  and  for $r \geq r_2$ then $-\frac 1 2 \pr_r\left(\Up V_1\right), -\frac 1 2 \pr_r\left(\Up V_2\right) \geq 0$.

   Recall that 
\bea\label{potentials-derivatives}
\begin{split}
V_1&=\frac{1}{r^2}\left(1-\frac{2M}{r}+\frac{6Q^2}{r^2} \right)=\frac{1}{r^2}\Up+\frac{5Q^2}{r^4} ,\quad  V_2=\frac{4}{r^2}\left( 1-\frac{2M}{r}+\frac{3Q^2}{2r^2}\right)=\frac{4}{r^2} \Up+\frac{2Q^2}{r^4} \\
V'_1&=-\frac{2}{r^3}\left(1-\frac{3M}{r}+\frac{12Q^2}{r^2} \right), \qquad V'_2=-\frac{8}{r^3}\left( 1-\frac{3M}{r}+\frac{3Q^2}{r^2}\right) 
\end{split}
\eea
Since $\pr_r \Up=\frac{2M}{r^2}-\frac{2Q^2}{r^3}$, we compute
\beaa
 \pr_r\left(\Up V_1\right)&=&\left(\frac{2M}{r^2}-\frac{2Q^2}{r^3}\right) \left(\frac{1}{r^2}\Up+\frac{5Q^2}{r^4}  \right)-\Up \frac{2}{r^3}\left(1-\frac{3M}{r}+\frac{12Q^2}{r^2} \right)\\
 &=&-\frac{2}{r^3}\left(1-\frac{4M}{r}+\frac{13Q^2}{r^2}   \right)\Up+\frac{10Q^2}{r^4}\left(\frac{M}{r^2}-\frac{Q^2}{r^3} \right)\\
  \pr_r\left(\Up V_2\right)&=&-\frac{8}{r^3}\left(1-\frac{4M}{r}+\frac{4Q^2}{r^2}\right) \Up+\frac{4Q^2}{r^4}\left(\frac{M}{r^2}-\frac{Q^2}{r^3}\right)
   \eeaa
Consider the region $r\leq r_1$.   Writing in the above expressions respectively
\beaa
1-\frac{4M}{r}+\frac{13Q^2}{r^2}  &=& \left(1-\frac{3M}{r}+\frac{2Q^2}{r^2} \right)-\left(\frac{M}{r}-\frac{Q^2}{r^2} \right)+\frac{10Q^2}{r^2} \\
1-\frac{4M}{r}+\frac{4Q^2}{r^2}  &=& \left(1-\frac{3M}{r}+\frac{2Q^2}{r^2} \right)-\left(\frac{M}{r}-\frac{Q^2}{r^2} \right)+\frac{Q^2}{r^2} 
\eeaa
 we obtain 
  \bea\label{potenti-r1}
  \begin{split}
-\frac 1 2 \pr_r\left(\Up V_1\right) &=\frac{1}{r^3}\left(1-\frac{4M}{r}+\frac{13Q^2}{r^2}   \right)\Up-\frac{5Q^2}{r^4}\left(\frac{M}{r^2}-\frac{Q^2}{r^3} \right)\\
&= \frac{1}{r^3}\left(1-\frac{3M}{r}+\frac{2Q^2}{r^2} \right)\Up -\frac{1}{r^3}\left(\frac{M}{r}-\frac{Q^2}{r^2} \right)\Up+\frac{5Q^2}{r^5}  \left( 2\Up-\frac{M}{r}+\frac{Q^2}{r^2} \right)
\end{split}
\eea
and
\bea\label{potenti-r2}
\begin{split}
-\frac 1 2   \pr_r\left(\Up V_2\right)&=\frac{4}{r^3}\left(1-\frac{4M}{r}+\frac{4Q^2}{r^2}\right) \Up-\frac{2Q^2}{r^4}\left(\frac{M}{r^2}-\frac{Q^2}{r^3}\right)\\
&=\frac{4}{r^3} \left(1-\frac{3M}{r}+\frac{2Q^2}{r^2} \right)\Up -\frac{4}{r^3} \left(\frac{M}{r}-\frac{Q^2}{r^2} \right)\Up+\frac{2Q^2}{r^5}  \left( 2\Up-\frac{M}{r}+\frac{Q^2}{r^2} \right)
\end{split}
   \eea
   The first term in both expressions is negative since $r\leq r_1 <r_P$. The second term can be written as $   -\frac{1}{r^5} \left(Mr-Q^2 \right)\Up$, 
   and since $r_\mathcal{H} > M$, we have $Mr-Q^2>M^2-Q^2>0$. This proves the negativity of the second term in both expressions. The last term is given by $2\Up-\frac{M}{r}+\frac{Q^2}{r^2}= 2\left(1-\frac{5M}{2r}+\frac{3Q^2}{2r^2}\right)$, 
which is negative since $r \leq r_1$.

   Consider the region $r \geq r_2$. Writing  in the expressions above respectively 
   \beaa
   1-\frac{4M}{r}+\frac{13Q^2}{r^2}  &=&\left(1-\frac{4M}{r}+\frac{3Q^2}{r^2}\right) +\frac{10Q^2}{r^2}, \qquad 1-\frac{4M}{r}+\frac{4Q^2}{r^2}  = \left(1-\frac{4M}{r}+\frac{3Q^2}{r^2}\right)+\frac{Q^2}{r^2}
   \eeaa
   we obtain 
     \bea\label{potential-r21}
     \begin{split}
-\frac 1 2 \pr_r\left(\Up V_1\right) &=\frac{1}{r^3}\left(1-\frac{4M}{r}+\frac{13Q^2}{r^2}   \right)\Up-\frac{5Q^2}{r^4}\left(\frac{M}{r^2}-\frac{Q^2}{r^3} \right)\\
&=\frac{1}{r^3}\left(1-\frac{4M}{r}+\frac{3Q^2}{r^2}     \right)\Up+\frac{5Q^2}{r^5} \left(2\Up-\frac{M}{r}+\frac{Q^2}{r^2} \right)
\end{split}
\eea
and
\bea\label{potential-r22}
\begin{split}
-\frac 1 2   \pr_r\left(\Up V_2\right)&=\frac{4}{r^3}\left(1-\frac{4M}{r}+\frac{4Q^2}{r^2}\right) \Up-\frac{2Q^2}{r^4}\left(\frac{M}{r^2}-\frac{Q^2}{r^3}\right)\\
&=\frac{4}{r^3}\left(1-\frac{4M}{r}+\frac{3Q^2}{r^2}\right) \Up+\frac{2Q^2}{r^5} \left(2\Up-\frac{M}{r}+\frac{Q^2}{r^2} \right)
\end{split}
   \eea
The first term in the both expression is positive for $r\geq r_2$ and the second term is positive for $r\geq r_2 > r_1$. This concludes the proof of the lemma.

   \end{proof}
   
   We now prove that Condition 4 is verified, i.e. that $D_1$ is positive.

     \begin{lemma}\label{lemmafirst-cond4-int} Condition 4 is verified for $r \leq r_1$ and for $r\geq r_2$, i.e. for all subextremal Reissner-Nordstr\"om spacetimes with $|Q| <  M$ we have 
\beaa
D_1 \geq 0 \qquad \text{ for $r \leq r_1$ or $r\geq r_2$}
\eeaa
\end{lemma}
\begin{proof} Recall the definitions \eqref{def-B1} and \eqref{def-B2} of $B_1$ and $B_2$.  Then 
\beaa
B_1+B_2= f \left(\frac{7}{r^3}\left( 1-\frac{3M}{r}+\frac{2Q^2}{r^2}  \right) -\frac 1 2 \pr_r\left(\Up V_1\right) -\frac 1 2 \pr_r\left(\Up V_2\right) \right)
\eeaa
 therefore the expression for $D_1$ reads
\beaa
D_1&=& \frac{ f}{r^3}\left( 1-\frac{3M}{r}+\frac{2Q^2}{r^2}  \right) (B_1+B_2)- \frac{4 Q^2}{r^8} f^2 \left(1-\frac{3M}{r}+\frac{2Q^2}{r^2} +\frac 1 2 \Up \right)^2\\
 &=& \frac{ f^2}{r^3}\left( 1-\frac{3M}{r}+\frac{2Q^2}{r^2}  \right) \left( \frac{7}{r^3}\left( 1-\frac{3M}{r}+\frac{2Q^2}{r^2}  \right) -\frac 1 2 \pr_r\left(\Up V_1\right) -\frac 1 2 \pr_r\left(\Up V_2\right)\right)\\
 &&- \frac{4 Q^2}{r^8} f^2 \left(1-\frac{3M}{r}+\frac{2Q^2}{r^2} +\frac 1 2 \Up \right)^2
\eeaa
and writing
\beaa
&&- \frac{4 Q^2}{r^8} f^2 \left(1-\frac{3M}{r}+\frac{2Q^2}{r^2} +\frac 1 2 \Up \right)^2\\
&=& - \frac{4 Q^2}{r^8} f^2 \left(1-\frac{3M}{r}+\frac{2Q^2}{r^2}  \right)^2- \frac{4 Q^2}{r^8} f^2 \left(1-\frac{3M}{r}+\frac{2Q^2}{r^2}\right) \Up - \frac{ Q^2}{r^8} f^2  \Up^2
\eeaa
the above becomes
\bea\label{cond-4-reads}
\begin{split}
D_1=&\frac{f^2}{r^6}\left( 7-\frac{4 Q^2}{r^2}\right) \left( 1-\frac{3M}{r}+\frac{2Q^2}{r^2}  \right)^2 +\frac{ f^2}{r^3}\left( 1-\frac{3M}{r}+\frac{2Q^2}{r^2}  \right) \left(  -\frac 1 2 \pr_r\left(\Up V_1\right) -\frac 1 2 \pr_r\left(\Up V_2\right)\right)\\
&- \frac{4 Q^2}{r^8} f^2 \left(1-\frac{3M}{r}+\frac{2Q^2}{r^2}\right) \Up - \frac{ Q^2}{r^8} f^2  \Up^2
\end{split}
\eea
We first consider the region $r \leq r_1$. In this region we use \eqref{potenti-r1} and \eqref{potenti-r2} to write
\beaa
-\frac 1 2 \pr_r\left(\Up V_1\right) -\frac 1 2 \pr_r\left(\Up V_2\right) &=& \frac{5\Up}{r^3}\left( 1-\frac{3M}{r}+\frac{2Q^2}{r^2}  \right) -\frac{5}{r^3}\left(\frac{M}{r}-\frac{Q^2}{r^2} \right)\Up+\frac{7Q^2}{r^5}  \left( 2\Up-\frac{M}{r}+\frac{Q^2}{r^2} \right)
\eeaa
Factorizing out the positive factor $\frac{f^2}{r^6}$, \eqref{cond-4-reads} becomes
\beaa
&&\left( 7+5\Up -\frac{4 Q^2}{r^2}\right) \left( 1-\frac{3M}{r}+\frac{2Q^2}{r^2}  \right)^2  +  \frac{ 5\Up }{r^2}\left( -1+\frac{3M}{r}-\frac{2Q^2}{r^2}  \right) \left(Mr-Q^2\right) \\
&& \frac{7Q^2}{r^2} \left( 1-\frac{3M}{r}+\frac{2Q^2}{r^2}  \right)  \left( 2\Up-\frac{M}{r}+\frac{Q^2}{r^2} \right)   +\frac{4 Q^2}{r^2}\Up\left(-1+\frac{3M}{r}-\frac{2Q^2}{r^2}\right)   -\frac{ Q^2}{r^2} \Up^2
\eeaa
The first term is positive because it can be bounded from below by $7- \frac{4Q^2}{r^2}=\frac{1}{r^2} (7r^2-4Q^2)$, which is always positive for $r\geq r_\mathcal{H} >M$. All the other terms, except the last one, are clearly positive for $ r \leq r_1$. 
Observe that for $ r\leq r_1$, we have the following bounds:
\bea
1-\frac{2M}{r}+\frac{Q^2}{r^2} &=& \left( 1-\frac{5M}{2r}+\frac{3Q^2}{2r^2}\right)+\frac{M}{2r}-\frac{Q^2}{2r^2}\leq \frac {1}{ 2r^2}\left( Mr-Q^2\right)\label{bound-h-close} \\
-1+\frac{3M}{r}-\frac{2Q^2}{r^2}&=&\left( -1+\frac{5M}{2r}-\frac{3Q^2}{2r^2}\right)+\frac{M}{2r}-\frac{Q^2}{2r^2}\geq \frac {1}{ 2r^2}\left( Mr-Q^2\right)\label{bound-ph-close}
\eea
The last two terms together are positive:  indeed, using \eqref{bound-h-close} and \eqref{bound-ph-close}, we obtain
\beaa
\frac{4 Q^2}{r^8}\Up\left(-1+\frac{3M}{r}-\frac{2Q^2}{r^2}\right) -\frac{ Q^2}{r^8} \Up^2  &\geq&\frac{4 Q^2}{r^8}\Up\left(\frac {1}{ 2r^2}\left( Mr-Q^2\right) \right)  -\frac{ Q^2}{r^8} \Up^2   \\
&=&   \frac{ Q^2}{r^{8}}\Up\left(\frac{2}{r^2}\left( Mr-Q^2\right)-\Up \right) \geq  \frac{ Q^2}{r^{8}}\Up\left(\frac{3}{2r^2}\left( Mr-Q^2\right)\right) 
\eeaa
which proves Condition 4 for $r \leq r_1$. 

We now consider the region $r \geq r_2$. In this region we use \eqref{potential-r21} and \eqref{potential-r21} to write
\beaa
-\frac 1 2 \pr_r\left(\Up V_1\right) -\frac 1 2 \pr_r\left(\Up V_2\right) &=&5\left(1-\frac{4M}{r}+\frac{3Q^2}{r^2}     \right)\Up+\frac{7Q^2}{r^2} \left(2\Up-\frac{M}{r}+\frac{Q^2}{r^2} \right)
\eeaa
Factorizing out the positive factor $\frac{f^2}{r^6}$, \eqref{cond-4-reads} becomes
\beaa
&&\left(7- \frac{4 Q^2}{r^2} \right)\left( 1-\frac{3M}{r}+\frac{2Q^2}{r^2}  \right)^2+5\Up\left( 1-\frac{3M}{r}+\frac{2Q^2}{r^2}  \right) \left(1-\frac{4M}{r}+\frac{3Q^2}{r^2}     \right)\\
&&+\frac{7Q^2}{r^2}\left( 1-\frac{3M}{r}+\frac{2Q^2}{r^2}  \right)  \left(2\Up-\frac{M}{r}+\frac{Q^2}{r^2} \right)- \frac{ Q^2}{r^2}   \Up^2- \frac{4 Q^2}{r^2}  \left(1-\frac{3M}{r}+\frac{2Q^2}{r^2}\right)  \Up 
\eeaa
The first line is clearly positive. The second line can be arranged to be
\beaa
\frac{Q^2}{r^2}\left( 1-\frac{3M}{r}+\frac{2Q^2}{r^2}  \right)  \left(10\Up-\frac{7M}{r}+\frac{7Q^2}{r^2} \right)- \frac{ Q^2}{r^2}   \Up^2
\eeaa
Writing $ 1-\frac{3M}{r}+\frac{2Q^2}{r^2}=\Up -\frac{M}{r}+\frac{Q^2}{r^2}$ and factorizing out $\frac{Q^2}{r^2}$, we have 
\beaa
&&\left( \Up -\frac{M}{r}+\frac{Q^2}{r^2}  \right)  \left(10\Up-\frac{7M}{r}+\frac{7Q^2}{r^2} \right)-    \Up^2= 9\Up^2+17\Up\left(  -\frac{M}{r}+\frac{Q^2}{r^2}  \right)  +7\left( -\frac{M}{r}+\frac{Q^2}{r^2}  \right)^2\\
&\geq & \Up \left( 9\Up+17\Up\left(  -\frac{M}{r}+\frac{Q^2}{r^2}  \right)  \right)
\eeaa
Since for $ r\geq r_2$, we have 
 \bea\label{estimate-Up-r2}
 \Up= 1-\frac{2M}{r}+\frac{Q^2}{r^2} =1-\frac{4M}{r}+\frac{3Q^2}{r^2} +\frac{2M}{r}-\frac{2Q^2}{r^2}  \geq  \frac{2}{r^2} \left(Mr-Q^2 \right)
 \eea
 the above is positive: $ 9\Up+17\Up\left(  -\frac{M}{r}+\frac{Q^2}{r^2}  \right)  \geq  9\frac{2}{r^2} \left(Mr-Q^2 \right)+17\Up\left(  -\frac{M}{r}+\frac{Q^2}{r^2}  \right) =\frac{1}{r^2} \left(Mr-Q^2 \right)>0$. 
This shows that condition 4 is verified for $r \geq r_2$. This concludes the proof of the lemma.
\end{proof}

We now prove that Condition 5 is verified, i.e. that $D_2$ is positive. 

 \begin{lemma}\label{lemmafirst-cond5-int} Condition 5 is verified for $r \leq r_1$ and for $r \geq r_2$, i.e. for all subextremal Reissner-Nordstr\"om spacetimes with $|Q| < M$ we have 
\beaa
D_2 \geq 0 \qquad \text{ for $r \leq r_1$ or $r\geq r_2$}
\eeaa
\end{lemma}
\begin{proof} Recall the definitions \eqref{def-B1} and \eqref{def-B2} of $B_1$ and $B_2$.  Then Condition 5 reads
 \beaa
 D_2&=&B_1 B_2- \frac{16 Q^2}{r^8} f^2 \left(1-\frac{3M}{r}+\frac{2Q^2}{r^2} +\frac 1 2 \Up \right)^2\\
  &=&f^2 \left( \frac{5}{r^3}\left( 1-\frac{3M}{r}+\frac{2Q^2}{r^2}  \right) -\frac 1 2 \pr_r\left(\Up V_1\right)\right) \left( \frac{2}{r^3}\left( 1-\frac{3M}{r}+\frac{2Q^2}{r^2}  \right) -\frac 1 2 \pr_r\left(\Up V_2\right)\right)\\
  &&- \frac{16 Q^2}{r^8} f^2 \left(1-\frac{3M}{r}+\frac{2Q^2}{r^2} +\frac 1 2 \Up \right)^2
 \eeaa
 The above becomes
 \bea\label{identituy}
 \begin{split}
 D_2&=f^2 \left(\frac{10}{r^6}- \frac{16 Q^2}{r^8}\right)\left( 1-\frac{3M}{r}+\frac{2Q^2}{r^2}  \right)^2\\
 & +f^2 \frac{1}{r^3}\left(2\left( -\frac 1 2 \pr_r\left(\Up V_1\right)\right)+5\left( -\frac 1 2 \pr_r\left(\Up V_2\right)\right)\right) \left( 1-\frac{3M}{r}+\frac{2Q^2}{r^2}  \right)\\
 &+f^2 \left( -\frac 1 2 \pr_r\left(\Up V_1\right)\right) \left(-\frac 1 2 \pr_r\left(\Up V_2\right)\right)- \frac{16 Q^2}{r^8} f^2 \left(1-\frac{3M}{r}+\frac{2Q^2}{r^2}\right) \Up - \frac{ 4Q^2}{r^8} f^2  \Up^2
 \end{split}
 \eea
 We first consider the region $r \leq r_1$. In this region we use \eqref{potenti-r1} and \eqref{potenti-r2} to write
 \beaa
-\pr_r\left(\Up V_1\right) -\frac 5 2 \pr_r\left(\Up V_2\right)&=&\frac{22}{r^3}\left(1-\frac{3M}{r}+\frac{2Q^2}{r^2} \right)\Up -\frac{22}{r^3}\left(\frac{M}{r}-\frac{Q^2}{r^2} \right)\Up +\frac{20Q^2}{r^5}  \left( 2\Up-\frac{M}{r}+\frac{Q^2}{r^2} \right)
 \eeaa
 and 
 \beaa
 && \left( -\frac 1 2 \pr_r\left(\Up V_1\right)\right) \left(-\frac 1 2 \pr_r\left(\Up V_2\right)\right)= \frac{4\Up^2}{r^6}\left(1-\frac{3M}{r}+\frac{2Q^2}{r^2} \right)^2 -\frac{8\Up^2}{r^6}\left(\frac{M}{r}-\frac{Q^2}{r^2} \right) \left(1-\frac{3M}{r}+\frac{2Q^2}{r^2} \right) \\
 &&+\frac{22Q^2\Up}{r^8}  \left( 2\Up-\frac{M}{r}+\frac{Q^2}{r^2} \right) \left(1-\frac{3M}{r}+\frac{2Q^2}{r^2} \right)  \\
 &&+  \frac{4\Up^2}{r^6}\left(\frac{M}{r}-\frac{Q^2}{r^2} \right)^2- \frac{22Q^2\Up}{r^8}  \left( 2\Up-\frac{M}{r}+\frac{Q^2}{r^2} \right) \left(\frac{M}{r}-\frac{Q^2}{r^2} \right)+  \frac{10Q^4}{r^{10}}  \left( 2\Up-\frac{M}{r}+\frac{Q^2}{r^2} \right)^2 
 \eeaa
 Factorizing out $\frac{f^2}{r^6}$, \eqref{identituy} becomes 
 \small
  \beaa
&&\left( 10+22\Up+4\Up^2-\frac{16 Q^2}{r^2}\right)\left(- 1+\frac{3M}{r}-\frac{2Q^2}{r^2}  \right)^2 +\Up \frac{ 22+8\Up}{r^2}\left( -1+\frac{3M}{r}-\frac{2Q^2}{r^2}  \right)\left(Mr-Q^2 \right)\\
&&+\frac{Q^2}{r^2} (20+22\Up) \left( -2\Up+\frac{M}{r}-\frac{Q^2}{r^2} \right)\left(- 1+\frac{3M}{r}-\frac{2Q^2}{r^2}  \right)+ \frac{4 \Up^2}{r^{4}}\left(Mr-Q^2\right)^2-\frac{4 Q^2}{r^2} \Up^2\\
&& +\frac{22Q^2}{r^{4}} \Up \left( -2\Up+\frac{M}{r}-\frac{Q^2}{r^2} \right) \left(Mr-Q^2 \right)+\frac{10Q^4}{r^{4}}  \left( -2\Up+\frac{M}{r}-\frac{Q^2}{r^2} \right)^2 +\frac{16 Q^2}{r^2}\Up\left(-1+\frac{3M}{r}-\frac{2Q^2}{r^2}\right) 
\eeaa
\normalsize
  Using \eqref{bound-ph-close} to bound $ - 1+\frac{3M}{r}-\frac{2Q^2}{r^2}$ in the above, we obtain
   \beaa
   &\geq&\left( \frac{ (5+\Up)(2+4\Up)}{4r^{4}} -\frac{4 Q^2}{r^{6}}\right)\left( Mr-Q^2\right) ^2  +\Up \frac{ 11+4\Up}{r^{4}}\left(Mr-Q^2 \right)^2\\
&&+\frac{Q^2}{r^{4}} (10+11\Up) \left( -2\Up+\frac{M}{r}-\frac{Q^2}{r^2} \right)\left( Mr-Q^2\right) + \frac{4 \Up^2}{r^{4}}\left(Mr-Q^2\right)^2-\frac{4 Q^2}{r^2} \Up^2\\
&& +\frac{22Q^2}{r^{4}} \Up \left( -2\Up+\frac{M}{r}-\frac{Q^2}{r^2} \right) \left(Mr-Q^2 \right)+\frac{10Q^4}{r^{4}}  \left( -2\Up+\frac{M}{r}-\frac{Q^2}{r^2} \right)^2 +\frac{8 Q^2}{r^{4}}\Up\left( Mr-Q^2\right) 
\eeaa
We can arrange the above as
   \beaa
  &=&\left( \frac{ 5+33\Up +18\Up^2}{2r^{4}} -\frac{4 Q^2}{r^{6}}\right)\left( Mr-Q^2\right) ^2 +\frac{4 Q^2}{r^{2}}\Up \left( \frac{2}{r^2}\left( Mr-Q^2\right)  - \Up\right)\\
&&+\frac{Q^2}{r^{4}} (10+33\Up) \left( -2\Up+\frac{M}{r}-\frac{Q^2}{r^2} \right)\left( Mr-Q^2\right) +\frac{10Q^4}{r^{4}}  \left( -2\Up+\frac{M}{r}-\frac{Q^2}{r^2} \right)^2
\eeaa
Using \eqref{bound-h-close} we can bound $\frac{2}{r^2}\left( Mr-Q^2\right)  - \Up \geq \frac{3}{2r^2}\left( Mr-Q^2\right)$, therefore
 \bea\label{proof-positivity-interm}
 \begin{split}
&\geq\left( \frac{ 5+33\Up +18\Up^2}{2r^{4}} -\frac{4 Q^2}{r^{6}}\right)\left( Mr-Q^2\right) ^2 +\frac{6 Q^2}{r^{4}}\Up \left( Mr-Q^2\right)\\
&+\frac{Q^2}{r^{4}} (10+33\Up) \left( -2\Up+\frac{M}{r}-\frac{Q^2}{r^2} \right)\left( Mr-Q^2\right) +\frac{10Q^4}{r^{4}}  \left( -2\Up+\frac{M}{r}-\frac{Q^2}{r^2} \right)^2
\end{split}
\eea
Observe that the second line is always non negative. 
We write the first line as 
\beaa
&&\left( Mr-Q^2\right) \Big[\left( \frac{ 5+33\Up +18\Up^2}{2r^{4}} -\frac{5 Q^2}{2r^{6}}\right)\left( Mr-Q^2\right) +\left(-\frac{6 Q^2}{r^{6}}\right)\left( \frac{Mr}{4}-\frac{Q^2}{4}\right)\\
&&  +\frac{6 Q^2}{r^{4}}\left(1-\frac{2M}{r}+\frac{Q^2}{r^2}\right) \Big]\\
&=&\left( Mr-Q^2\right) \Big[ \left( \frac{ 5+33\Up +18\Up^2}{2r^{4}} -\frac{5 Q^2}{2r^{6}}\right)\left( Mr-Q^2\right)  +\frac{6 Q^2}{r^{4}}\left(1-\frac{9M}{4r}+\frac{5Q^2}{4r^2}\right) \Big]
\eeaa
Observe that the first of the above two terms is always positive since $r^2 -Q^2 \geq 0$ and the second term is positive if $ \frac{9M + \sqrt{81M^2-80Q^2}}{8} \leq r \leq r_1$. In particular using \eqref{proof-positivity-interm} this proves that Condition 5 is verified in  $ \frac{9M + \sqrt{81M^2-80Q^2}}{8} \leq r \leq r_1$. 

If $r \leq \frac{9M + \sqrt{81M^2-80Q^2}}{8}$, we can bound the factor $ -2\Up+\frac{M}{r}-\frac{Q^2}{r^2}$ away from zero, as 
\beaa
 -2\Up+\frac{M}{r}-\frac{Q^2}{r^2}&=& - 2+\frac{5M}{r}-\frac{3Q^2}{r^2}=- 2\left(1-\frac{9M}{4r}+\frac{5Q^2}{4r^2}\right)+\frac{M}{2r}-\frac{Q^2}{2r^2} \geq \frac{1}{2r^2} \left( Mr-Q^2 \right)
\eeaa
From \eqref{proof-positivity-interm}, we therefore obtain
\small
 \beaa
&\geq&\left( \frac{ 5+33\Up +18\Up^2}{2r^{4}} -\frac{4 Q^2}{r^{6}}+\frac{Q^2}{r^{4}} (10+33\Up)\frac{1}{2r^2} \right)\left( Mr-Q^2\right) ^2 +\frac{6 Q^2}{r^{4}}\Up \left( Mr-Q^2\right) +\frac{10Q^4}{4r^{8}}  \left( Mr-Q^2 \right)^2\\
&\geq&\left( \frac{ 5+33\Up +18\Up^2}{2r^{4}} +\frac{ Q^2}{r^{6}}+\frac{33Q^2}{2r^{6}} \Up\right)\left( Mr-Q^2\right) ^2 +\frac{6 Q^2}{r^{4}}\Up \left( Mr-Q^2\right) +\frac{10Q^4}{4r^{8}}  \left( Mr-Q^2 \right)^2
\eeaa
\normalsize
which proves Condition 5 for all $r\leq r_1$.

 We now consider the region $r\geq r_2$. In this region we use \eqref{potential-r21} and \eqref{potential-r22} to write 
 \beaa
&& 2\left( -\frac 1 2 \pr_r\left(\Up V_1\right)\right)+5\left( -\frac 1 2 \pr_r\left(\Up V_2\right)\right)= \frac{22}{r^3}\left(1-\frac{4M}{r}+\frac{3Q^2}{r^2}     \right)\Up+\frac{20Q^2}{r^5} \left(2\Up-\frac{M}{r}+\frac{Q^2}{r^2} \right)
 \eeaa
 and 
 \beaa
 \left( -\frac 1 2 \pr_r\left(\Up V_1\right)\right) \left(-\frac 1 2 \pr_r\left(\Up V_2\right)\right)&=& \frac{4\Up^2}{r^6}\left(1-\frac{4M}{r}+\frac{3Q^2}{r^2}     \right)^2\\
 &&+\frac{22Q^2}{r^8}\left(1-\frac{4M}{r}+\frac{3Q^2}{r^2}     \right) \left(2\Up-\frac{M}{r}+\frac{Q^2}{r^2} \right)\\
 &&+\frac{10Q^4}{r^{10}} \left(2\Up-\frac{M}{r}+\frac{Q^2}{r^2} \right)^2
 \eeaa
 Factorizing out $\frac{f^2}{r^6}$ and neglecting the terms $1-\frac{4M}{r}+\frac{3Q^2}{r^2} \geq 0$,  \eqref{identituy} becomes 
  \beaa
  &&  \left(10- \frac{16 Q^2}{r^2}\right)\left( 1-\frac{3M}{r}+\frac{2Q^2}{r^2}  \right)^2\\
  &&+  \frac{20Q^2}{r^2} \left(2\Up-\frac{M}{r}+\frac{Q^2}{r^2} \right) \left( 1-\frac{3M}{r}+\frac{2Q^2}{r^2}  \right)- \frac{16 Q^2}{r^2}  \left(1-\frac{3M}{r}+\frac{2Q^2}{r^2}\right) \Up - \frac{4 Q^2}{r^2} \Up ^2
 \eeaa
 The term $10- \frac{16 Q^2}{r^2}$ is positive for $r\geq r_P$ in the subextremal range. The second line can be arranged to be, factorizing out  $\frac{ 4Q^2}{r^2}  $, 
 \beaa
&& \left(6\Up-\frac{5M}{r}+\frac{5Q^2}{r^2} \right) \left( 1-\frac{3M}{r}+\frac{2Q^2}{r^2}  \right) - \Up ^2= \left(6\Up-\frac{5M}{r}+\frac{5Q^2}{r^2} \right) \left( \Up-\frac{M}{r}+\frac{Q^2}{r^2}  \right) - \Up ^2\\
&=&5\Up^2+11\Up \left(-\frac{M}{r}+\frac{Q^2}{r^2} \right) +5\left( -\frac{M}{r}+\frac{Q^2}{r^2}  \right)^2
 \eeaa
 Using \eqref{estimate-Up-r2}, we prove that the above is positive. This shows that condition 5 is verified for $r\geq r_2$, and concludes the proof of the lemma.
\end{proof}

 \subsection{The region $r_1\leq r \leq r_2$}
In the region $r_1 \leq r \leq r_2$, the function $w$ is defined to be $w=\frac{ 2}{ r^2} \Up(r)$.
 From \eqref{eq:John2-1} we then obtain 
  \beaa
 r^2 f&=& \int_{r_P} ^r  \frac{r^2}{\Up} w=\int_{r_P} ^{r}  \frac{r^2}{\Up}  \frac{2}{r^2}  \Up=\int_{r_P} ^{r}  2  =2r-2r_P
 \eeaa
  which gives
\bea
 f&=&\frac{2}{r}-\frac{2r_P}{r^2}=\frac{2(r-r_P)}{r^2} \qquad \text{for $r_1 \leq r \leq r_2$}
\eea
Notice that the above implies 
\beaa
f'&=&-\frac{2}{r^2}+2\frac{2r_P}{r^3}=\frac{-2r+4r_P}{r^3} 
\eeaa
Notice that if $r \leq r_2 \leq 2 r_P$, then $-2r+4r_P>0$, therefore
\bea
f'&>&0 \qquad \text{for $r_1 \leq r \leq r_2$}
\eea
This proves that Condition 2 is verified in this region. 
We now check Conditions 3, 4 and 5.

  \begin{lemma}\label{lemmafirst-cond3} Condition 3 is verified in the region $r_1 \leq r  \leq r_2$, i.e. for all subextremal Reissner-Nordstr\"om spacetimes with $|Q| <  M$ we have 
\beaa
 A_1 \geq 0, \qquad A_2 \geq 0 \qquad  \text{for $r_1 \leq r \leq r_2$}
\eeaa
\end{lemma}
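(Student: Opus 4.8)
The plan is to substitute the functions valid on this interval, $f=\frac{2(r-r_P)}{r^2}$ and $w=\frac{2}{r^2}\Up$, into the definitions \eqref{definition-A-1} and \eqref{definition-A-2}, and to reduce the claim to the positivity of an explicit rational function of $r$ whose numerator is polynomial in $M$ and $Q^2$. The first step is the same observation used in Lemma \ref{lemmafirst-cond3-int}: since $\lambda\geq 6$ we have $\lambda-1,\lambda-4>0$, and Condition 1 guarantees $f\left(1-\frac{3M}{r}+\frac{2Q^2}{r^2}\right)\geq 0$ everywhere (both factors vanish and reverse sign precisely at $r_P$). Hence the leading angular term of each $A_i$ is non-negative, and it suffices to establish
\[
-\frac 1 2 \pr_r\left(\Up V_i\right)\, f -\frac 1 4 \Box_\g w \;\geq\; 0,\qquad i=1,2,\quad r_1\leq r\leq r_2,
\]
keeping the discarded angular term in reserve in case the bound degenerates near $r_P$.

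The essential new feature relative to the regions $r\leq r_1$ and $r\geq r_2$ is that $w$ is no longer constant, so $\Box_\g w\neq 0$ and must be computed. I would use the standard expression for the d'Alembertian of a radial function on a spherically symmetric spacetime, $\Box_\g w=\frac{1}{r^2}\pr_r\!\left(r^2\Up\,\pr_r w\right)$; with $w=\frac{2}{r^2}\Up$ this produces an explicit function of $\Up,\Up'$ and $\Up''$, hence a rational function of $r$. For the potential contributions I would reuse the algebraic identities for $\pr_r(\Up V_1)$ and $\pr_r(\Up V_2)$ already recorded in \eqref{potenti-r1} and \eqref{potenti-r2}, which are valid for all $r$. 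After inserting $f=\frac{2(r-r_P)}{r^2}$, the term $-\frac12\pr_r(\Up V_i)\,f$ vanishes to first order at $r_P$ and elsewhere carries the sign of $(r-r_P)\bigl(-\pr_r(\Up V_i)\bigr)$; the point of the construction is that, combined with $-\frac14\Box_\g w$, the total is non-negative, which I would verify by explicit factorization rather than by locating the maximum of $\Up V_i$.

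To finish I would clear the common positive factor (a power of $r$ together with $f$) and rewrite the resulting numerator as a sum of manifestly non-negative terms, using exactly the tools of the previous lemmas: $\Up\geq 0$ on $r\geq r_\HH$, the inequality $Mr-Q^2>M^2-Q^2>0$ coming from $r_\HH>M$, and the two interval constraints $1-\frac{5M}{2r}+\frac{3Q^2}{2r^2}\geq 0$ (from $r\geq r_1$) and $1-\frac{4M}{r}+\frac{3Q^2}{r^2}\leq 0$ (from $r\leq r_2$). I expect the main obstacle to be precisely the $\Box_\g w$ term together with the vanishing of $f$ at $r_P$: at $r=r_P$ the potential term gives no help and one must show $\Box_\g w\leq 0$ there (or absorb its sign using the reserved angular term), and more generally the transparent cancellations of the outer regions are replaced here by a genuine competition between $-\frac12\pr_r(\Up V_i)f$ and $-\frac14\Box_\g w$ that must be won uniformly over the entire subextremal range $|Q|<M$. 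Splitting the interval at $r_P$ into $[r_1,r_P]$ and $[r_P,r_2]$, so that the sign of $f$ is fixed on each piece, is the device I would use to organize this final factorization.
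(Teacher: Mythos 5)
Your opening reduction is where the argument breaks. You discard the entire angular contribution $\frac{f}{r^3}\bigl(1-\frac{3M}{r}+\frac{2Q^2}{r^2}\bigr)(\lambda-1)$ (resp.\ $(\lambda-4)$) from \eqref{definition-A-1}--\eqref{definition-A-2} and propose to prove $-\frac 1 2 \pr_r\left(\Up V_i\right) f -\frac 1 4 \Box_\g w \geq 0$ on $[r_1,r_2]$. That inequality is \emph{false} for $i=1$ when the charge is close to extremal. Using the substitution $\Up=\frac{c}{r^2}(Mr-Q^2)$, $\frac 1 2\le c\le 2$, which is the paper's key device \eqref{explicit-Up}, one finds at $|Q|=M$ (where $r=(1+c)M$, $r_P=2M$)
\[
-\frac 1 2 \pr_r\left(\Up V_1\right) f -\frac 1 4 \Box_\g w \;=\; \frac{M^2}{(c+1)r^8}\left(c^4-c^3+27c^2-33c+10\right)\left(Mr-Q^2\right),
\]
and the quartic is negative for $c$ roughly in $(0.52,0.69)$ (e.g.\ $\approx -0.17$ at $c=0.6$). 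By continuity the failure persists strictly inside the subextremal range: for $Q^2=0.99M^2$ and $c=0.6$ (i.e.\ $r\approx 1.63M$, which lies in $(r_1,r_P)$) the left-hand side is negative. Note also that your fallback --- reinstating the angular term ``near $r_P$'' --- is misdirected twice over: at $r_P$ the discarded term vanishes to second order (both $f$ and $1-\frac{3M}{r}+\frac{2Q^2}{r^2}$ vanish there), so it can offer no help in that neighborhood, and none is needed there since $-\frac 1 4 \Box_\g w>0$ at $r=r_P$; the place where the angular term is indispensable is the middle of $[r_1,r_P]$, where it is bounded away from zero.

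The paper's proof keeps exactly the minimal angular contribution: since $\lambda\ge 6$, it bounds $A_1\ge B_1$ and $A_2\ge B_2$, where $B_1,B_2$ in \eqref{def-B1}--\eqref{def-B2} retain the angular term with coefficients $5$ and $2$, and then proves $B_i\ge 0$. The mechanism is: rewrite $\Box_\g w$ (via $r^{-2}\pr_r(r^2\Up\,\pr_r w)$, as you also propose), $\pr_r(\Up V_i)$ from \eqref{potential-r21}--\eqref{potential-r22}, and $f=\frac{2(r-r_P)}{r^2}$ in terms of $c$ and $Mr-Q^2$; observe the resulting bracket increases as $Q$ decreases, so the worst case is $|Q|=M$; there each $B_i$ collapses to a one-variable polynomial, $11c^4-c^3+7c^2-33c+20$ and $11c^4-19c^3+19c^2-15c+8$, positive on $[\frac 1 2,2]$. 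In this representation the angular term contributes precisely $10(c^2-1)^2$ (resp.\ $4(c^2-1)^2$), which is what rescues the negative quartic displayed above; so the term you dropped is not a convenience but the decisive ingredient, and no factorization of the remaining piece alone, however organized (splitting at $r_P$, using $Mr-Q^2>0$, or the interval constraints), can close the argument.
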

\begin{proof} Since $\lambda \geq 6$, we have that $A_1 \geq B_1$ and $A_2 \geq B_2$, where we recall
\beaa
 B_1&=&\frac{5f}{r^3}\left( 1-\frac{3M}{r}+\frac{2Q^2}{r^2}  \right)  -\frac 1 2 \pr_r\left(\Up V_1\right)f -\frac 1 4   \Box_g  w  \\
 B_2&=&\frac{2f}{r^3}\left( 1-\frac{3M}{r}+\frac{2Q^2}{r^2}  \right)  -\frac 1 2 \pr_r\left(\Up V_2\right)f -\frac 1 4    \Box_g w
 \eeaa 
 We compute $\square_g w$. From $w=\frac {2 }{r^2} \Up(r)$ we obtain
   \beaa
   \pr_r w&=&2\pr_r( \frac {1}{ r^2} \Up)  = 2\left(- 2r^{-3}\Up+ r^{-2}\Up'\right)\\
   &=& 2\left(- 2r^{-3}\left(1-\frac{2M}{r}+\frac{Q^2}{r^2} \right)+ r^{-2}\left(\frac{2M}{r^2}-\frac{2Q^2}{r^2}\right)\right)=-\frac{4}{r^3}\left(1-\frac{3M }{r} +\frac{2 Q^2}{r^2}\right)
   \eeaa
   For a radial function $w=w(r)$ we have $\square_g w=r^{-2} \pr_r(r^2\Up \pr_r w)$ (see Lemma 6 in \cite{Giorgi4}). We therefore compute
   \beaa
    - \pr_r \left( r^2 \Up \pr_rw\right)&=& 4 \pr_r \left( \frac{\Up}{r}\left(1-\frac{3M }{r} +\frac{2 Q^2}{r^2}\right)\right)\\
    &=& 4  \left( \frac{\Up'}{r}\left(1-\frac{3M }{r} +\frac{2 Q^2}{r^2}\right)- \frac{\Up}{r^2}\left(1-\frac{3M }{r} +\frac{2 Q^2}{r^2}\right)+ \frac{\Up}{r}\left(\frac{3M }{r^2} -\frac{4 Q^2}{r^3}\right)\right)\\
    &=& 4  \left( -\frac{1}{r^2}\left(1-\frac{4M}{r}+\frac{3Q^2}{r^2}\right)\left(1-\frac{3M }{r} +\frac{2 Q^2}{r^2}\right)+ \frac{\Up}{r}\left(\frac{3M }{r^2} -\frac{4 Q^2}{r^3}\right)\right)
   \eeaa
We obtain
\beaa
-\frac 1 4  \square_g w&=&  -\frac{1}{r^4}\left(1-\frac{4M}{r}+\frac{3Q^2}{r^2}\right)\left(1-\frac{3M }{r} +\frac{2 Q^2}{r^2}\right)+ \frac{\Up}{r^3}\left(\frac{3M }{r^2} -\frac{4 Q^2}{r^3}\right)
\eeaa
We now evaluate $B_1$ and $B_2$ in this region. For $r_1 \leq r \leq r_2$, we have 
\bea\label{explicit-Up}
\Up= \frac{c}{r^2} \left(Mr-Q^2\right) \qquad \text{ for $\frac 1 2 \leq c \leq 2$}
\eea
where $c=\frac 1 2$ corresponds to $r=r_1$ and $c=2$ corresponds to $r=r_2$. 
We write
\beaa
-\frac 1 4  \square_g w&=&  -\frac{1}{r^8}(c-1)(c-2) \left(Mr-Q^2\right)^2 + \frac{1}{r^8} c \left(Mr-Q^2\right) \left(3M r -4 Q^2\right)\\
&=&  \frac{1}{r^8}\left(-(c-1)(c-2) \left(Mr-Q^2\right) +  c  \left(3M r -4 Q^2\right)\right)\left(Mr-Q^2\right)\\
&=&  \frac{1}{r^8}\left(-(c^2-6c+2) Mr +(c^2-7c+2) Q^2 \right)\left(Mr-Q^2\right)
\eeaa
On the other hand, using \eqref{potential-r21}, we have 
\beaa
-\frac 1 2 \pr_r\left(\Up V_1\right) &=&\frac{1}{r^3}\left(1-\frac{4M}{r}+\frac{3Q^2}{r^2}     \right)\Up+\frac{5Q^2}{r^5} \left(2\Up-\frac{M}{r}+\frac{Q^2}{r^2} \right)\\
&=&\frac{1}{r^3}\left(\frac{c-2}{r^2} \left(Mr-Q^2\right)  \right)\frac{c}{r^2} \left(Mr-Q^2\right)+\frac{5Q^2}{r^5} \left(\frac{2c-1}{r^2} \left(Mr-Q^2\right)\right)\\
&=&\frac{1}{r^7}\left(c(c-2) \left(Mr-Q^2\right)+5Q^2 \left(2c-1\right)\right) \left(Mr-Q^2\right) \\
&=&\frac{1}{r^7}\left(\left(c^2-2c\right) Mr-\left(c^2-12c+5 \right)Q^2\right) \left(Mr-Q^2\right) 
\eeaa
and $\frac{5}{r^3}\left( 1-\frac{3M}{r}+\frac{2Q^2}{r^2}  \right)= \frac{5(c-1)}{r^5} \left(Mr-Q^2\right)$. 
Since $f= \frac{2(r-r_P)}{r^2}$, we have 
\beaa
&&\frac{5f}{r^3}\left( 1-\frac{3M}{r}+\frac{2Q^2}{r^2}  \right)  -\frac 1 2 \pr_r\left(\Up V_1\right)f\\
&=& \frac{1}{r^7}\left(5(c-1)r^2 + \left(c^2-2c\right) Mr-\left(c^2-12c+5 \right)Q^2 \right) \frac{2(r-r_P)}{r^2} \left(Mr-Q^2\right)
\eeaa
and therefore 
\beaa
B_1&=& \frac{1}{r^8}\Big[5 (c-1) r^2\frac{2(r-r_P)}{r}+\left(c(c-2)\frac{2(r-r_P)}{r}-(c^2-6c+2) \right)Mr\\
&&-\left((c^2-12c+5)\frac{2(r-r_P)}{r}-(c^2-7c+2)\right) Q^2 \Big]  \left(Mr-Q^2\right) 
\eeaa
Notice that the above clearly increases as $Q$ decreases. In particular, the extremal case $|Q|=M$ is the worst case scenario in the positivity of the above coefficient. We check that the term in parenthesis is positive at the extremal case, and that will imply that is positive for all subextremal cases $|Q|<M$. 

Notice that equation \eqref{explicit-Up} translates into
\beaa
1-\frac{(2+c)M}{r}+\frac{(1+c)Q^2}{r^2}=0 \qquad \text{ for $\frac 1 2 \leq c \leq 2$}
\eeaa
This gives
\beaa
r&=&\frac{(2+c)M+\sqrt{(2+c)^2M^2-4(1+c)Q^2}}{2}\\
&>&\frac{(2+c)M+\sqrt{(4+4c+c^2)M^2-4(1+c)M^2}}{2}=\frac{(2+2c)M}{2}=(1+c)M
\eeaa
In particular $r=(1+c)M$ at the extremal case $|Q|=M$, where the above becomes 
\beaa
B_1&=& \frac{1}{r^8}\Big[5 (c-1) ((1+c)M)^2\frac{2(c-1)}{c+1}+\frac{1}{c+1}\left(c^4-c^3+27c^2-33c+10 \right) M^2 \Big]  \left(Mr-Q^2\right) \\
&=& \frac{M^2}{(c+1)r^8}\Big[10 (c-1)^2 (1+c)^2+\left(c^4-c^3+27c^2-33c+10 \right)  \Big]  \left(Mr-Q^2\right) \\
&=& \frac{M^2}{(c+1)r^8}\Big[11 c^4-c^3+7c^2-33c+20   \Big]  \left(Mr-Q^2\right) 
\eeaa
which is a positive polynomial for all $c$, and in particular for $\frac 1 2 \leq c \leq 2$.

Similarly we compute $B_2$. Using \eqref{potential-r22}, we have
\beaa
-\frac 1 2   \pr_r\left(\Up V_2\right)&=&\frac{4}{r^3}\left(1-\frac{4M}{r}+\frac{3Q^2}{r^2}\right) \Up+\frac{2Q^2}{r^5} \left(2\Up-\frac{M}{r}+\frac{Q^2}{r^2} \right)\\
&=&\frac{4}{r^3}\left(\frac{c-2}{r^2} \left(Mr-Q^2\right)  \right)\frac{c}{r^2} \left(Mr-Q^2\right)+\frac{2Q^2}{r^5} \left(\frac{2c-1}{r^2} \left(Mr-Q^2\right)\right)\\
&=&\frac{1}{r^7}\left(4c(c-2) \left(Mr-Q^2\right)+2Q^2 \left(2c-1\right)\right) \left(Mr-Q^2\right) \\
&=&\frac{1}{r^7}\left(\left(4c^2-8c\right) Mr-\left(4c^2-12c+2 \right)Q^2\right) \left(Mr-Q^2\right) 
\eeaa
We therefore have
\beaa
B_2&=&  \frac{2 }{r^3}\frac{c-1}{r^2} \left(Mr-Q^2\right)\frac{2(r-r_P)}{r^2}+\frac{1}{r^8}\Big[\left((4c^2-8c)\frac{2(r-r_P)}{r}-(c^2-6c+2) \right)Mr\\
&&-\left((4c^2-12c+2)\frac{2(r-r_P)}{r}-(c^2-7c+2)\right) Q^2 \Big]  \left(Mr-Q^2\right) 
\eeaa
At the extremal case the above becomes
\beaa
B_2&=&\frac{1}{r^8} \Big[  2 (c-1) ((1+c)M)^2\frac{2(c-1)}{c+1}+\frac{1}{c+1}\left(7c^4-19c^3+27c^2-15c+4\right)M^2\Big]  \left(Mr-Q^2\right) \\
&=&\frac{M^2}{(c+1)r^8} \Big[  4 (c-1)^2 (1+c)^2+7c^4-19c^3+27c^2-15c+4\Big]  \left(Mr-Q^2\right) \\
&=&\frac{M^2}{(c+1)r^8} \Big[  11 c^4-19c^3+19c^2-15c+8\Big]  \left(Mr-Q^2\right) 
\eeaa
which is positive for all $c$, and in particular for $\frac 1 2 \leq c \leq 2$. This proves the lemma. 
 \end{proof}

  \begin{lemma}\label{lemmafirst-cond4} Condition 4 is verified in the region $r_1\leq r \leq r_2$, i.e. for all subextremal Reissner-Nordstr\"om spacetimes with $|Q| <  M$ we have 
\beaa
D_1 \geq 0 \qquad \text{ for $r_1\leq r \leq r_2$}
\eeaa
\end{lemma}
\begin{proof} As above, we check the positivity in the extremal case $|Q|=M$. Using the expressions for $B_1$ and $B_2$ obtained in Lemma \ref{lemmafirst-cond3} at the extremal case, we have
\beaa
B_1+B_2&=& \frac{M^2}{(c+1)r^8}\Big[22 c^4-20c^3+26c^2-48c+28 \Big]  \left(Mr-Q^2\right) 
\eeaa
and therefore
\beaa
&& \frac{ f}{r^3}\left( 1-\frac{3M}{r}+\frac{2Q^2}{r^2}  \right) (B_1+B_2)\\
&=&  \frac{(c-1)}{r^5} \frac{M^2}{(c+1)r^9} \frac{2(r-r_P)}{r}\Big[22 c^4-20c^3+26c^2-48c+28 \Big]  \left(Mr-Q^2\right)^2\\
 &=& \frac{M^2}{(c+1)^2r^{14}} 2(c-1)^2\Big[22 c^4-20c^3+26c^2-48c+28 \Big]  \left(Mr-Q^2\right)^2
\eeaa

We now consider the term $- \frac{4 Q^2}{r^8} f^2 \left(1-\frac{3M}{r}+\frac{2Q^2}{r^2} +\frac 1 2 \Up \right)^2$. We have
\beaa
- \frac{4 Q^2}{r^8} f^2 \left(1-\frac{3M}{r}+\frac{2Q^2}{r^2} +\frac 1 2 \Up \right)^2&=& - \frac{4 Q^2}{r^8} f^2 \left(\frac{c-1}{r^2} \left(Mr-Q^2\right)+\frac 1 2\frac{c}{r^2} \left(Mr-Q^2\right) \right)^2\\
&=& - \frac{2 Q^2}{r^8} \left(\frac{2(r-r_P)}{r^2} \right)^2\frac{(3c-2)^2}{r^4} \left(Mr-Q^2\right)^2\\
&=& - \frac{2 Q^2}{r^{14}} \left(\frac{2(r-r_P)}{r} \right)^2(3c-2)^2 \left(Mr-Q^2\right)^2
\eeaa
At the extremal case the above becomes
\beaa
&=& - \frac{2 M^2}{r^{14}} \left(\frac{2(c-1)}{c+1} \right)^2(3c-2)^2 \left(Mr-Q^2\right)^2= - \frac{8 M^2}{(c+1)^2r^{14}}(c-1)^2(3c-2)^2 \left(Mr-Q^2\right)^2
\eeaa
We therefore obtain the polynomial 
\beaa
\Big[22 c^4-20c^3+26c^2-48c+28 \Big]  - 4(3c-2)^2 =22c^4-20c^3-10c^2+12
\eeaa
which is positive for all $c$, and in particular for $\frac 1 2 \leq c \leq 2$.

\end{proof}

 \begin{lemma}\label{lemmafirst-cond5} Condition 5 is verified in the region $r_1 \leq r  \leq r_2$, i.e. for all subextremal Reissner-Nordstr\"om spacetimes with $|Q| < M$ we have 
\beaa
D_2 \geq 0 \qquad \text{ for $r_1\leq r \leq r_2$}
\eeaa
\end{lemma}
\begin{proof} As above, we check the positivity in the extremal case $|Q|=M$. Using the expressions for $B_1$ and $B_2$ obtained in Lemma \ref{lemmafirst-cond3} at the extremal case, we have
\beaa
B_1B_2&=& \frac{M^4}{(c+1)^2r^{16}}\left(11 c^4-c^3+7c^2-33c+20   \right) \left(  11 c^4-19c^3+19c^2-15c+8\right) \left(Mr-Q^2\right)^2
\eeaa
We now consider the term $- \frac{16 Q^2}{r^8} f^2 \left(1-\frac{3M}{r}+\frac{2Q^2}{r^2} +\frac 1 2 \Up \right)^2$. We have
\beaa
- \frac{16 Q^2}{r^8} f^2 \left(1-\frac{3M}{r}+\frac{2Q^2}{r^2} +\frac 1 2 \Up \right)^2&=& - \frac{8 Q^2}{r^{14}} \left(\frac{2(r-r_P)}{r} \right)^2(3c-2)^2 \left(Mr-Q^2\right)^2
\eeaa
At the extremal case the above becomes
\beaa
&=& - \frac{8 M^2}{r^{14}} \left(\frac{2(c-1)}{c+1} \right)^2(3c-2)^2 \left(Mr-Q^2\right)^2\\
&=& - \frac{32 M^2}{(c+1)^2r^{14}}(c-1)^2(3c-2)^2 \left(Mr-Q^2\right)^2= - \frac{ M^4}{(c+1)^2r^{16}} 32(c+1)^2(c-1)^2(3c-2)^2 \left(Mr-Q^2\right)^2
\eeaa
We therefore obtain the polynomial 
\beaa
&&\left(11 c^4-c^3+7c^2-33c+20   \right) \left(  11 c^4-19c^3+19c^2-15c+8\right)-32(c+1)^2(c-1)^2(3c-2)^2\\
&=&121c^8-220c^7+17c^6-296c^5+1531c^4-1888c^3+899c^2-180c+32
\eeaa
which is positive for all $c$, and in particular for $\frac 1 2 \leq c \leq 2$. This concludes the proof of the lemma.

\end{proof}

\section{$r^p$-hierarchy estimates}\label{rp-sec}

In this section, we conclude the proof of Theorem \ref{main-theorem} by obtaining the $r^p$-hierarchy estimates for the mixed spin $\pm1$ and spin $\pm2$ system of equations.  The equations in the system have right hand sides which present good decay in $r$, and therefore the $r^p$-estimates for the system do not differ from the standard derivation for a single wave equation as in \cite{rp}.

  To derive the $r^p$-estimates, we apply Lemma \ref{divergence-combined-current}  to the vector field $Z=l(r) e_4$, for the function $l(r)=r^p$.

 \begin{proposition}\label{main-identity-r^pvf}  Consider a vectorfield $Z=l(r) e_4$, a pair of scalar functions $w=(w_1, w_2)$, with $w_1=w_2=\frac{2l(r)}{r}$, and a pair of one forms $M=(M_1, M_2)$ with $M_1=M_2=\frac{ 2l'}{r}  e_4=\frac{2l'}{rl} Z$. Let $\Phi_1$ and $\Phi_2$ be a 1-tensor and a symmetric traceless 2-tensor respectively, satisfying the system of coupled wave equations \eqref{box-phi-1} and \eqref{box-phi-2}. Then we have 
\beaa
\EE^{(Z, w, M)}[\Phi_1, \Phi_2] &=_s&\frac 1 2 l' |\check{\nabb}_4 \Phi_1|^2+\frac 1 2 \left(- l' +\frac 2 r l\right)(|\nabb \Phi_1|^2+V_1 |\Phi_1|^2)\\
&&+4 l' |\check{\nabb}_4 \Phi_2|^2+4 \left(- l' +\frac 2 r l\right)(|\nabb \Phi_2|^2+ V_2 |\Phi_2|^2)+\mbox{err}[\Phi_1, \Phi_2]
\eeaa
where
 \beaa
 \mbox{err}[\Phi_1, \Phi_2]&=&O\left(\frac{M+Q}{r^2}, \frac{Q^2}{r^3}\right) \left( |l|+r|l'| +r^2|l''| \right)\Big(|\nabb_4 \Phi_1|^2+|\nabb_4\Phi_2|^2 +r^{-2}(|\Phi_1|^2+|\Phi_2|^2)\Big)
 \eeaa
 \end{proposition}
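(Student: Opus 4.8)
The plan is to specialize the general divergence identity \eqref{definition-EE} to the multiplier $X = Z = l(r)e_4$ (that is, $a=0$, $b=l$ in the notation $X = a(r)e_3 + b(r)e_4$), together with the prescribed $w_1=w_2=2l/r$ and $M_1=M_2=(2l'/r)e_4$, and then to sort the resulting terms into three groups: the principal null/angular terms, the zeroth order terms, and the mixed terms. Throughout I will use $e_4(r)=1$ (consistent with $R(r)=\Up$), the explicit $\Up = 1-\frac{2M}{r}+\frac{Q^2}{r^2}$, and the potentials $V_1,V_2$ with their radial derivatives recorded in Section~\ref{sec-mo}.

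First I would insert the expression for $\TT[\Phi_i]\c\piZ$ from Lemma 9.3.1 and Corollary 9.2 of \cite{Giorgi4} (recalled in Section~\ref{sec-en}) with $a=0$, $b=l$, obtaining $-l'|\nabb\Phi_i|^2 + \big(\Up l' - (\tfrac{2M}{r^2}-\tfrac{2Q^2}{r^3})l\big)|\nabb_4\Phi_i|^2 + \tfrac{2l}{r}\nabb_3\Phi_i\c\nabb_4\Phi_i + (-l'-\tfrac{2l}{r})V_i|\Phi_i|^2$. The choice $w=2l/r$ is precisely the one for which $\frac12 w\LL_i[\Phi_i] = \tfrac{l}{r}\big(-\nabb_3\Phi_i\c\nabb_4\Phi_i + |\nabb\Phi_i|^2 + V_i|\Phi_i|^2\big)$ cancels the indefinite cross term $\tfrac{l}{r}\nabb_3\Phi_i\c\nabb_4\Phi_i$ from $\frac12\TT\c\piZ$ and upgrades the $|\nabb\Phi_i|^2$ coefficient to $\frac12(-l'+\tfrac2r l)$, matching the principal term. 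Next I would complete the square in the outgoing direction: the current $M=(2l'/r)e_4$ contributes $\frac14\D^\mu(|\Phi_i|^2 M_\mu)$, whose first order piece $M(|\Phi_i|^2)=\tfrac{4l'}{r}\Phi_i\c\nabb_4\Phi_i$ supplies, after the factor $\frac14$, exactly the cross term $\tfrac{l'}{r}\Phi_i\c\nabb_4\Phi_i$ needed to rewrite $\frac12 l'|\nabb_4\Phi_i|^2$ as $\frac12 l'|\check{\nabb}_4\Phi_i|^2$ with $\check{\nabb}_4=\nabb_4+\tfrac1r$. The discrepancy between the true $|\nabb_4\Phi_i|^2$ coefficient and $\frac12 l'$, namely $\frac12(\Up-1)l' - \frac12(\tfrac{2M}{r^2}-\tfrac{2Q^2}{r^3})l$, is of order $O(\tfrac{M}{r^2},\tfrac{Q^2}{r^3})[|l|+r|l'|]$ and is recorded as the first error term.

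The zeroth order $|\Phi_i|^2$ coefficient is where the bookkeeping is most delicate, and this is the step I expect to be the main obstacle. Its leading part is $-\frac12 Z(V_i) = -\frac12 l V_i'$; since $V_i = c_i/r^2 + O(\tfrac{M}{r^3},\tfrac{Q^2}{r^4})$ one has $V_i' = -\tfrac2r V_i + \text{errors}$, so $-\frac12 Z(V_i) = \tfrac{l}{r}V_i + \text{errors}$, which combines with the $-\frac{l'}{2}V_i$ surviving from the previous step to give the principal $\frac12(-l'+\tfrac2r l)V_i|\Phi_i|^2$. The remaining contributions — the completed-square remainder $\frac12 l' r^{-2}$, the divergence $\frac14\D^\mu M_\mu$, and $-\frac14\Box_\g w$ computed via $\Box_\g w = r^{-2}\partial_r(r^2\Up\partial_r w)$ (Lemma 9.1.4 of \cite{Giorgi4}) — must be shown to organize into $O(\tfrac{M+Q}{r^4},\tfrac{Q^2}{r^5})[|l|+r|l'|+r^2|l''|]$. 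The crucial cancellation is that the most dangerous $O(1/r)$ pieces carrying $l''$ cancel against each other: the $l''$-term of $-\frac14\Box_\g w$ is $-\tfrac{\Up l''}{2r}$, while that of $\frac14\D^\mu M_\mu$ is $+\tfrac{l''}{2r}$, so together they give $\tfrac{(1-\Up)l''}{2r} = (\tfrac{M}{r^2}-\tfrac{Q^2}{2r^3})l''$, exactly of the claimed order once the weight $r^2|l''|$ is factored out; the same mechanism reduces the $l'$ and $l$ contributions.

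Finally I would treat the mixed terms $\frac{4Q}{r^2}(w_1+w_2+\tfrac4r Z(r)-\tr\piZ)\DDs_2\Phi_1\c\Phi_2 - \frac{8Q}{r^2}([Z,\DDs_2]\Phi_1)\c\Phi_2$. Using $\tr\piZ = 2\D^\mu Z_\mu = 2\big(l' + l\,\D^\mu(e_4)_\mu\big)$ and the commutator $[Z,\DDs_2] = l[e_4,\DDs_2] = -\tfrac{l}{r}\DDs_2$ (the analogue of $[R,\DDs_2]=-\tfrac{\Up}{r}\DDs_2$ used in Section~\ref{sec-mo}, with $e_4(r)=1$), every surviving mixed contribution carries the prefactor $\tfrac{Q}{r^2}$ together with a factor scaling like $\tfrac1r(|l|+r|l'|)$. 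Upon the integration over the sphere implicit in $=_s$, the estimate \eqref{dot-product-estimate-l} bounds $\int_S \DDs_2\Phi_1\c\Phi_2$ by $\tfrac1r|\Phi_1||\Phi_2|$, converting each mixed term into a genuine $|\Phi|^2$ contribution of order $O(\tfrac{Q}{r^4})[|l|+r|l'|](|\Phi_1|^2+|\Phi_2|^2)$, absorbed into the stated $O(\tfrac{M+Q}{r^4},\tfrac{Q^2}{r^5})$ error. This realizes the assertion opening Section~\ref{rp-sec} that the cross coupling carries extra $r$-decay and hence does not disturb the standard $r^p$ identity, completing the proof.
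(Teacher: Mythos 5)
Your proposal is correct and follows essentially the same route as the paper's proof: the same specialization of the divergence identity \eqref{definition-EE} with the deformation-tensor formula from Lemma 9.3.1/Corollary 9.2 of \cite{Giorgi4}, the same cancellation of the $\LL_i$ (equivalently, the $\nabb_3\Phi_i\c\nabb_4\Phi_i$) terms by $w=2l/r$, the same completion of the square via the $M$-current with the $l''/r$ cancellation between $\frac14\div M$ and $-\frac14\Box_\g w$, the same use of $V_i'+\frac2r V_i=O(M/r^4,Q^2/r^5)$, and the same treatment of the mixed terms via $[Z,\DDs_2]=-\frac{l}{r}\DDs_2$, $\tr\piZ=2l'+\frac4r l$, and \eqref{dot-product-estimate-l}. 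The only differences are presentational (you expand $\LL_i$ and keep $\Up$ explicit where the paper absorbs $\Up-1$ into the error earlier), so nothing further is needed.
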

 \begin{proof} We start by computing $\EE^{(Z, w, 0)}[\Phi_1, \Phi_2]$. 
 Using \eqref{definition-EE} we have 
 \beaa
 \EE^{(Z, w, 0)}[\Phi_1, \Phi_2]   &=&\frac 1 2 \TT[\Phi_1]  \c\piZ+\left( - \frac 1 2 Z( V_1 ) -\frac 1 4   \Box_g  w_1 \right)|\Phi_1|^2+\frac 12  w_1 \LL_1[\Phi_1]     \\
 &&+4 \TT[\Phi_2]  \c\piZ+\left( - 4 Z( V_2 ) -2   \Box_g  w_2 \right)|\Phi_2|^2+4  w_2 \LL_2[\Phi_2]      \\
  &&+\frac{ 4Q}{r^2}\left(w_1+w_2 +\frac{4}{r}  Z(r) - \tr \piZ\right)\DDs_2\Phi_1 \c  \Phi_2  -\frac{8Q}{r^2}([Z, \DDs_2] \Phi_1) \c \Phi_2  
 \eeaa
We have for $Z=l(r) e_4$ (see Corollary 4 in \cite{Giorgi4})
  \beaa
 \TT[\Phi_1] \c \piZ&=& \left(- l' +\frac 2 r l\right)|\nabb \Phi_1|^2+ \left( \Up l'+\left(-\frac{2M}{r^2}+\frac{2Q^2}{r^3} \right)l \right)|\nabb_4 \Phi_1|^2-\frac{2}{r}l \LL_1[\Phi_1]- l' V_1 |\Phi_1|^2
 \eeaa
and similarly for $\Phi_2$.
We therefore obtain
  \beaa
 \EE^{(Z, w, 0)}[\Phi_1, \Phi_2]   &=&\frac 1 2 \left(- l' +\frac 2 r l\right)(|\nabb \Phi_1|^2+V_1 |\Phi_1|^2)+ \frac 1 2 \left( \Up l'+\left(-\frac{2M}{r^2}+\frac{2Q^2}{r^3} \right)l \right)|\nabb_4 \Phi_1|^2\\
 &&+\left( - \frac 1 2 Z( V_1 ) -\frac 1 4   \Box_g  w_1 - \frac l r  V_1\right)|\Phi_1|^2+\left(\frac 12  w_1   -\frac{l}{r}\right) \LL_1[\Phi_1]  \\
 &&+4 \left(- l' +\frac 2 r l\right)(|\nabb \Phi_2|^2+ V_2 |\Phi_2|^2)+ 4\left( \Up l'+\left(-\frac{2M}{r^2}+\frac{2Q^2}{r^3} \right)l \right)|\nabb_4 \Phi_2|^2\\
 &&+\left( - 4 Z( V_2 ) -2   \Box_g  w_2 - \frac 8 r l  V_2\right)|\Phi_2|^2+\left(4  w_2   -\frac{8}{r}l \right)\LL_2[\Phi_2]    \\
  &&+\frac{ 4Q}{r^2}\left(w_1+w_2 +\frac{4}{r}  Z(r) - \tr \piZ\right)\DDs_2\Phi_1 \c  \Phi_2  -\frac{8Q}{r^2}([Z, \DDs_2] \Phi_1) \c \Phi_2  
 \eeaa
With the choice $w=w_1=w_2=\frac{2l}{r}$, 
 the terms involving $\LL_1[\Phi_1]$ and $\LL_2[\Phi_2]$, cancel out. Using again the formula $\square_g w=r^{-2} \pr_r(r^2\Up \pr_r w)$ (see Lemma 6 in \cite{Giorgi4}), we obtain 
 \beaa
 \Box_g w &=& \frac{2l''}{r}+O\left(\frac{M}{r^4}, \frac{Q^2}{r^5}\right) \left[ |l|+r|l'|+r^2|l''|  \right]
 \eeaa 
Therefore
\beaa
- \frac 1 2 Z( V_1 ) -\frac 1 4   \Box_g  w_1 - \frac l r  V_1&=&-  \frac{l''}{2r}- \frac l 2 \left( V_1' + \frac 2 r  V_1\right) +O\left(\frac{M}{r^4}, \frac{Q^2}{r^5}\right) \left[ |l|+r|l'|+r^2|l''|  \right]  \\
 - 4 Z( V_2 ) -2   \Box_g  w_2 - \frac 8 r l  V_2&=& -  8\frac{l''}{2r}- 4 l  \left( V_2' + \frac 2 r  V_2\right) +O\left(\frac{M}{r^4}, \frac{Q^2}{r^5}\right) \left[ |l|+r|l'|+r^2|l''|  \right]
\eeaa
Using \eqref{potentials-derivatives}, we have in both cases
\beaa
V_1' + \frac 2 r  V_1&=& O\left(\frac{M}{r^4}, \frac{Q^2}{r^5}\right) \qquad V_2' + \frac 2 r  V_2= O\left(\frac{M}{r^4}, \frac{Q^2}{r^5}\right)
\eeaa
Recall that  $\tr\pi^{(4)}= \frac 4 r$ and  $\tr\piZ= 2l'+\frac{4}{r}l $ (see Corollary 3 in \cite{Giorgi4}), therefore
\beaa
[Z, \DDs_2] \Phi_1&=& l[ e_4, \DDs_2] \Phi_1=-  \frac l r  \DDs_2 \Phi_1\\
w_1+w_2 +\frac{4}{r}  Z(r) - \tr \piZ&=& \frac{4l}{r} +\frac{4}{r}  l - (2l'+\frac{4}{r}l)=- 2l'+\frac{4l}{r} 
\eeaa
Using \eqref{dot-product-estimate-l} to write
 \beaa
\frac{ 4Q}{r^2}\left(w_1+w_2 +\frac{4}{r}  Z(r) - \tr \piZ\right)\DDs_2\Phi_1 \c  \Phi_2 &=& \frac{ 4Q}{r^2}\left(- 2l'+\frac{6l}{r}\right)\DDs_2\Phi_1 \c  \Phi_2 \\
&=_s& O\left(\frac{Q}{r^4}\right) \left[ |l|+r|l'|\right](|\Phi_1|^2+|\Phi_2|^2)
 \eeaa
we conclude
  \beaa
 \EE^{(Z, w, 0)}[\Phi_1, \Phi_2]   &=&\frac 1 2 \left(- l' +\frac 2 r l\right)(|\nabb \Phi_1|^2+V_1 |\Phi_1|^2)+ \frac 1 2  l' |\nabb_4 \Phi_1|^2-  \frac{l''}{2r} |\Phi_1|^2  \\
 &&+4 \left(- l' +\frac 2 r l\right)(|\nabb \Phi_2|^2+ V_2 |\Phi_2|^2)+ 4  l'|\nabb_4 \Phi_2|^2 -  4\frac{l''}{r}|\Phi_2|^2 +\mbox{err}[\Phi_1, \Phi_2]
 \eeaa
 where
 \beaa
 \mbox{err}[\Phi_1, \Phi_2]&=&O\left(\frac{M+Q}{r^2}, \frac{Q^2}{r^3}\right) \left( |l|+r|l'| +r^2|l''| \right)\Big(|\nabb_4 \Phi_1|^2+|\nabb_4\Phi_2|^2 +r^{-2}(|\Phi_1|^2+|\Phi_2|^2)\Big)
 \eeaa
 Using \eqref{definition-EE} we have 
\beaa
\EE^{(Z, w, M)}[\Phi_1, \Phi_2] &=&\EE^{(Z, w, 0)}[\Phi_1, \Phi_2] \\
&&+\frac 1 4( \div M_1)|\Phi_1|^2+\frac 12 \Phi_1 \c M_1(\Phi_1)+2( \div M_2)|\Phi_2|^2+4 \Phi_2 \c M_2(\Phi_2)
\eeaa
Let  $M:=M_1=M_2=\frac{ 2l'}{r}  e_4=\frac{2l'}{rl} Z$, we compute
\beaa
\div M&=&g^{\mu\nu}D_\nu\left(\frac{2l'}{rl} Z_\mu\right)=  \frac{l'}{r l}  \tr\piZ+Z\left(  \frac{2l'}{r l}\right)=  \frac{l'}{r l}  (2l'+\frac{4l}{r})+2l e_4\left(  \frac{l'}{r l}\right)=\frac{2 l'}{r^2} +\frac{2l''}{r}
\eeaa
We deduce
\beaa
\EE^{(Z, w, M)}[\Phi_1, \Phi_2] &=_s&\frac 1 2 \left(- l' +\frac 2 r l\right)(|\nabb \Phi_1|^2+V_1 |\Phi_1|^2)+ \frac 1 2  l'|\nabb_4 \Phi_1|^2   +\frac 1 2\frac{ l'}{r^2} |\Phi_1|^2+ \Phi_1 \c \frac{ l'}{r}  \nabb_4(\Phi_1)\\
 &&+4 \left(- l' +\frac 2 r l\right)(|\nabb \Phi_2|^2+ V_2 |\Phi_2|^2)+ 4  l'|\nabb_4 \Phi_2|^2  + \frac{4 l'}{r^2} |\Phi_2|^2  +8 \Phi_2 \c \frac{ l'}{r}  \nabb_4(\Phi_2)\\
  &&+\mbox{err}[\Phi_1, \Phi_2]
\eeaa
Writing 
\beaa
 \frac 1 2 l'|\nabb_4\Phi_1|^2+ \frac{ l'}{2r^2}|\Phi_1|^2+  r^{-1} l'     \Phi \c \nabb_4(\Phi_1)=\frac  1 2  l'( \nabb_4(\Phi_1)+ r^{-1} \Phi_1)^2=\frac 1 2 l' |\check{\nabb}_4 \Phi_1|^2
 \eeaa
where we recall that $\check{\nabb}_4\Phi_1:=\nabb_4\Phi_1+\frac 1 r \Phi_1$,  and similarly for $\Phi_2$. This concludes the proof of the proposition.

\end{proof}

We finally relate the bulk $\EE^{(Z, w, M)}[\Phi_1, \Phi_2]$  with the weighted bulk norm $\MM_{p\,; \,R}[\Phi_1, \Phi_2]$ in the far away region. 
 For $l(r)=r^p$, the above Proposition implies
  \beaa
\EE^{(Z, w, M)}[\Phi_1, \Phi_2] &=_s&\frac p 2 r^{p-1} |\check{\nabb}_4 \Phi_1|^2+\frac 1 2 (2-p) r^{p-1}(|\nabb \Phi_1|^2+V_1 |\Phi_1|^2)\\
 &&+4 p r^{p-1} |\check{\nabb}_4 \Phi_2|^2+4 (2-p)r^{p-1}(|\nabb \Phi_2|^2+ V_2 |\Phi_2|^2)+\mbox{err}[\Phi_1, \Phi_2]
\eeaa

Given a fixed $\de>0$, for all $\de \le p\le 2-\de$  and $ R\gg \max(\frac{M+Q}{\de}, \frac{Q^2}{\de^2}) $, while integrating in $r \geq R$, the term $\mbox{err}[\Phi_1, \Phi_2]$ can be absorbed by the first two lines above. Thus, we obtain
  \beaa
 \int_{\MM_{\ge R}(0,\tau)  }\EE^{(Z, w, M)}[\Phi_1, \Phi_2] &\ge &\frac 1  4  \int_{\MM_{\ge R}(\tau_1,\tau_2)  } r^{p-1}(p |\check{\nabb}_4 \Phi_1|^2+ (2-p) (|\nabb \Phi_1|^2+ r^{-2} |\Phi_1|^2))\\
 &&+2  \int_{\MM_{\ge R}(\tau_1,\tau_2)  }  r^{p-1}(p |\check{\nabb}_4 \Phi_2|^2+ (2-p)(|\nabb \Phi_2|^2+ r^{-2} |\Phi_2|^2))
\eeaa
Recalling the definition of the spacetime energy \eqref{Morawetz-far-away}, we obtain
 \bea\label{estimate-bulk-rp}
 \int_{\MM_{\ge R}(0,\tau)  }\EE^{(Z, w, M)}[\Phi_1, \Phi_2] &\ges &  \MM_{p\,; \,R}[\Phi_1, \Phi_2](0, \tau)
 \eea
 Consider now the current  $\PP^{(Z, w, M)}[\Phi_1, \Phi_2]$ associated to the vector field $Z$. It is given by
\beaa
\PP^{(Z, w, M)}_\mu[\Phi_1, \Phi_2]&=&  \TT_{\mu\nu}[\Phi_1]Z^\nu+8\TT_{\mu\nu}[\Phi_2]Z^\nu-\frac{8Q}{r^2} l \left(\DDs_2 \Phi_1 \c \Phi_2 \right)  g(e_4, e_\mu)\\
&&+\frac 1 2  w \Phi_1 D_\mu \Phi_1 -\frac 1 4   \pr_\mu w  |\Phi_1|^2+\frac 1 2  \frac{ l'}{r} g(e_4, e_\mu) |\Phi_1|^2\\
&&+4  w \Phi_2 D_\mu \Phi_2 -2   \pr_\mu w  |\Phi_2|^2+4  \frac{ l'}{r} g(e_4, e_\mu)  |\Phi_2|^2
\eeaa
For the boundary terms we compute
\beaa
\PP^{(Z, w, M)}[\Phi_1, \Phi_2]\c   e_4&=& l\TT[\Phi_1]_{44}+ \frac l r   \Phi_1 \c \nabb_4\Phi_1 - \frac 1 2 e_4(r^{-1} l)| \Phi_1|^2\\
&&+8l\TT[\Phi_2]_{44}+ 8\frac l r   \Phi_2 \c \nabb_4\Phi_2 - 4 e_4(r^{-1} l)| \Phi_2|^2\\
&=&l|\nabb_4\Phi_1+ \frac 1 r   \Phi_1|^2 - \frac 1 2 r^{-2}  e_4( r  l |\Phi_1|^2)+8l|\nabb_4\Phi_2+ \frac 1 r   \Phi_2|^2 - 4 r^{-2}  e_4( r  l |\Phi_2|^2)\\
&=&l|\check{\nabb}_4\Phi_1|^2 - \frac 1 2 r^{-2}  e_4( r  l |\Phi_1|^2)+8l|\check{\nabb}_4\Phi_2|^2 - 4 r^{-2}  e_4( r  l |\Phi_2|^2)
\eeaa
and 
\beaa
\PP^{(Z, w, M)}[\Phi_1, \Phi_2]\c   e_3&=&l\TT[\Phi_1]_{34} +  \frac 1 2 r^{-1}   l e_3(|\Phi_1|^2) -\frac 1  2  e_3  ( r^{-1}   l  ) |\Phi_1|^2- r^{-1} l' |\Phi_1|^2\\
&&+ 8l\TT[\Phi_2]_{34}+  4 r^{-1}   l e_3(|\Phi_2|^2) -4  e_3  ( r^{-1}   l  ) |\Phi_2|^2- 8r^{-1} l' |\Phi_2|^2\\
&&+\frac{16Q}{r^2} l \left(\DDs_2 \Phi_1 \c \Phi_2 \right)\\
&=&l(|\nabb\Phi_1|^2+V_1 |\Phi_1|^2)+ 8l(|\nabb\Phi_2|^2+V_2 |\Phi_2|^2)\\
&& +\frac  12  r^{-2}e_3\big ( r  l |\Phi_1|^2) +4  r^{-2}e_3\big ( r  l |\Phi_2|^2)+\mbox{err}[\Phi_1, \Phi_2]
\eeaa

We are now ready to derive the $r^p$-estimates for the mixed spin $\pm1$ and spin $\pm2$ system. 
\begin{proposition}
  \label{theorem:Daf-Rodn1} Let $\Phi_1$ and $\Phi_2$ be a one form and a symmetric traceless two tensor respectively, satisfying the system of coupled wave equations \eqref{box-phi-1} and \eqref{box-phi-2}.  Consider  a fixed $\de>0$ and let $R\gg \max(\frac{M+Q}{\de}, \frac{Q^2}{\de^2}) $. Then for all $\de\le p\le 2-\de $  the following $r^p$-estimates hold: 
       \bea\label{estimates=rp}
   E_{p, R}[\Phi_1, \Phi_2](\tau)   +\MM_{p\,; \,R}[\Phi_1, \Phi_2](0, \tau)&\les&E_{p}[\Phi_1, \Phi_2](0) 
     \eea
\end{proposition}
  \begin{proof}  Let   $\th=\th(r) $ supported    for   $r\ge R/2$   with     $\th=1$ for $r\ge R$  such that $l_p=\th(r) r^p$, $Z_p=l_p e_4$, $w_p=\frac{2l_p}{r}$, $M_p=\frac{ 2l_p'}{r}  e_4$. We apply the divergence theorem to $\PP_p:=\PP^{(Z_p, w_p, M_p)}[\Phi_1, \Phi_2]$ in the spacetime region bounded by $\Sigma_0$ and $\Sigma_\tau$. Using \eqref{divergence-P-EE}, by divergence theorem we have
     \beaa
   \int_{\Si_\tau}   \PP_p\c  N_\Si +\int_{\mathcal{I}^+(0, \tau)}   \PP_p\c e_3+\int_{\MM(0, \tau)}\EE^{(Z_p, w_p, M_p)}[\Phi_1, \Phi_2]=\int_{\Si_0}   \PP_p\c   N_\Si 
     \eeaa
 Recall that $l_p$ vanishes for $r \le R/2$.    We can estimate some of the terms as follows:
     \beaa
 &&    \bigg|\int_{\Si_{R/2 \leq r \leq R}(\tau)}   \PP_p\c   N_\Si\bigg|\les R^p E[\Phi_1, \Phi_2](\tau), \qquad \bigg|\int_{\Si_{R/2 \leq r \leq R}(0)}   \PP_p\c   N_\Si\bigg|\les R^p E[\Phi_1, \Phi_2](0), \\
 &&   \bigg|  \int_{\MM_{R/2 \leq r \leq R}(0, \tau)}\EE^{(Z_p, w_p, M_p)}[\Phi_1, \Phi_2]\bigg|\les  R^{p-1} \Mor[\Phi_1, \Phi_2](0,\tau),
     \eeaa
          where recall that $\Mor[\Phi_1, \Phi_2](0,\tau)$ is defined by \eqref{def:Mor-bulk}.
     Hence,
      \beaa
    &&\int_{\Si_{r \geq R}(\tau)}   \PP_p\c  N_\Si  +\int_{\mathcal{I}^+(0, \tau)}   \PP_p\c e_3+\int_{\MM_{r \geq R}(0, \tau)}\EE^{(Z_p, w_p, M_p)}[\Phi_1, \Phi_2]\\
    &\les&\int_{\Si_{r \geq R}(0)}   \PP_p\c   N_\Si +R^p\left(E[\Phi_1, \Phi_2](0)+E[\Phi_1, \Phi_2](\tau)+R^{-1} \Mor[\Phi_1, \Phi_2](0,\tau)\right) 
     \eeaa
  Using the expressions for $\PP \c e_4$ and $\PP \c e_3$, we bound
 \beaa
 \int_{\Si_{r \geq R}(\tau)}   \PP\c  N_\Si&=& \int_{\Si_{r \geq R}(\tau)}   \PP\c e_4\\
 &=&\int_{\Si_{r \geq R}(\tau)} r^p|\check{\nabb}_4\Phi_1|^2 - \frac 1 2 r^{-2}  e_4(   r^{p+1} |\Phi_1|^2)+8r^p|\check{\nabb}_4\Phi_2|^2 - 4 r^{-2}  e_4(  r^{p+1} |\Phi_2|^2)\\
 &\ges&   E_{p, R}[\Phi_1, \Phi_2](\tau) 
 \eeaa
  by performing the integration by parts   for the second terms in the integrals, and absorbing the boundary term. Also,
 \beaa
  \int_{\mathcal{I}^+(0, \tau)}   \PP\c e_3&=&  \int_{\mathcal{I}^+(0, \tau)} l(|\nabb\Phi_1|^2+V_1 |\Phi_1|^2) + 8l(|\nabb\Phi_2|^2+V_2 |\Phi_2|^2) \\
  &&+ \int_{\mathcal{I}^+(0, \tau)}\frac  12  r^{-2}e_3\big ( r  l |\Phi_1|^2) +4  r^{-2}e_3\big ( r  l |\Phi_2|^2)+\mbox{err}[\Phi_1, \Phi_2]\\
  &\ges&\int_{\mathcal{I}^+(0, \tau)} r^p|\nabb \Phi_1|^2+r^{p-2}|\Phi_1|^2+r^p|\nabb \Phi_2|^2+r^{p-2}|\Phi_2|^2
 \eeaa
     Using \eqref{estimate-bulk-rp}, we obtain 
       \beaa
   E_{p, R}[\Phi_1, \Phi_2](\tau)   +\MM_{p\,; \,R}[\Phi_1, \Phi_2](0, \tau)&\les&E_{p}[\Phi_1, \Phi_2](0) \\
   && +R^p\left(E[\Phi_1, \Phi_2](\tau)+R^{-1} \Mor[\Phi_1, \Phi_2](0,\tau)\right) 
     \eeaa 
By combining the above with the boundedness of the energy in Proposition \ref{prop-energy-deg} and with the Morawetz estimates in \eqref{final-morawets}, we prove the proposition.  
         \end{proof}     
         
         By summing \eqref{estimates=rp} and \eqref{final-morawets} we obtain the boundedness of the weighted energy \eqref{main-1} and the integrated weighted estimates \eqref{main-2}, therefore proving Theorem \ref{main-theorem}.

\begin{acknowledgements}
 The author is grateful to Pei-Ken Hung and Yakov Shlapentokh-Rothman for helpful discussions.  
\end{acknowledgements}



\end{document}